\newtheorem{theorem}{Theorem}
\newenvironment{proof}[1][Proof]{\textbf{#1.} }{\ \rule{0.5em}{0.5em}}
\newcommand{\bi}[1]{\mbox{\boldmath{$ #1 $}}}
\begin{document}
\title{Single step and multiple step forecasting in one dimensional single chirp signal using MCMC based Bayesian analysis}
 \author{ Satyaki Mazumder
 \footnote{e-mail: satyaki@iiserkol.ac.in.} 
\\Department of Mathematics and Statistics
  \\IISER Kolkata, Mohanpur Campus
 \\Mohanpur 741252, India
}
\maketitle

\begin{abstract}
Chirp signals are frequently used in different areas of science and engineering. MCMC based Bayesian inference is done here for purpose of one step and multiple step prediction in case of one dimensional single chirp signal with i.\ i.\ d.\ error structure as well as dependent error structure with exponentially decaying covariances. We use Gibbs sampling technique and random walk MCMC to update the parameters. We perform total five simulation studies for illustration purpose. We also do some real data analysis to show how the method is working in practice.
\end{abstract}
\textbf{Key Words and Phrases}: Bayesian inference, Chirp signal, Gibbs Sampling, Posterior predictive density, Random walk MCMC. 
\section{\small Introduction}
\label{sec1: INT}
One dimensional single chirp signal, defined as
\begin{align}
\label{eq1:model 1}
y_t = A \cos (\alpha t + \beta t^2) + B \sin (\alpha t + \beta t^2) + \epsilon_t, 
\end{align}
for $t=1,\ldots, T$, is frequently used in different field of sciences, for example, sonar, radar, communications systems, as well as in oceanography and geology. One may see Abatzoglou (1986), Kumaresan and Verma (1987), Djuric and Kay(1990), Gini et al.\ (2000), Lin and Djuric (2000), Lahiri et al. (2012, 2014) and the references cited therein for details. Recently various types of parameter estimation techniques and their various properties have been studied for the signal (\ref{eq1:model 1}), for example see Kumaresan and Verma (1987), Djuric and Kay (1990), Gini et al.\ (2000), Nandi and Kundu (2004), Kundu and Nandi(2008), Lahiri et al. (2014), Saha and Kay (2002) and references cited therein. Kumerasan and Verma (1987) used rank reduction technique for estimating parameters of the model. Djuric and Kay (1990) proposed a linear regression technique after phase unwrapping. Gini et al.\ (2000) used maximum likelihood (ML) technique as one of their estimation technique. Saha and Kay (2002) used ML technique on superimposed chirp signals. They have used MCMC importance sampling for find maximum likelihood estimates. Lin et al.\ (2004) has found the maximum likelihood estimates of the parameters of chirp signal using simulated annealing technique. It is seen that most of the methods concentrated on ML technique in recent past. Recently some other techniques have drawn attention to the statistics community. For example, Nandi and Kundu (2004) first provided the asymptotic properties of least square estimates (LSE) of the parameters involved in one dimensional chirp signal with i.\ i.\ d.\ error structure. Kundu and Nandi (2008) extended those result in case of linear stationary errors with known auto covariance function. Lahiri et al.\ (2014) has used the least absolute deviation (LAD) technique to find the estimates of the parameters involved in the model. They also gave the asymptotic properties of LAD estimates under i.\ i.\ d.\ error structure. Although, similar to Kundu and Nandi (2008), Lahiri et al.\ (2014) assumed that the error variance is known. Therefore, it is seen that considerable amount of classical estimation techniques have been used for estimating the parameters of the chirp signal and their theoretical properties have studied in different circumstances for a while. Some Bayesian analysis of the chirp signal are found in the literature also. Lin and Djuric (2000) has done estimation of parameters of multiple of chirp signal using MCMC technique. However, they have only taken i.\ i.\ d.\ error structure into account. Moreover, it is important to mention that none of the methods, proposed so far, has taken the prediction issue into consideration. 
\par
Here we have analysed the one dimensional single chirp signal for forecasting in Bayesian paradigm. To be precise, our main aim, in this paper, is to predict a future observation through the Bayesian analysis of one dimensional single chirp signal. The advantage of using the Bayesian analysis for purpose of prediction is that it gives not only a single value or an interval, but also a complete density, which is known as posterior 
predictive density. It is also well known that the posterior predictive density is used for checking whether the model and the prior give a reasonable clarification of the uncertainty in a study (see Box E. P. George and Tiao C. George (1973) and Bickel J. Peter and Doksum A. Kjella (2007)). To achieve posterior predictive density we have used MCMC technique suitably and in the path of getting posterior predictive density, posterior densities of the parameters involved in the model have been found as by product.  Using these posterior densities one may perform the Bayesian inference of the parameters involved in the model, when required. The first part of the work mainly focuses on the i.\ i.\ d.\ error structure where we have simulated four different samples from the model (\ref{eq1:model 1}) and have illustrated the MCMC based Bayesian analysis of the model. Moreover, this MCMC based Bayesian method is applied on three different real data sets, obtained from http://archive.ics.uci.edu/ml, to see how our method is performing in practice, and in particular one of these three data set is used for multiple step forecasting. In second part, we deal with the dependent error structure though MCMC based methodologies with the same goal of forecasting. Kundu and Nandi (2008) has dealt with the model (1) assuming stationary error structure in great detail using classical inference, focusing estimation of the model parameters. However, in their numerical studies they have assumed that the auto covariance function (acf) is completely known. In our discussion it is assumed that the covariance structure of the error is exponentially decaying but unknown. In discrete time, it is known that exponentially decaying acf corresponds to stationary auto regressive process of order one (AR(1)) and Kundu and Nandi (2008) has presented the AR(1) example in their paper in numerical studies. With the same choice of the parameter values, as done in Kundu and Nandi (2008), a simulation study is done in our paper for purpose of 
illustration.
\par 
The remaining part of the paper is designed as follows. In Section 2 we describe the parameter spaces and give a overview of MCMC based Bayesian methodology. In Subsection (2.1.1) we provide the required details for Gibbs sampling, used in getting sample from joint posterior of the parameters. In the next Subsection (2.1.2) prior specifications for the parameters are made and the full conditional density functions, which are required for Gibbs sampling, are evaluated for the cases where the closed form of the conditional densities are available. In all other cases random walk MCMC is proposed (see Gamerman and Lopes (2006) and Liu, S. Jun (2008)) to update the parameters. 
In Section 3 we give the results of simulation studies based on our method, and in Section 4 we show the performance of our method when applied to real data. Section 5 deals with the dependent error structure where we assume an exponentially decaying covariance function with respect to time. Finally we give conclusion and future work in Section 6.
\section{\small Description of parameter spaces and methodology}
\label{sec2:DCP and MTH}
One dimensional single chirp signal (defined as equation (\ref{eq1:model 1})), assuming $\epsilon_t$ be random with $E(\epsilon_t)$ = 0, and var($\epsilon_t$) = $\sigma^2_{\mbox{\scriptsize $\epsilon$}}$ for all $t=1,\ldots,T$, has 5 parameters, namely, $A$, $B$, $\alpha$, $\beta$ and $\sigma_{\mbox{\scriptsize $\epsilon$}}$. Following Lahiri et. al (2012) we assume the following conditions on the parameters $A$, $B$, $\alpha$ and $\beta$:
\begin{enumerate}
\item $A^2$ + $B^2$ $< \,M^2$, for some known real number $M$.
\item $\alpha$, $\beta$ $\in \,(0,\pi)$.
\end{enumerate}
For purpose of ease in computation, we further reparametrize the above structure as follows. We take $A$ = $r\cos \theta$ and $B$ = $r\sin \theta$, with $0< r < M$ and $\theta$ $\in$ $[0,2\pi]$. Then obviously, $A^2+B^2$ = $r^2$ $<$ $M^2$. Therefore, we rewrite the model (\ref{eq1:model 1}) as, for $t$ = $1,\ldots, T$,
\begin{align}
\label{eq2:model2}
y_t = r\cos \theta \cos (\alpha t + \beta t^2) + r\sin \theta \sin (\alpha t + \beta t^2) + \epsilon_t,
\end{align}
with parameters $r$, $\theta$, $\alpha$, $\beta$ and $\sigma_{\mbox{\scriptsize $\epsilon$}}$ along with their parameter spaces\/:
\begin{itemize}
\item $r$ $\in$ $[0,M)$.
\item $\theta$ $\in$ $[0,2\pi]$.
\item $\alpha$, $\beta$ $\in \,(0,\pi)$.
\item $\sigma_{\mbox{\scriptsize $\epsilon$}}$ $\in$ $(0,\infty)$.
\end{itemize}
It needs to be noted that Lahiri et. al (2012, 2014) assumed $\sigma^2_{\mbox{\scriptsize $\epsilon$}}$ is known, unlike us. Although Nandi and Kundu (2004) provided an estimate of $\sigma^2_{\mbox{\scriptsize $\epsilon$}}$ in their theoretical study, but for numerical studies they took $\sigma^2_{\mbox{\scriptsize $\epsilon$}}$ to be known. 
For purpose of Bayesian analysis, we assume that the parameters are random and each having a prior distribution. Our main goal is to get the posterior predictive distribution of $y_{T+1}$ given the data $\bi{y}$ in Bayesian paradigm.
\subsection{\small MCMC based Bayesian methodology}
\label{2.1:MCMC MTH}
We assume that 
$[\epsilon_{t}|\sigma^2_{\mbox{\scriptsize $\epsilon$}}] \sim N(0,\sigma^2_{\mbox{\scriptsize $\epsilon$}})$, for all $t=1,\ldots,T$. 
Given $r$, $\theta$ (i.e., $A$ and $B$), $\alpha$, $\beta$ and $\sigma^2_{\mbox{\scriptsize $\epsilon$}}$, $\bi{y}$ = $(y_1,\ldots,y_T)'$, follows multivariate normal distribution with E($\bi{y})$ = $\bi{\mu}_{T}$ and cov($\bi{y})$ = $\sigma^2_{\mbox{\scriptsize $\epsilon$}}$ $I_{T\times T}$, where $\bi{\mu}_{T}$ = $(\mu_1,\ldots,\mu_T)'$, with $\mu_t$ = $r\cos \theta \cos (\alpha t + \beta t^2) + r\sin \theta \sin (\alpha t + \beta t^2)$, $t=1,\ldots, T$,  and $I_{T\times T}$ is an identity matrix of order $T\times T$. We want to find $[y_{T+1}|\bi{y}]$, the conditional distribution of $y_{T+1}$ given the data $\bi{y}$ = $(y_1,\ldots,y_T)'$. Using augmentation technique $[y_{T+1}|\bi{y}]$ can be written as 
\begin{align}
\label{eq3:posterior predictive}
[y_{T+1}|\bi{y}] = \int \int \int \int \int [y_{T+1}|\bi{y},r,\theta,\alpha,\beta,\sigma_{\mbox{\scriptsize $\epsilon$}}] [r,\theta,\alpha,\beta,\sigma_{\mbox{\scriptsize $\epsilon$}}|\bi{y}] dr\, d\theta\, d\alpha\, d\beta\, d\sigma_{\mbox{\scriptsize $\epsilon$}}
\end{align}
It is not possible to get an analytical form to the above integration. Therefore, simulation technique has to be implemented. Details follow. 
\par 
Given a sample from $[r,\theta,\alpha,\beta,\sigma_{\mbox{\scriptsize $\epsilon$}}|\bi{y}]$, $[y_{T+1}|\bi{y},r,\theta,\alpha,\beta,\sigma_{\mbox{\scriptsize $\epsilon$}}]$ will follow a normal distribution with mean $\mu_{T+1}$
and variance $\sigma^2_{\mbox{\scriptsize $\epsilon$}}$. Hence, once a sample is available from the posterior $[r,\theta,\alpha,\beta,\sigma_{\mbox{\scriptsize $\epsilon$}}|\bi{y}]$, the corresponding samples drawn from $[y_{T+1}|\bi{y},r,\theta,\alpha,\beta,\sigma_{\mbox{\scriptsize $\epsilon$}}]$ are from the posterior predictive (\ref{eq3:posterior predictive}), using which required posterior summaries can be obtained. To get the samples from $[r,\theta,\alpha,\beta,\sigma_{\mbox{\scriptsize $\epsilon$}}|\bi{y}]$, Gibbs sampler method is used. In the next subsection we give a brief description how we apply Gibbs sampler technique in the present situation.
\subsubsection{\small Required details for Gibbs sampling}
\label{2.1.1:Gibbs DTLS}
We denote the prior densities of $\alpha,\, \beta ,\, \theta$, $r$ and $\sigma^2_{\mbox{\scriptsize $\epsilon$}}$ as $[\alpha]$, $[\beta]$, $[\theta]$, $[r]$ and $[\sigma^2_{\mbox{\scriptsize $\epsilon$}}]$, respectively. Assuming the independence of prior distributions we get the joint posterior density of $r$, $\theta$, $\alpha$, $\beta$ and $\sigma_{\mbox{\scriptsize $\epsilon$}}^2$ given $\bi{y}$ as

\begin{align}
\label{eq6: jt post of all parameters}
[r, \theta, \alpha, \beta, \sigma_{\mbox{\scriptsize $\epsilon$}}^2|\bi{y}] \propto [r][\theta][\alpha][\beta][\sigma_{\mbox{\scriptsize $\epsilon$}}^2] [\bi{y}|r, \theta, \alpha, \beta, \sigma_{\mbox{\scriptsize $\epsilon$}}^2],
\end{align}
For Gibbs sampling the conditional distribution of each parameters given all others (commonly known as full conditional distribution), denoted by $[\cdot|\ldots]$, are needed and given by 
\begin{align}
\label{eq7:posterior of r}
[r|\ldots] \propto [r][\bi{y}|r, \theta, \alpha, \beta, \sigma_{\mbox{\scriptsize $\epsilon$}}^2],
\end{align}
\begin{align}
\label{eq8:posterior of theta}
[\theta|\ldots] \propto [\theta][\bi{y}|r, \theta, \alpha, \beta, \sigma_{\mbox{\scriptsize $\epsilon$}}^2],
\end{align}
\begin{align}
\label{eq9:posterior of alpha}
[\alpha|\ldots] \propto [\alpha][\bi{y}|r, \theta, \alpha, \beta, \sigma_{\mbox{\scriptsize $\epsilon$}}^2],
\end{align}
\begin{align}
\label{eq10:posterior of beta}
[\beta|\ldots] \propto [\beta][\bi{y}|r, \theta, \alpha, \beta, \sigma_{\mbox{\scriptsize $\epsilon$}}^2],
\end{align}
\begin{align}
\label{eq11:posterior of sigma}
[\sigma_{\mbox{\scriptsize $\epsilon$}}^2|\ldots] \propto [\sigma_{\mbox{\scriptsize $\epsilon$}}^2][\bi{y}|r, \theta, \alpha, \beta, \sigma_{\mbox{\scriptsize $\epsilon$}}^2].
\end{align}
\subsubsection{Prior specification and posterior densities of the parameters}
\label{2.1.2:PRIORS}
We assume the following prior distributions on the parameters
\begin{align}
\label{eq19:priors on params}
[r] &\sim \mbox{uniform}(0,M)\\[1ex]
[\theta] &\sim \mbox{uniform}(0,2\pi)\\[1ex]
[\alpha] &\sim \mbox{vonMises}(\alpha_0,\alpha_1)\\[1ex]
[\beta] &\sim \mbox{vonMises}(\beta_0,\beta_1)\\[1ex]
[\sigma^2_{\mbox{\scriptsize $\epsilon$}}] &\sim \mbox{inverse gamma}(\sigma_0,\sigma_1)
\end{align}
The closed form of the full conditional densities of $\theta$, $\alpha$, $\beta$ can not be obtained in closed form. So, we have used random walk MCMC to update these parameters. However, the conditional density of $r$ given all the others, i.e., $[r|\ldots]$ follows a truncated normal distribution with truncation between $(0,M)$ and with the mean parameter
\begin{equation}
\label{eq20:post mean of r}
E([r|\ldots]) = \frac{\sum_{t=1}^{T} y_t \left(\cos \theta \cos(\alpha t+\beta t^2) + \sin \theta \sin(\alpha t+\beta t^2)\right)}{\sum_{t=1}^{T} \left(\cos \theta \cos(\alpha t+\beta t^2) + \sin \theta \sin(\alpha t+\beta t^2)\right)^2}
\end{equation}
and variance parameter
\begin{equation}
\label{eq21: post var of r}
\mbox{var} ([r|\ldots])  = \frac{\sigma^2_{\mbox{\scriptsize $\epsilon$}}}{\sum_{t=1}^{T} \left(\cos \theta \cos(\alpha t+\beta t^2) + \sin \theta \sin(\alpha t+\beta t^2)\right)^2}.
\end{equation}
(The proof is given in the Appendix).
Moreover, it is straightforward to see that $\sigma^2_{\mbox{\scriptsize $\epsilon$}}$ given all the others, i.e., $[\sigma^2_{\mbox{\scriptsize $\epsilon$}}|\ldots]$ follows inverse gamma distribution with the parameters $\sigma_0+T/2$ and $\sigma_1+(\bi{y}-\bi{\mu}_{T})'(\bi{y}-\bi{\mu}_{T})/2$.
\section{\small Simulation Studies}
\label{3:SIMU STUDY}
In this section we have done four simulation studies to illustrate our method. We have given the true values of the parameters of simulated samples taken for our experiment in the Table \ref{table1:Simu details}. In each of four samples we keep the last observation for purpose of prediction. So, we have basically 100 observations for first three samples and 19 observation for last sample. We have applied the random walk MCMC algorithm for updating parameters $\theta$, $\alpha$, and $\beta$. For all practical purposes the true values of $M$ is not known so, we decide to take a sufficiently large value of $M$ to be in safe side. We choose $M$ to be equal to $100$. To run MCMC simulations it is needed to choose the prior parameters appropriately. For choosing mean directions in the prior distributions of $\alpha$, $\beta$ special technique is used. Loglikelihood function is maximized using simulated annealing technique with respect to the parameters $\alpha$ and $\beta$ (for details see Robert, P. and Casella, G. (2004) and Liu, S. (2008)), separately, and these values are used as initial values for MCMC iterations as well, for $\alpha$ and $\beta$, respectively. Below we discuss about the choice of prior parameters for $\sigma_{\mbox{
\scriptsize $\epsilon$}}$, $\alpha$ and $\beta$ in details.

\begin{table}[htp]
\centering
\begin{tabular}{c c c c c c c}
\hline
\hline 
Sample & No. of observations & $A$ & $B$ & $\alpha$ & $\beta$ & $\sigma_{\mbox{\scriptsize $\epsilon$}}$
\\[0.5ex]
\hline
1 & 101 & 2.0 & 1.25 & 1.75 & 1.05 & $\sqrt{0.5}$
\\
2 & 101 & 1.5 & 1.5 & 1.0 & 1.0 & $\sqrt{0.5}$
\\
3 & 101 & 2.0 & 2.0 & 1.75 & 1.75 & $\sqrt{2}$
\\
4 & 20 & 2.0 & 2.0 & 1.75 & 1.75 & $\sqrt{2}$
\\[1ex]
\hline
\end{tabular}
\caption{Description of parameters for simulated samples}
\label{table1:Simu details}
\end{table}

\par
First we decide about the choice of the prior parameters of  $\sigma_{\mbox{\scriptsize $\epsilon$}}$. The mean and variance of the inverse gamma distribution are 
$\sigma_1/(\sigma_0-1)$ = $a$, say, for $\sigma_0>1$, and $\frac{\sigma_1^2}{(\sigma_0-1)^2(\sigma_0-2)}$ = $a^2/(\sigma_0-2)$, for $\sigma_0>2$, respectively. $\sigma_0$ is taken to be $4$ for all simulated samples, so that the variance becomes $0.5a^2$. We choose $a$ to be $1$ for first and second simulated samples, and choose $2$ for third and fourth simulated samples, respectively.  Accordingly we got the values of and $\sigma_1$ for each simulated sample. The parameter values for inverse gamma distributions are listed in the Table \ref{table2:Hyperparm details}.
\par 
In case of vonMises distributions (priors of $\alpha$ and $\beta$) we choose the values of mean directions which maximises the loglikelihood function using simulated annealing, as mentioned earlier. We run $50$ iterations in each simulated annealing, separately for $\alpha$ and $\beta$, for each sample. For $\alpha$, the values are obtained as $1.81$, $0.85$, $1.72$ and $1.68$ four simulated samples, respectively. These values are chosen to be the mean directions in vonMises distributions as well as the initial values for $\alpha$ in MCMC iteration, for four samples. We get $1.00$, $1.18$, $1.78$ and $1.54$ as the values for $\beta$ in simulated annealing which maximize the loglikelihoods for four samples. $1.00$, $1.18$, $1.78$ and $1.54$ are used as the mean directions in vonMises distributions as well as the initial values for $\beta$ in MCMC iteration, for four samples, respectively. The scale parameters are chosen to be $3$ for vonMises in each of the samples for $\alpha$ and $\beta$. 
\par
The choices of the hyper parameters for $\alpha$, $\beta$ and $\sigma{\mbox{\scriptsize $\epsilon$}}$ for different simulated samples are summarized in the table \ref{table2:Hyperparm details}.
\begin{table}[htp]
\centering
\begin{tabular}{c c c c c c c c c}
\hline
\hline 
Sample & $\alpha_0$ & $\alpha_1$ & $\beta_0$ & $\beta_1$ & $\sigma_{0}$ & $\sigma_{1}$
\\[0.5ex]
\hline
1 & 1.81 & 3.0 & 1.0 & 3.0 & 4.0 & $3.0$
\\
2 & 0.85 & 3.0 & 1.18 & 3.0 & 4.0 & $3.0$
\\
3 & 1.72 & 3.0 & 1.78 & 3.0 & 4.0 & $6.0$
\\
4 & 1.68 & 3.0 & 1.54 & 3.0 & 4.0 & $6.0$
\\[1ex]
\hline
\end{tabular}
\caption{Description of choices of hyper parameters for $\alpha$, $\beta$ and $\sigma{\mbox{\scriptsize $\epsilon$}}$}
\label{table2:Hyperparm details}
\end{table}

\noindent
With the above choice of hyper parameters, $500000$ MCMC iteration have been done with burning period $50000$. We use the normal random walk proposal with variance $0.5$ to update $\theta$, $\alpha$, and $\beta$, for all these four simulated samples. The choice of this variance is set based on a pilot run of MCMC iteration. We mention here that once the sample observations are obtained from $r$ and $\theta$, we transform the sample values to that of $A$ = $r\cos(\theta)$ and $B$ = $r\sin(\theta)$. Details about the results of MCMC iteration for each of the sample are discussed here. Posterior densities along with the true values are provided in the Figures (\ref{Fig:Post of A,B,alpha,beta, sigma for sample1}), (\ref{Fig:Post of A,B,alpha,beta, sigma for sample2}), (\ref{Fig:Post of A,B,alpha,beta, for sample3}) and (\ref{Fig:Post of A,B,alpha,beta, for sample4}), for sample 1, 2, 3 and 4, respectively. Except for $\beta$ in the figure (\ref{Fig:Post of A,B,alpha,beta, for sample4}), all other true values are well within the high probability region. We have taken only 20 observations for sample 4. So, it is not unusual to notice such an incident, specially when the posteriors are 
not unimodal. It can also be noted that as soon as the number of observations are increased to $100$, the problem of $\beta$ is solved (figure (\ref{Fig:Post of A,B,alpha,beta, for sample3})). True signals along with 95\% credible intervals, obtained based on MCMC simulations, are provided in the figure (\ref{Fig:Fit for sample1,2,3,4}). It is seen that in all the cases the true signal falls well within the 95\% credible intervals. Finally, posterior predictive densities for $101$th observations of samples 1, 2, 3 and $20$th observation of sample 4, are given in the Figure (\ref{Fig:Post predictive for sample1,2,3,4}). It is seen that true future values are well within the 95\% credible interval in each of the cases, which is our main aim for this paper. 

\section{\small Real Data Analysis}
\label{4: REAL ANA}
Three real data sets have been taken from http://archive.ics.uci.edu/ml of which two are of type sonar rocks and one is of type sonar mines. Each signal contains 60 observations. Bache, K. and Lichman, M. (2013) mainly used the data for classification purpose. They got sonar signals from two different substances one is mine, other is rocks. Here we use these data sets for showing performance of our method for purpose of one step and multiple step forecasting. First we consider two different signals, the one from sonar mine and one of the two signals from sonar rock and keep the last observations for purpose of prediction. Therefore, we use $59$ observations for our analysis for the two above mentioned signals. We analyse another sonar rock signal in a different mode, in the sense that we keep last 5 observations for purpose of multiple step forecasting. That means that for this data set we only use $55$ observations for analysis. For first two signals (the sonar mine and one of the sonar rock, for which $59$ 
observations are considered for analysis), We give the 95\% credible interval based on sample observations obtained from MCMC simulations for purpose of fitting and the posterior predictive densities for purpose of prediction. For the last sonar rock signal five posterior predictive densities are given to show how more than one true future values being captured by 95\% credible intervals. 
\par 
We follow the same path for choosing the prior parameter values for $\sigma_{\mbox{\scriptsize $\epsilon$}}$ as we have done in the case of simulated samples. Here in particular we choose the value of $a$ (prior mean) to be $0.40$, $0.65$ and $0.50$ for the three data sets respectively. $\sigma_0$ has set to be $4$ as earlier, so that the variance becomes half of the square of the mean, for all the data sets considered here. Accordingly we find the values of $\sigma_1$ for each cases. The above choice of means have been done after running a pilot MCMC iterations. 
\par 
For $\alpha$ as well as for $\beta$, the scale parameters for vonMises distributions have been chosen to be $3$ for each of the data sets. The mean directions of vonMises for $\alpha$ have been set to be $2.83$ and $2.91$ for the sonar mine signal and the first sonar rock signal, respectively. Similarly, for $\beta$, we choose the mean directions of vonMises distributions to be $1.21$ and $1.32$ for the sonar mine signal and the first sonar rock signal, respectively. These values are obtained based on a small iteration ($50$ iterations) of simulated annealing technique, separately, on $\alpha$ and $\beta$ for each of the data sets. Finally, for the second sonar rock signal (in which case $55$ observations are considered for analysis), the choice of the mean directions for $\alpha$ and $\beta$ are taken to be $2.43$ and  $2.81$, obtained as a result of small number of iterations ($50$ iterations) of simulated annealing. For $r$, the value of $M$ needs to be given however, the true value of $M$ is not known here so, we choose a large value of $M$, $100$, for all these real data sets. 
\par
With these choices of the prior parameters we run $500000$ MCMC iterations with burning 
period $50000$, and the following results are noted.
Figure (\ref{Fig:Sonar_data_mines_1_2_rocks_3_fit}) provides the 95\% fit for the sonar mine signal, and the first sonar rock signal based on 59 observations. There are 60 observations for each of these signals. We have taken 59 observations for purpose of fitting and have kept 60th observation for prediction purpose. Figure (\ref{Fig:Sonar_data_1_2_posterior_predictive_rocks_3_predictive}) gives the posterior predictive densities of 60th observations for the sonar mine signal and the first sonar rock signal. We have noted from figure (\ref{Fig:Sonar_data_mines_1_2_rocks_3_fit}) that 95\% credible intervals mostly contain the true signals in both the two cases. 95\% credible completely contains the true sonar rock signal. However, three true observations (5th, 29th and 54th) fall outside the 95\% credible interval for the sonar mine signal (first graph of figure (\ref{Fig:Sonar_data_mines_1_2_rocks_3_fit})). At the same time it is noticed that the pattern of the signal has been best captured for the sonar 
mine signal. On the other hand, from Figure (\ref{Fig:Sonar_data_1_2_posterior_predictive_rocks_3_predictive}) it is observed that the true values of 60th observation fall well within the credible intervals for each of the two signals, the sonar mine signal and the first sonar rock signal.
\par
The second sonar rock signal, consisting of 60 observations, is analysed as follows. We keep first 55 observations as the known data and last 5 observations for purpose of multiple step prediction, as discussed earlier. Now, in Figures (\ref{Fig:Sonar rocks predictions_1st three}) and (\ref{Fig: Sonar rocks predictions_last two}) the five posterior predictive densities are given for last five observations, respectively. It is interesting to observe that true values of 56th, 57th, 58th, 59th and 60th observations fall well within the 95\% credible region. It is notable to see that even with only $55$ observations we can predict next $5$ observations in a reasonable way.


\section{\small Dependent error structure with exponentially decaying covariances}
\label{6:Depend error}
In this section we assume that $\bi{\epsilon}$ = $(\epsilon_1,\ldots,\epsilon_T)'$, given $\sigma_{\mbox{\scriptsize $\epsilon$}}^2$ and $\rho$ has a multivariate normal distribution with mean $(0\ldots,0)'$ and covariance matrix
\[
\mbox{cov}(\bi{\epsilon}) = \sigma_{\mbox{\scriptsize $\epsilon$}}^2 \Delta_{T},
\]
where $\Delta_{T}$ = $(a_{i,j})$ is the correlation matrix of order $T\times T$, with the following structure
\begin{equation}
a_{i,j} = \begin{cases}
1 & \mbox{ if } i=j\\[1ex]
\exp{(-\rho |i-j|)} & \mbox{ otherwise},
\end{cases}
\label{eq30:correlation of epsilon}
\end{equation}
with $\rho$ $\in$ $(0,\infty)$.
Under the above assumptions $[\bi{y}|r,\theta,\alpha,\beta,\sigma_{\mbox{\scriptsize $\epsilon$}}^2,\rho]$ follows a multivariate normal distribution with the mean parameter
$\bi{\mu}_{T}$ = $(\mu_1,\ldots,\mu_T)$ ($\mu_t$ is equal to $r\cos \theta \cos (\alpha t + \beta t^2) + r\sin \theta \sin (\alpha t + \beta t^2)$, for $t=1,\ldots, T$) and the covariance matrix $\sigma_{\mbox{\scriptsize $\epsilon$}}^2 \Delta_{T}$ of order $T\times T$. Given a data $\bi{y}$, our main aim is to get a posterior predictive density of $y_{T+1}$ for one step forecasting. The density of $[y_{T+1}|\bi{y}]$ can be written as 
\begin{equation}
\label{eq22:post pred depend}
[y_{T+1}|\bi{y}] = \int \int \int \int \int \int [y_{T+1}|\bi{y},r,\theta,\alpha,\beta,\sigma_{\mbox{\scriptsize $\epsilon$}},\rho] [r,\theta,\alpha,\beta,\sigma_{\mbox{\scriptsize $\epsilon$}},\rho|\bi{y}]\, dr\, d\theta\, d\alpha\, d\beta\, d\sigma_{\mbox{\scriptsize $\epsilon$}}\, d\rho,
\end{equation}
as done in equation (\ref{eq3:posterior predictive}) for independent error structure. It has to be noted that now the number of parameter increases to $6$ from $5$ (the number of parameters present in the i.\ i.\ d.\ case). As mentioned earlier in section (\ref{2.1:MCMC MTH}), it is not possible to get an analytical form of the above integration. The same simulation technique, as done in (\ref{2.1:MCMC MTH}) is implemented here. It is easily seen that $[y_{T+1}|\bi{y},r,\theta,\alpha,\beta,\sigma_{\mbox{\scriptsize $\epsilon$}}^2,\rho]$ follows a normal distribution with mean 
\begin{equation}
\label{eq23:cond mean of y_T+1}
E(y_{T+1}|\ldots) = \mu_{T+1} + \bi{c}' (\sigma_{\mbox{\scriptsize $\epsilon$}}^2 \Delta_{T})^{-1} (\bi{y} - \bi{\mu}_{T})
\end{equation}
and variance
\begin{equation}
\label{eq24:cond var of y_T+1}
\mbox{var}(y_{T+1}|\ldots) = \sigma_{\mbox{\scriptsize $\epsilon$}}^2 - \bi{c}'(\sigma_{\mbox{\scriptsize $\epsilon$}}^2 \Delta_{T})^{-1}\bi{c},
\end{equation}
where $\mu_{T+1}$ = $r\cos(\theta)\cos(\alpha (T+1) +\beta (T+1)^2)$ + $r\sin(\theta) \sin(\alpha (T+1)+\beta (T+1)^2)$ and 
\begin{equation}
\label{eq25:cov b/w T+1 and others}
\bi{c} = \sigma_{\mbox{\scriptsize $\epsilon$}}^2 (\exp(-T\rho),\ldots,\exp(-\rho))'
\end{equation}
is a vector of order $T\times 1$, containing the covariances between $y_{T+1}$ and $(y_{1},\ldots,y_{T})$. Therefore, once a sample is available from the posterior $[r,\theta,\alpha,\beta,\sigma_{\mbox{\scriptsize $\epsilon$}},\rho|\bi{y}]$, the corresponding samples drawn from $[y_{T+1}|\bi{y},r,\theta,\alpha,\beta,\sigma_{\mbox{\scriptsize $\epsilon$}}]$ are from the posterior predictive (\ref{eq22:post pred depend}). It is good to mention here that samples can be generated from $[y_{T+k}|\bi{y}]$ as well for multiple step forecasting, with a little generalization of augmentation technique, adding each simulated $y_{T+j}$ to the previous set of data $(y_{1},\ldots,y_{T+j-1})'$ to get $(y_{1},\ldots,y_{T+j-1},y_{T+j})'$, denoted by $\bi{y}_{T+j}$, for $j$ = $1,\ldots, k-1$. Then the above MCMC technique can be used. To be precise, at each augmentation stage a single MCMC sample is required to draw from $[r,\theta,\alpha,\beta,\sigma_{\mbox{\scriptsize $\epsilon$}},\rho|\bi{y}_{T+j}']$ and once this sample is generated, it is easy to obtain a sample from $[y_{T+j+1}|r,\theta,\alpha,\beta,\sigma_{\mbox{\scriptsize $\epsilon$}},\rho,\bi{y}_{T+j}]$, denoted as $[y_{T+j+1}|\ldots]$, because
$[y_{T+j+1}|\ldots]$ will follow a normal distribution with mean
\[
E(y_{T+j+1}|\ldots) = \mu_{T+j+1} + \bi{c}_{T+j}'(\sigma_{\mbox{\scriptsize $\epsilon$}}^2\Delta_{T+j})^{-1}(\bi{y}_{T+j}-\bi{\mu}_{T+j}) 
\]
and variance 
\[
\mbox{var}(y_{T+j+1}|\ldots) = \sigma_{\mbox{\scriptsize $\epsilon$}}^2 - \bi{c}_{T+j}'(\sigma_{\mbox{\scriptsize $\epsilon$}}^2 \Delta_{T+j})^{-1}\bi{c}_{T+j},
\]
where $\mu_{T+j+1}$ is the mean at time $T+j+1$, $\bi{\mu}_{T+j}$ is the expectation of $\bi{y}_{T+j}$, $\Delta_{T+j}$ is the correlation matrix of $\bi{y}_{T+j}$, and $\bi{c}_{T+j}$ is a vector of order $(T+j)\times 1$ containing the covariances between $y_{T+j+1}$ and $\bi{y}_{T+j}$.
\par 
We give details of simulations in one step forecasting here which can be easily generalized for multiple step forecasting. To get the sample from $[r,\theta,\alpha,\beta,\sigma_{\mbox{\scriptsize $\epsilon$}},\rho|\bi{y}]$ we use Gibbs sampler technique as earlier. Assuming the independence of the prior distributions, $[r,\theta,\alpha,\beta,\sigma_{\mbox{\scriptsize $\epsilon$}},\rho|\bi{y}]$ can be written as $[r,\theta,\alpha,\beta,\sigma_{\mbox{\scriptsize $\epsilon$}},\rho|\bi{y}]$ $\propto$ $[r]$ $[\theta]$ $[\alpha]$ $[\beta]$ $[\sigma_{\mbox{\scriptsize $\epsilon$}}]$ $[\rho]$ $[\bi{y}|r,\theta,\alpha,\beta,\sigma_{\mbox{\scriptsize $\epsilon$}},\rho]$. The choice of the prior distributions for $r,\theta,\alpha,\beta$ and $\sigma_{\mbox{\scriptsize $\epsilon$}}$ is taken to be the same as done in section (\ref{2.1.2:PRIORS}). For $\rho$, we assume that 
\begin{equation}
\label{eq26:prior of rho}
[\rho] \sim \mbox{gamma}(\rho_0,\rho_1)
\end{equation}
and obtain the full conditional density of $\rho$ as 
\begin{equation}
\label{eq27:posterior of rho}
[\rho|\ldots] \propto [\rho] [\bi{y}|\ldots].
\end{equation}
The forms of the full conditional distributions of $r$, $\theta$, $\alpha$, $\beta$ and $\sigma_{\mbox{\scriptsize $\epsilon$}}$ remain the same as equations (\ref{eq7:posterior of r}), (\ref{eq8:posterior of theta}), (\ref{eq9:posterior of alpha}), (\ref{eq10:posterior of beta}), (\ref{eq11:posterior of sigma}), respectively. In the current scenario also, the closed form of the full conditional densities are available only for $\sigma_{\mbox{\scriptsize $\epsilon$}}^2$ and $r$. For rest of the parameters we use the normal random walk MCMC with variance $0.5$, as earlier, for updating. The full conditional distribution of $r$, $[r|\ldots]$, turns out to be a truncated normal distribution, truncation between $(0,M)$, with mean 
\begin{equation}
\label{eq28:post mean of r in dependent}
E(r|\ldots) = \frac{\bi{y}'\Delta_{T}^{-1}\bi{b}_{T}}{\bi{b}_{T}'\Delta_{T}^{-1}\bi{b}_{T}}
\end{equation}
and variance
\begin{equation}
\label{eq29:post var of r in dependent}
\mbox{var}(r|\ldots) = \frac{\sigma_{\mbox{\scriptsize $\epsilon$}}^2}{\bi{b}_{T}'\Delta_{T}^{-1}\bi{b}_{T}},
\end{equation}
where $\bi{b}_{T}$ is such that $\bi{\mu}_{T}$ = $r$ $\bi{b}_{T}$ (the proof is given in Appendix). It is easy to seen that the full conditional distribution of $\sigma_{\mbox{\scriptsize $\epsilon$}}^2$ is the inverse gamma distribution with the parameters $\sigma_0+T/2$ and $\sigma_1 + (\bi{y}-\bi{\mu}_{T})'\Delta_{T}^{-1}(\bi{y}-\bi{\mu}_{T})/2$.
\par 
With the above discussion a simulation study has been done here. For simulation of the data we choose the values of the parameters as $A$ = $2.93$, $B$ = $1.91$, $\alpha$ = $2.5$, $\beta$ = $0.1$, $\sigma_{\mbox{\scriptsize $\epsilon$}}^2$  = $0.5$ and $\rho$ = $\ln (2)$. The choice of the above parameter values are motivated from Kundu and Nandi (2008). The choice of the prior parameters are decided as before, that is, for $\alpha$ and $\beta$ we run a small number of iterations ($50$) of simulated annealing technique to maximize the loglikelihood, separately with respect to $\alpha$ and $\beta$ and the values are found to be $2.41$ and $0.18$ for $\alpha$ and $\beta$, respectively. We take these values of $\alpha$ and $\beta$ as initial values for MCMC iterations as well.
\par 
For $\sigma_{\mbox{\scriptsize $\epsilon$}}^2$ we set the mean of the prior density, denoted by $a$, to be $1.0$ and $\sigma_0$ to be $4.0$ as done earlier (in section (\ref{3:SIMU STUDY})) and accordingly we evaluate the values of the $\sigma_0$ and $\sigma_1$. In case of $\rho$ we set the mean of prior density, $\rho_0 / \rho_1$, 1 and the $\rho_1$ is chosen to be $2$, so that the variance, $\rho_0/\rho_1^2$, becomes one half of the mean. Accordingly we get the values of $\rho_0$ is calculated.
\par 
Using the above choice of prior parameters $500000$ MCMC iterations has been done with $50000$ burning period. Here we discuss the outcomes of the experiment. Figure (\ref{Fig:Post of A,B,alpha,beta,sigma,rho for dependent sample}) provides the posterior densities of $A$, $B$, $\alpha$, $\beta$, $\sigma_{\mbox{\scriptsize $\epsilon$}}$ and $\rho$. It is observed that the true values of these parameters fall in high probability region in each of the cases. Finally, the posterior predictive density for $101$th observation and the 95\% credible interval obtained from simulated sample for the true signal are provided in the Figure (\ref{Fig:posterior predictive and fit for dependent error}). It is seen that the true future value is well within the 95\% credible interval of the posterior density, indicating the usefulness of our MCMC based method for forecasting. It is also seen that the true signal is almost always contained in the 95\% credible interval obtained based on samples from MCMC simulations, except for three values, namely, $1$st, $37$th and $92$nd observations. $1$st and $37$th true values fall below the 2.5\% interval and $92$nd observation fall above the 97.5\% interval. However, the pattern of the true signal is very nicely described by the intervals.  

\section{\small Conclusion and future work}
\label{5:CONC}
In this paper we have shown that using appropriate MCMC simulation technique one can successfully forecast one or more future observations based on the available data on one dimensional single chirp signal in Bayesian paradigm. (In this regard we have considered both independent error covariance structure as well as dependent error covariance structure. For independent error covariance structure,) we have seen that for simulated as well as for real data our method has performed very well for purpose of forecasting. In simulation studies we have considered four different samples with different values of parameters. We have kept the last observation for purpose of forecasting for each of these samples. It is observed that true future values for different samples fall within the 95\% credible intervals in all these cases. Moreover, in these simulation studies we have shown that true values of the parameters fall in high probability regions of the posterior densities in most of the cases. For the case where we 
have used only $20$ observations, true value of $\beta$ has fallen in a very low probability region in posterior density of $\beta$ (between the two modes of the posterior density of $\beta$), which is clearly because of small sample size (note that as soon as we increase the sample size to $100$ with the same set of values of parameters, posterior density of $\beta$ has rightly captured the true value of $\beta$). Here we once again emphasise that we take $\sigma_{\mbox{\scriptsize $\epsilon$}}$ to be unknown unlike Lahiri et al. (2012, 2014). This gives more freedom for using our method in practice.
\par 
MCMC simulation technique has been applied on three real data sets, taken from
\newline 
[http://archive.ics.uci.edu/ml], to see how the method is performing in practice. In this website data on two different types of signal are available, e.g., sonar mine signal and sonar rock signal. There are $60$ observations available corresponding to each signal. For our experiment we choose one sonar mine signal and two sonar rock signal.  We apply MCMC iteration on $59$ observations for the sonar mine and one of the sonar rock signals, keeping the last observation for purpose of forecasting. $55$ observations are taken for purpose of analysis in case other sonar rock signal. We keep $5$ observations for purpose of prediction to show how the method is working for more than one future observations. We have observed that in case of the sonar mine signal and the first sonar rock signal, posterior predictive densities have nicely captured the true values of $60$th observations and the fitting to the signals are extremely well, in the sense that 95\% credible intervals, obtained from MCMC simulations, have 
nicely captured the true signal. For the second sonar rock data, we have observed that the all five true future observations have fallen within the $95\%$ credible region of the posterior predictive densities. It is encouraging to note that with only $55$ observations, using MCMC iteration technique suitably, one can predict more than one observations in a significant manner, for one dimensional single chirp signal. Through out these experiments, simulated annealing technique is used to get the initial values and mean directions of the prior distribution for the parameters $\alpha$ and $\beta$. 
 
For dependent error covariance, we consider the covariance being exponentially decaying proportional to lag difference in this paper. This special covariance structure corresponds to the case of auto regressive process with lag one on errors in discrete time domain (Chatfiled, 2003). Kundu and Nandi (2008) have done a theoretical and numerical study on the stationary error structure. However, they have assumed that the auto covariance function is fully known, that is, they have taken the error variance and covariance to be known. We, on the contrary, take $\sigma_{\mbox{\scriptsize $\epsilon$}}^2$ and $\rho$ as unknowns and assuming prior densities on these, come up with posterior densities of $\sigma{\mbox{\scriptsize $\epsilon$}}^2$ and $\rho$. In the numerical example, we have shown that the true values fall in high probability region for both $\sigma_{\mbox{\scriptsize $\epsilon$}}^2$ and $\rho$, respectively. This is clearly an improvement over the previous work. 
\par 
For future work we will consider multiple chirp signal for purpose of Bayesian analysis and forecasting. One dimensional multiple chirp signal is defined as
\begin{equation}
\label{eq14:multiple chirp}
y_{t} = \sum_{k=1}^{p} \{A_{k} \cos(\alpha_{k} t +\beta_{k} t^2) + B_{k} \sin(\alpha_{k} t + \beta_{k} t^2)\} + \epsilon_{t}, t=1,\ldots, T.
\end{equation}
For details of multiple signal one may see Saha and Kay (2002), Kundu and Nandi (2008) etc. In most of the cases $p$ is assumed to be known. We will consider $p$ as unknown and will perform TTMCMC (Das and Bhattachariya, 2014) for purpose of estimation and forecasting in Bayesian paradigm for our future work.

\section*{Acknowledgement}
The author is very grateful to Moumita Das for her insightful comments and suggestions which improve the paper in every aspect. The author is also grateful to the assistance obtained from uci machine learning data repository (http://archive.ics.uci.edu/ml) for the real data sets.

\thebibliography{xx}

\bibitem{A1986}Abatzoglou, T. (1986). Fast maximum likelihood joint estimation of frequency and frequency rate. {\it IEEE Transactions on Aerospace and Electronic Systems},
{\bf 22}, 708 - 714.

\bibitem{BL2013}Bache, K. and Lichman, M. (2013). UCI Machine Learning Repository [http://archive.ics.uci.edu/ml]. Irvine, CA: University of California, School of Information and Computer Science.

\bibitem{BD2007}Bickel J. Piter and Doksum A. Kjella. (2007) Mathematical Statistics, vol.I, 2nd ed., Pearson Prentice Hall. 

\bibitem{BT1973}Box E. P. George and Tiao C. George. (1973) Bayesian inference in statistical analysis, Addison Wesley Publishing Co.

\bibitem{C2003}Chatfield, Christopher. (2003) The analysis of time series: an introduction. 6th ed. Chapman \& Hall/CRC.

\bibitem{DB2014}Das, M. and Bhattacharya, S. (2014). Transdimensional Transformation based Markov Chain Monte Carlo. Submitted.

\bibitem{DK1990}Djuric, P.M. and Kay, S.M. (1990), Parameter estimation of chirp signals. {\it IEEE Transactions on Acoustics, Speech and Signal Processing}, {\bf 38}, 2118 - 2126.

\bibitem{GL2006} Gamerman, D. and Lopes F. Hedibert (2006). {Markov Chain Monte Carlo: Stochastic Simulation for Bayesian Inference}, 2nd ed. Chapman \& Hall/CRC.

\bibitem{GMV2000}Gini, F., Montanari, M. and Verrazzani, L. (2000). Estimation of chirp signals in compound Gaussian clutter: A cyclostationary approach. {\it IEEE Transactions
on Acoustics, Speech and Signal Processing}, {\bf 48}, 1029 - 1039.

\bibitem{KN2008} Kundu, D. and Nandi, S. (2008). Parameter Estimation of chirp signals
in presence of stationary noise. {\it Statistica Sinica}, {\bf 18}, 187-201.

\bibitem{KV1987}Kumaresan, R. and Verma, S. (1987). On estimating the parameters of chirp signals using rank reduction techniques. {\it Proceedings of 21 st Asilomar Conference},
555 - 558, Pacific Grove, California.

\bibitem{LKM2012}Lahiri, A., Kundu D. and Mitra A. (2012). Efficient algorithm for estimating the parameters of chirp signal. {\it Journal of Multivariate Analysis}, {\bf 108}, 15-27. 

\bibitem{LKM2014}Lahiri A., Kundu D., Mitra A. (2014). On least absolute deviation estimator of one dimensional chirp model. {\it Statistics}, {\bf 48}, no. 2, 405 - 420, 2014.

\bibitem{LD2000} Lin C.C. and Djuric P.M. (2000). Estimation of chirp signals by MCMC. {\it Proc. IEEE Conf. Acoustics, Speech and Signal Processing (ICASSP)}, Istambul, Turkey,  {\bf 1}, 265–268.

\bibitem{LPW2004} Lin, Y., Peng, Y. and Wang, X. Maximum Likelihood Parameter Estimation of
Multiple Chirp Signals by a New Markov Chain Monte Carlo Approach. {\it Proc. of the IEEE, Radar Conf.} {\bf 1}, 559-562.

\bibitem{Jun2008} Liu, J.S. (2008). {\it Monte Carlo Strategies in Scientific Computing}, Springer.

\bibitem{NK2004}Nandi, S. and Kundu, D. Asymptotic properties of the least squares estimators of the parameters of the Chirp signals. {\it Annals of the Institute of Statistical Mathematics} , {\bf 56}, no. 3, 529-544. 

\bibitem{RG2004} Robert, C. P. and Casella, G. (2004). {\it Monte Carlo Statistical Methods}, 2nd ed. Springer.

\bibitem{SK2002} Saha, S. and Kay, S. M. (2002). Maximum likelihood parameter estimation of superimposed chirps using Monte Carlo importance sampling. {\it IEEE Trans. Signal Process}, {\bf 50}, 224-230.

\begin{center}
\section*{Appendix I}
\end{center}
\begin{theorem}
\label{Thm1:full conditional of r}
Let $\bi{y}|(r,\theta,\alpha,\beta,\sigma^2_{\mbox{\scriptsize $\epsilon$}})$ $\sim$ $N_{T}(\bi{\mu}_{T},\sigma^2_{\mbox{\scriptsize $\epsilon$}} I_{T\times T})$, where $\bi{\mu}_{T}$ = $(\mu_1,\ldots,\mu_T)'$, with $\mu_t$ = $r\cos \theta \cos (\alpha t + \beta t^2) + r\sin \theta \sin (\alpha t + \beta t^2)$, $t=1,\ldots, T$, and $I_{T\times T}$ be the identity matrix of order $T\times T$. Also we assume that $r$ $\in$ $(0,M)$, for some known real $M$ and $[r]\sim \mbox{ uniform }(0,M)$. Then $[r|\bi{y},\theta, \alpha,\beta,\sigma^2_{\mbox{\scriptsize $\epsilon$}}]$, denoted as $[r|\ldots]$, follows a truncated normal distribution with truncation between $(0,M)$ and with the  mean parameter and variance parameter as specified in equation $(\ref{eq20:post mean of r})$ and $(\ref{eq21: post var of r})$, respectively.
\end{theorem}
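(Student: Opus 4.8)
The plan is to write the full conditional $[r|\ldots] \propto [r]\,[\bi{y}|r,\theta,\alpha,\beta,\sigma^2_{\mbox{\scriptsize $\epsilon$}}]$ from equation (\ref{eq7:posterior of r}) explicitly, expand the Gaussian exponent as a quadratic in $r$, and complete the square to expose a truncated normal kernel. The structural fact that makes this work is that $\bi{\mu}_T$ is \emph{linear} in $r$: writing $c_t = \cos\theta\cos(\alpha t+\beta t^2)+\sin\theta\sin(\alpha t+\beta t^2)$ for $t=1,\dots,T$, we have $\mu_t = r\,c_t$, so the entire $r$-dependence of the likelihood is concentrated in one quadratic form.

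First I would substitute the $N_T(\bi{\mu}_T,\sigma^2_{\mbox{\scriptsize $\epsilon$}} I_{T\times T})$ density. Because the covariance is a scalar multiple of the identity, the exponent is simply $-\tfrac{1}{2\sigma^2_{\mbox{\scriptsize $\epsilon$}}}\sum_{t=1}^T (y_t - r\,c_t)^2$, while the $\mbox{uniform}(0,M)$ prior contributes only a constant times the indicator of $(0,M)$. Hence $[r|\ldots] \propto \mathbf{1}_{(0,M)}(r)\,\exp\!\big(-\tfrac{1}{2\sigma^2_{\mbox{\scriptsize $\epsilon$}}}\sum_{t=1}^T (y_t-r\,c_t)^2\big)$.

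Next I would expand $\sum_t (y_t-r\,c_t)^2 = \sum_t y_t^2 - 2r\sum_t y_t c_t + r^2\sum_t c_t^2$ and complete the square in $r$. The leading coefficient $\sum_t c_t^2$ is strictly positive unless every $c_t$ vanishes, which both legitimises completing the square and ensures the resulting normal law is non-degenerate. Absorbing all $r$-free terms into the proportionality constant, the exponent collapses to $-\tfrac{\sum_t c_t^2}{2\sigma^2_{\mbox{\scriptsize $\epsilon$}}}\big(r - m\big)^2$ with $m=\sum_t y_t c_t/\sum_t c_t^2$; reading off the mean $m$ and the variance $\sigma^2_{\mbox{\scriptsize $\epsilon$}}/\sum_t c_t^2$ reproduces equations (\ref{eq20:post mean of r}) and (\ref{eq21: post var of r}) once $c_t$ is expanded back.

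There is no real analytic obstacle here; the only care needed is bookkeeping. I would verify that both $\sum_t y_t^2$ and the constant discarded when completing the square are genuinely independent of $r$, so that they may be pushed into the normalising constant, and I would note that the prior's indicator is exactly what converts the full-line normal kernel into its restriction to $(0,M)$, i.e.\ a truncated normal. Identifying this Gaussian kernel in $r$ completes the proof.
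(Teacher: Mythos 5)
Your proposal is correct and follows essentially the same route as the paper's own proof: write $[r|\ldots]\propto[r][\bi{y}|\ldots]$, note that the uniform prior contributes only the indicator $\chi_{(0,M)}(r)$, expand the Gaussian exponent as a quadratic in $r$, and complete the square to read off the truncated normal with mean $\sum_t y_t c_t/\sum_t c_t^2$ and variance $\sigma^2_{\mbox{\scriptsize $\epsilon$}}/\sum_t c_t^2$. Your added remark that $\sum_t c_t^2>0$ is needed for the square-completion to yield a non-degenerate normal is a small but worthwhile refinement the paper leaves implicit.
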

\begin{proof}
We note that
\begin{align*}
[r|\ldots] &\propto [r][\bi{y}|\ldots]
\\
& \propto \exp \left[-\frac{1}{2\sigma^2_{\mbox{\scriptsize $\epsilon$}}}\sum_{t=1}^{T}\left\{y_t-r(\cos \theta \cos(\alpha t+\beta t^2)+\sin \theta \sin(\alpha t+\beta t^2))\right\}^2\right]\chi_{(0,M)}(r)
\end{align*}
We simplify the exponent term (without $-\frac{1}{2\sigma^2_{\mbox{\scriptsize $\epsilon$}}}$) below.

\begin{align*}
&\sum_{t=1}^{T}\left\{y_t-r(\cos \theta \cos(\alpha t+\beta t^2)+\sin \theta \sin(\alpha t+\beta t^2))\right\}^2
\\
&
=r^2 \sum_{t=1}^T \left(\cos \theta \cos(\alpha t+\beta t^2)+\sin \theta \sin(\alpha t+\beta t^2)\right)^2 
\\
& -2r \sum_{t=1}^{T} y_t \left(\cos \theta \cos(\alpha t+\beta t^2)+\sin \theta \sin(\alpha t+\beta t^2)\right) + \mbox{constant with respect to } r
\\
&
= \sum_{t=1}^T \left(\cos \theta \cos(\alpha t+\beta t^2)+\sin \theta \sin(\alpha t+\beta t^2)\right)^2
\\
& 
\left\{ r-\frac{\sum_{t=1}^{T} y_t \left(\cos \theta \cos(\alpha t+\beta t^2)+\sin \theta \sin(\alpha t+\beta t^2)\right)}{\sum_{t=1}^{T^{•}}\left(\cos \theta \cos(\alpha t+\beta t^2)+\sin \theta \sin(\alpha t+\beta t^2)\right)^2}\right\}^2 + \mbox{constant with respect to } r
\end{align*}
Therefore,
\begin{align*}
[r|\ldots] &\propto \exp \left[-\frac{1}{\sigma_r^2} \left\{ r-\frac{\sum_{t=1}^{T} y_t \left(\cos \theta \cos(\alpha t+\beta t^2)+\sin \theta \sin(\alpha t+\beta t^2)\right)}{\sum_{t=1}^{T} \left(\cos \theta \cos(\alpha t+\beta t^2)+\sin \theta \sin(\alpha t+\beta t^2)\right)^2}\right\}^2\right]\chi_{0,M}(r),
\end{align*}
where 
\[
\sigma^2_{r} = \frac{\sigma^2_{\mbox{\scriptsize $\epsilon$}}}{\sum_{t=1}^{T}\left(\cos \theta \cos(\alpha t+\beta t^2)+\sin \theta \sin(\alpha t+\beta t^2)\right)^2}.
\]
Hence the proof follows.  
\end{proof}


\begin{theorem}
\label{Thm2:full conditional of r in dependent case}
Let $\bi{y}|(r,\theta,\alpha,\beta,\sigma^2_{\mbox{\scriptsize $\epsilon$}},\rho)$ $\sim$ $N_{T}(\bi{\mu}_{T},\sigma^2_{\mbox{\scriptsize $\epsilon$}} \Delta_{T})$, where $\bi{\mu}_{T}$ = $(\mu_1,\ldots,\mu_T)'$, with $\mu_t$ = $r\cos \theta \cos (\alpha t + \beta t^2) + r\sin \theta \sin (\alpha t + \beta t^2)$, $t=1,\ldots, T$, and $\Delta_{T}$ be the correlation matrix or order $T\times T$, with the elements as specified in equation $(\ref{eq30:correlation of epsilon})$. Also we assume that $r$ $\in$ $(0,M)$, for some known real $M$ and $[r]\sim \mbox{ uniform }(0,M)$. Then $[r|\bi{y},\theta, \alpha,\beta,\sigma^2_{\mbox{\scriptsize $\epsilon$}},\rho]$, denoted as $[r|\ldots]$, follows a truncated normal distribution with truncation between $(0,M)$ and with the  mean parameter and variance parameter as specified in equation $(\ref{eq28:post mean of r in dependent})$ and $(\ref{eq29:post var of r in dependent})$, respectively.
\end{theorem}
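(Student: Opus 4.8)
The plan is to mirror the argument of Theorem \ref{Thm1:full conditional of r}, the only substantive change being that the identity covariance $I_{T\times T}$ is replaced by the correlation matrix $\Delta_{T}$, so that the relevant quadratic form is taken with respect to $\Delta_{T}^{-1}$ rather than the plain sum of squares. First I would write the full conditional, using that $[r]$ is uniform on $(0,M)$ and that $[\bi{y}|\ldots]$ is the $N_{T}(\bi{\mu}_{T},\sigma^2_{\mbox{\scriptsize $\epsilon$}}\Delta_{T})$ density, as
\[
[r|\ldots] \propto [r]\,[\bi{y}|\ldots] \propto \exp\left[-\frac{1}{2\sigma^2_{\mbox{\scriptsize $\epsilon$}}}(\bi{y}-\bi{\mu}_{T})'\Delta_{T}^{-1}(\bi{y}-\bi{\mu}_{T})\right]\chi_{(0,M)}(r).
\]

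The key step is to exploit the linearity $\bi{\mu}_{T} = r\,\bi{b}_{T}$ supplied in the statement. Substituting this and expanding the matrix quadratic form gives
\[
(\bi{y}-r\bi{b}_{T})'\Delta_{T}^{-1}(\bi{y}-r\bi{b}_{T}) = r^{2}\,\bi{b}_{T}'\Delta_{T}^{-1}\bi{b}_{T} - 2r\,\bi{y}'\Delta_{T}^{-1}\bi{b}_{T} + \bi{y}'\Delta_{T}^{-1}\bi{y},
\]
where I have used the symmetry of $\Delta_{T}^{-1}$ to merge the two cross terms $\bi{y}'\Delta_{T}^{-1}\bi{b}_{T}$ and $\bi{b}_{T}'\Delta_{T}^{-1}\bi{y}$ (equal scalars) into a single $-2r\,\bi{y}'\Delta_{T}^{-1}\bi{b}_{T}$. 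Completing the square in $r$ then produces
\[
\bi{b}_{T}'\Delta_{T}^{-1}\bi{b}_{T}\left(r - \frac{\bi{y}'\Delta_{T}^{-1}\bi{b}_{T}}{\bi{b}_{T}'\Delta_{T}^{-1}\bi{b}_{T}}\right)^{2} + (\mbox{constant with respect to } r),
\]
and absorbing the $r$-free constant into the proportionality exhibits $[r|\ldots]$ as a normal density in $r$, truncated to $(0,M)$ by the indicator, with mean $\bi{y}'\Delta_{T}^{-1}\bi{b}_{T}/\bi{b}_{T}'\Delta_{T}^{-1}\bi{b}_{T}$ and variance $\sigma^2_{\mbox{\scriptsize $\epsilon$}}/\bi{b}_{T}'\Delta_{T}^{-1}\bi{b}_{T}$, which are exactly equations $(\ref{eq28:post mean of r in dependent})$ and $(\ref{eq29:post var of r in dependent})$.

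There is essentially no deep obstacle here beyond careful bookkeeping; the argument is just the completion of a scalar square once the matrix quadratic form is expanded. The two points requiring attention are the symmetry of $\Delta_{T}^{-1}$, which holds because $\Delta_{T}$ is a symmetric positive definite correlation matrix and is needed to combine the two linear-in-$r$ cross terms, and the strict positivity of $\bi{b}_{T}'\Delta_{T}^{-1}\bi{b}_{T}$, which follows from the positive definiteness of $\Delta_{T}^{-1}$ and ensures that the completed square carries a genuine finite positive variance parameter, so that the resulting truncated normal is well defined.
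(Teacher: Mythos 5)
Your proposal is correct and follows essentially the same route as the paper's own proof: write $[r|\ldots]\propto[r][\bi{y}|\ldots]$, substitute $\bi{\mu}_{T}=r\bi{b}_{T}$, expand the quadratic form $(\bi{y}-r\bi{b}_{T})'\Delta_{T}^{-1}(\bi{y}-r\bi{b}_{T})$, and complete the square in $r$ to read off the truncated normal with mean $\bi{y}'\Delta_{T}^{-1}\bi{b}_{T}/\bi{b}_{T}'\Delta_{T}^{-1}\bi{b}_{T}$ and variance $\sigma^2_{\mbox{\scriptsize $\epsilon$}}/\bi{b}_{T}'\Delta_{T}^{-1}\bi{b}_{T}$. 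Your explicit remarks on the symmetry of $\Delta_{T}^{-1}$ and the positivity of $\bi{b}_{T}'\Delta_{T}^{-1}\bi{b}_{T}$ are details the paper leaves implicit, but they do not change the argument.
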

\begin{proof}
We follow similar line of proof as done in Theorem (\ref{Thm1:full conditional of r}). Writing $\bi{\mu}_{T}$ = $r\bi{b}_{T}$, we clearly observe that
\begin{align*}
[r|\ldots] \propto \exp\left(-\frac{1}{2\sigma^2_{\mbox{\scriptsize $\epsilon$}}}(\bi{y}-r\bi{b}_{T})'\Delta_{T}^{-1}(\bi{y}-r\bi{b}_{T})\right) \chi_{(0,M)}(r) 
\end{align*}
After simplifying the exponent term it is readily seen that
\begin{align*}
[r|\ldots] \propto \exp\left[-\frac{\bi{b}_{T}'\Delta_{T}^{-1}\bi{b}_{T}}{2\sigma^2_{\mbox{\scriptsize $\epsilon$}}}\left(r-\frac{\bi{y}'\Delta_{T}^{-1}\bi{b}_{T}}{\bi{b}_{T}'\Delta_{T}^{-1}\bi{b}_{T}}\right)^2\right]\chi_{(0,M)}(r)
\end{align*}
Hence the proof follows.
\end{proof}

\begin{figure}[htp]
\centering
\includegraphics[height=1.5in,width=1.5in]{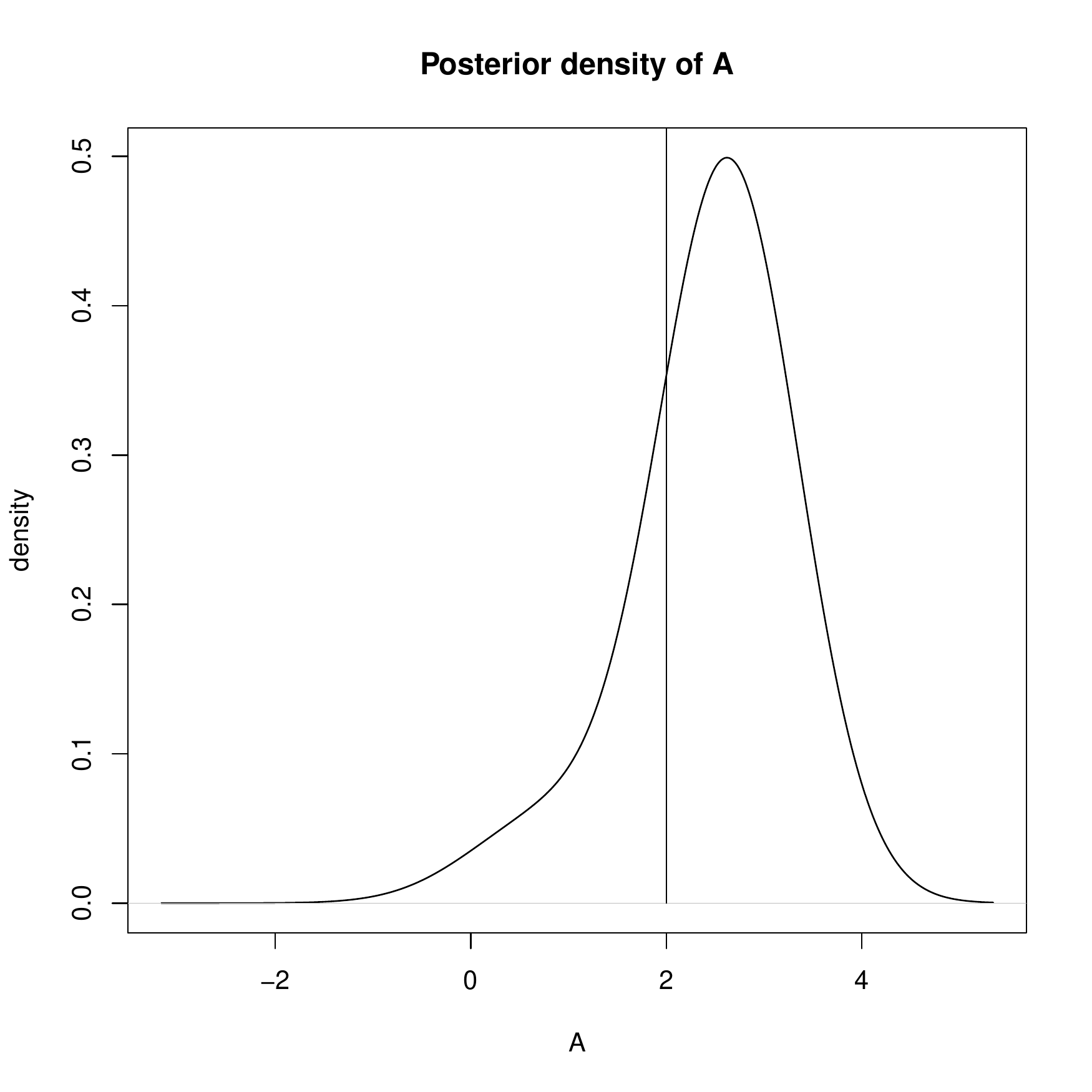}
\includegraphics[height=1.5in,width=1.5in]{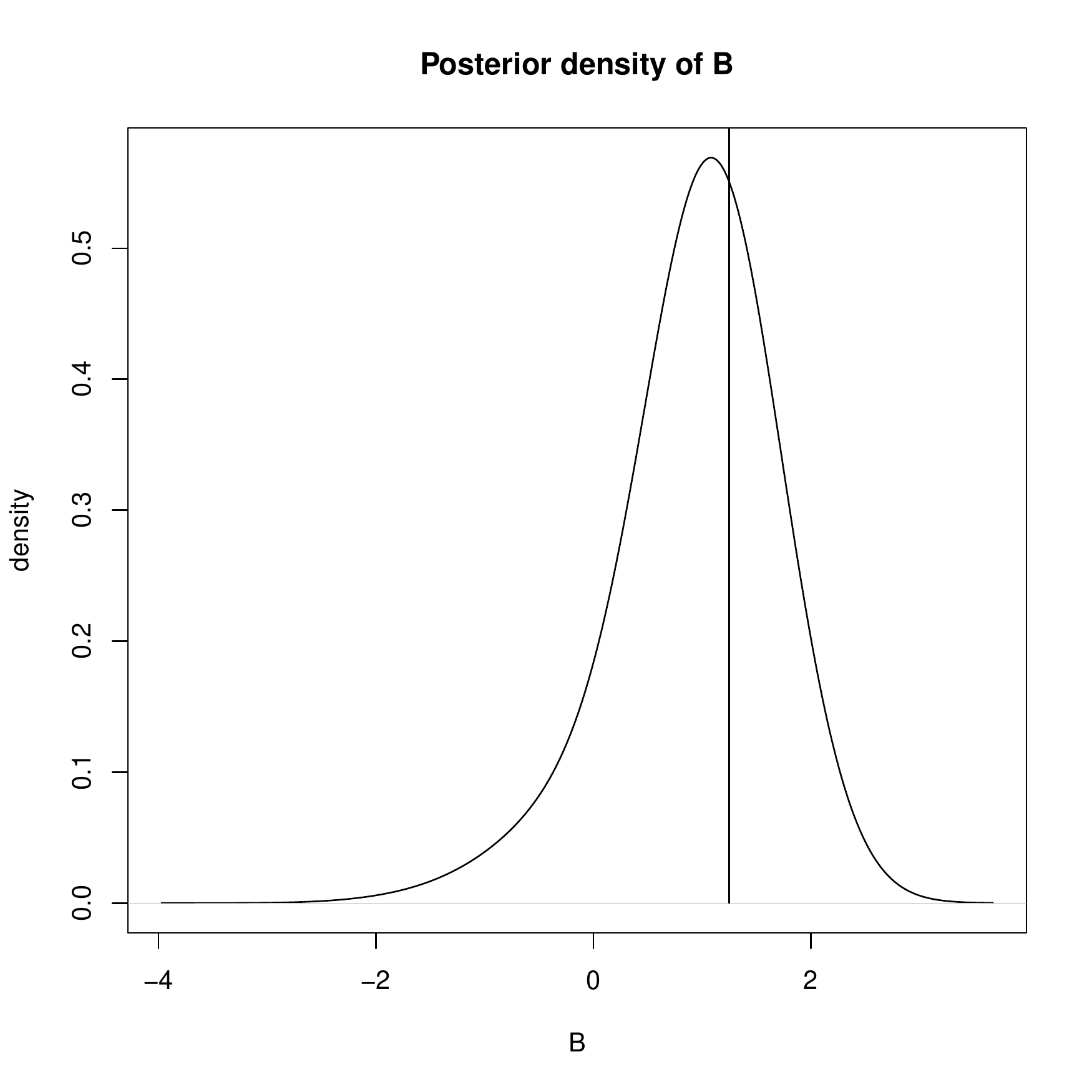}
\includegraphics[height=1.5in,width=1.5in]{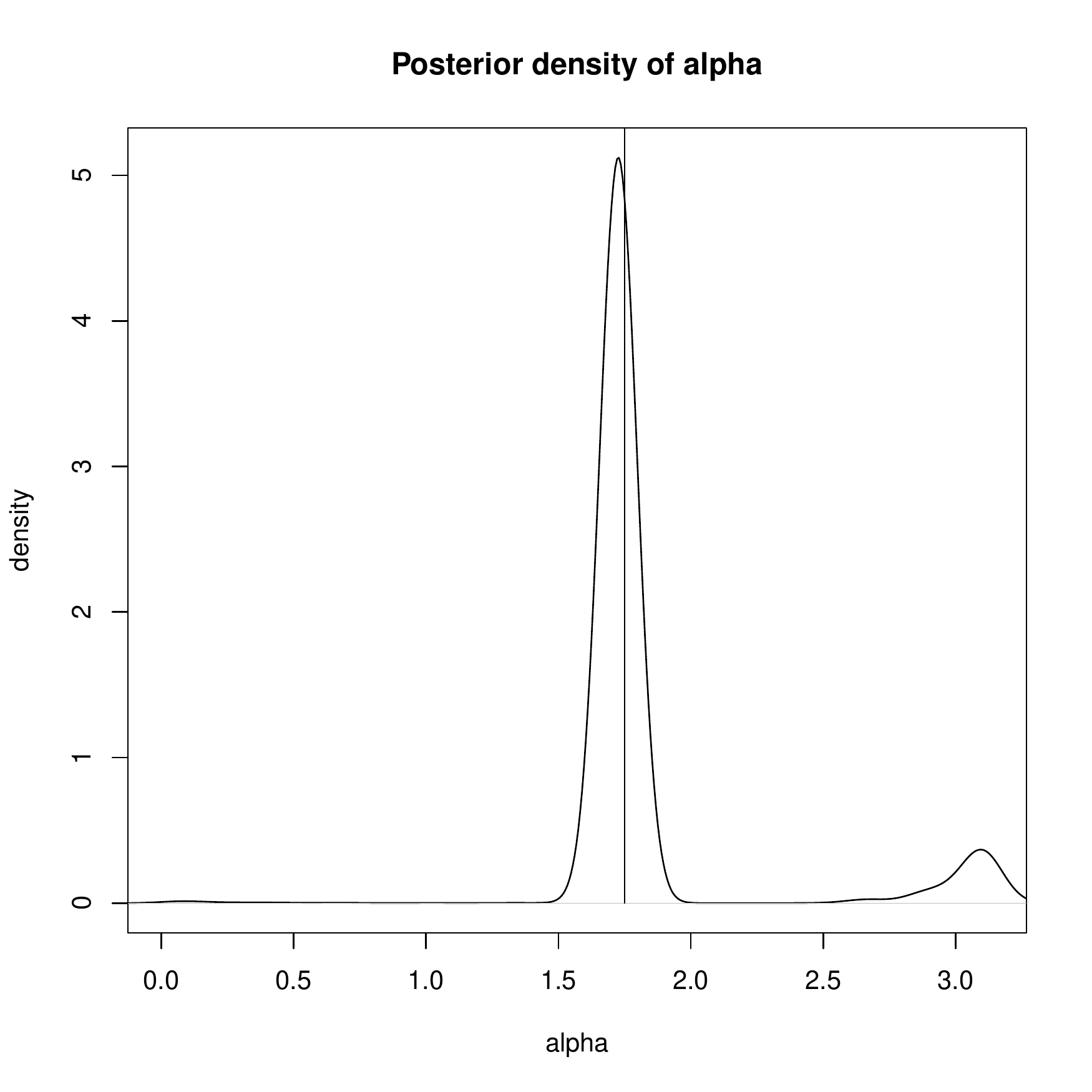}
\includegraphics[height=1.5in,width=1.5in]{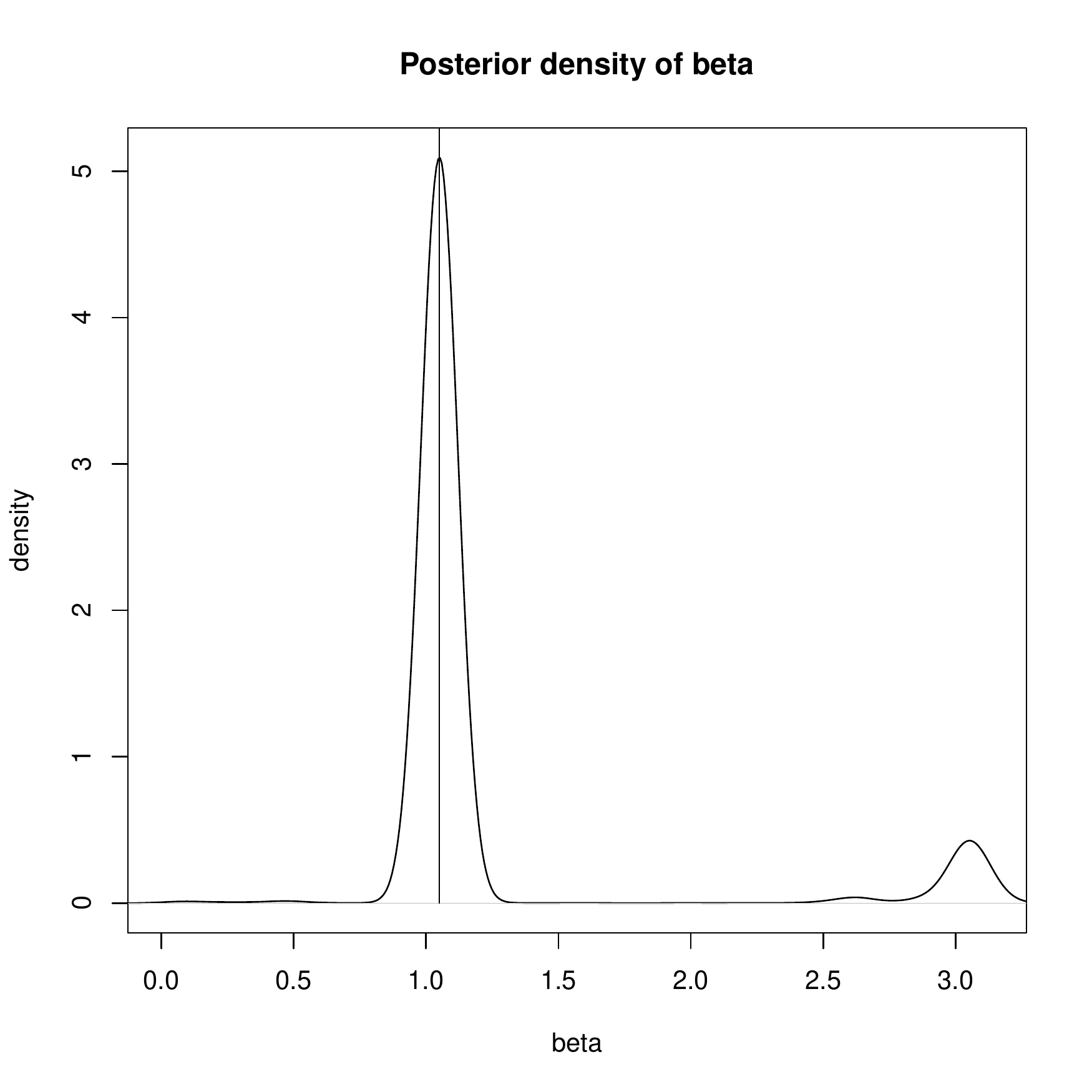}
\includegraphics[height=1.5in,width=1.5in]{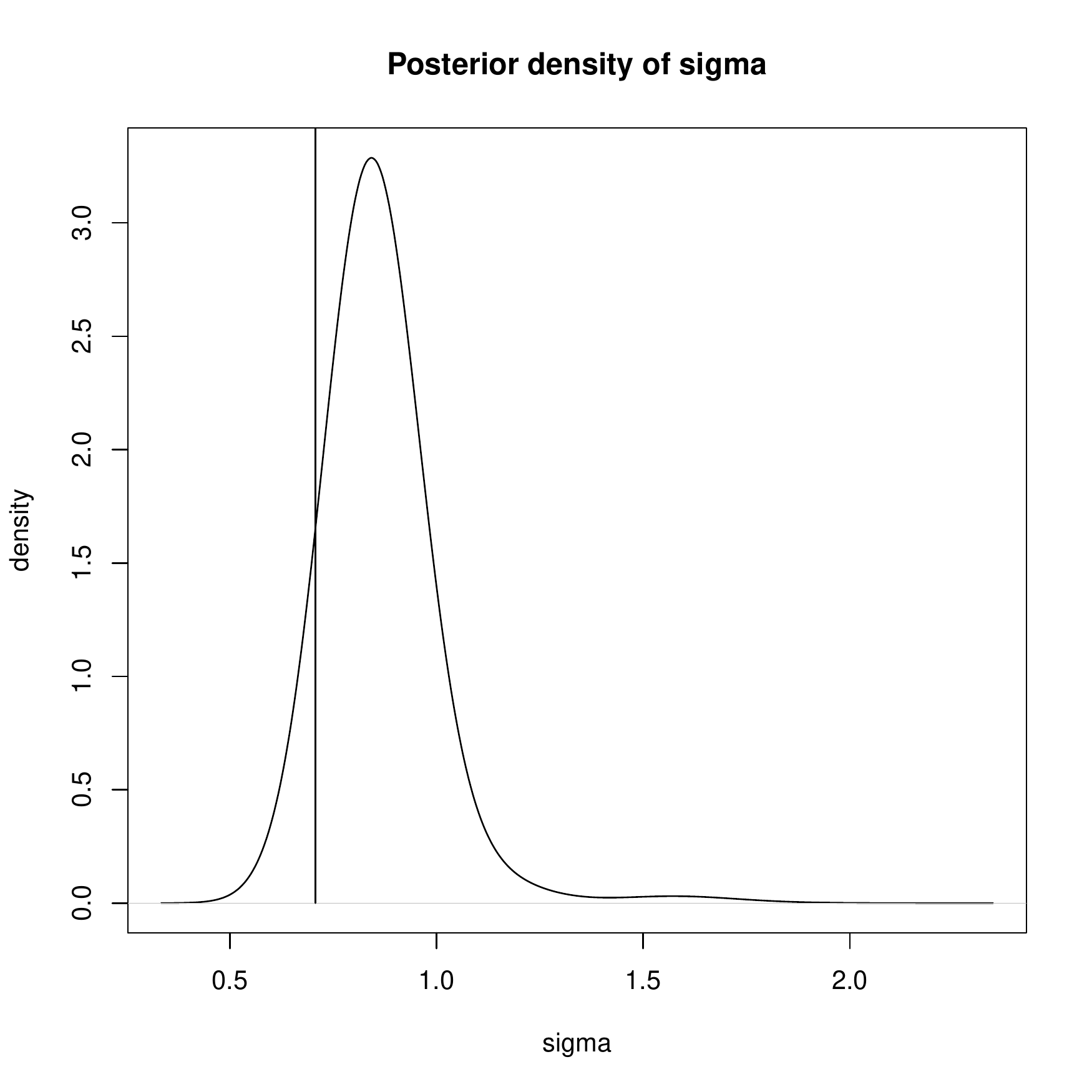}
\caption{Posterior densities of $A$, $B$, $\alpha$, $\beta$ and $\sigma$ for sample 1 of Table 1, where true values are indicated with vertical lines.}
\label{Fig:Post of A,B,alpha,beta, sigma for sample1}
\end{figure}

\begin{figure}[htp]
\centering
\includegraphics[height=1.5in,width=1.5in]{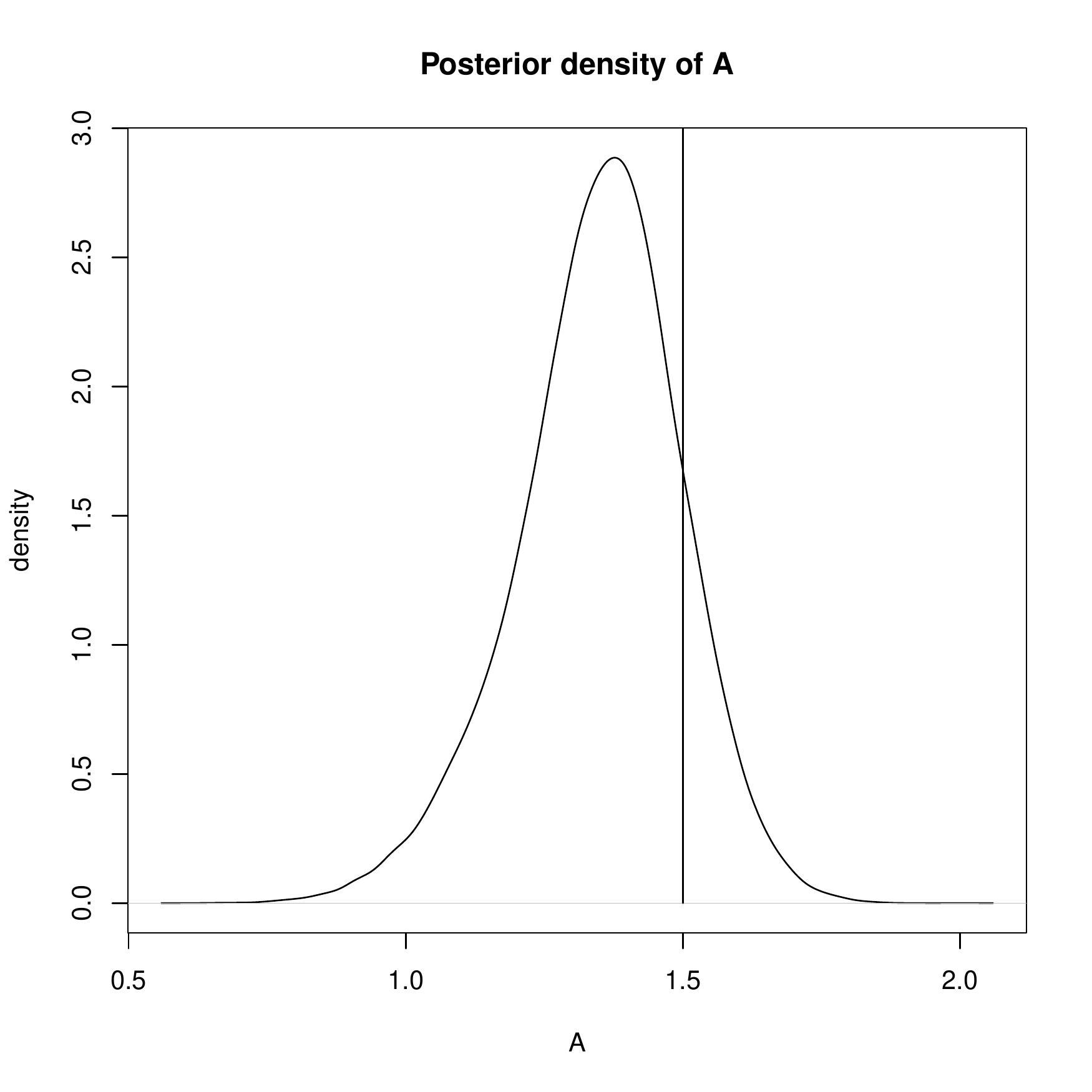}
\includegraphics[height=1.5in,width=1.5in]{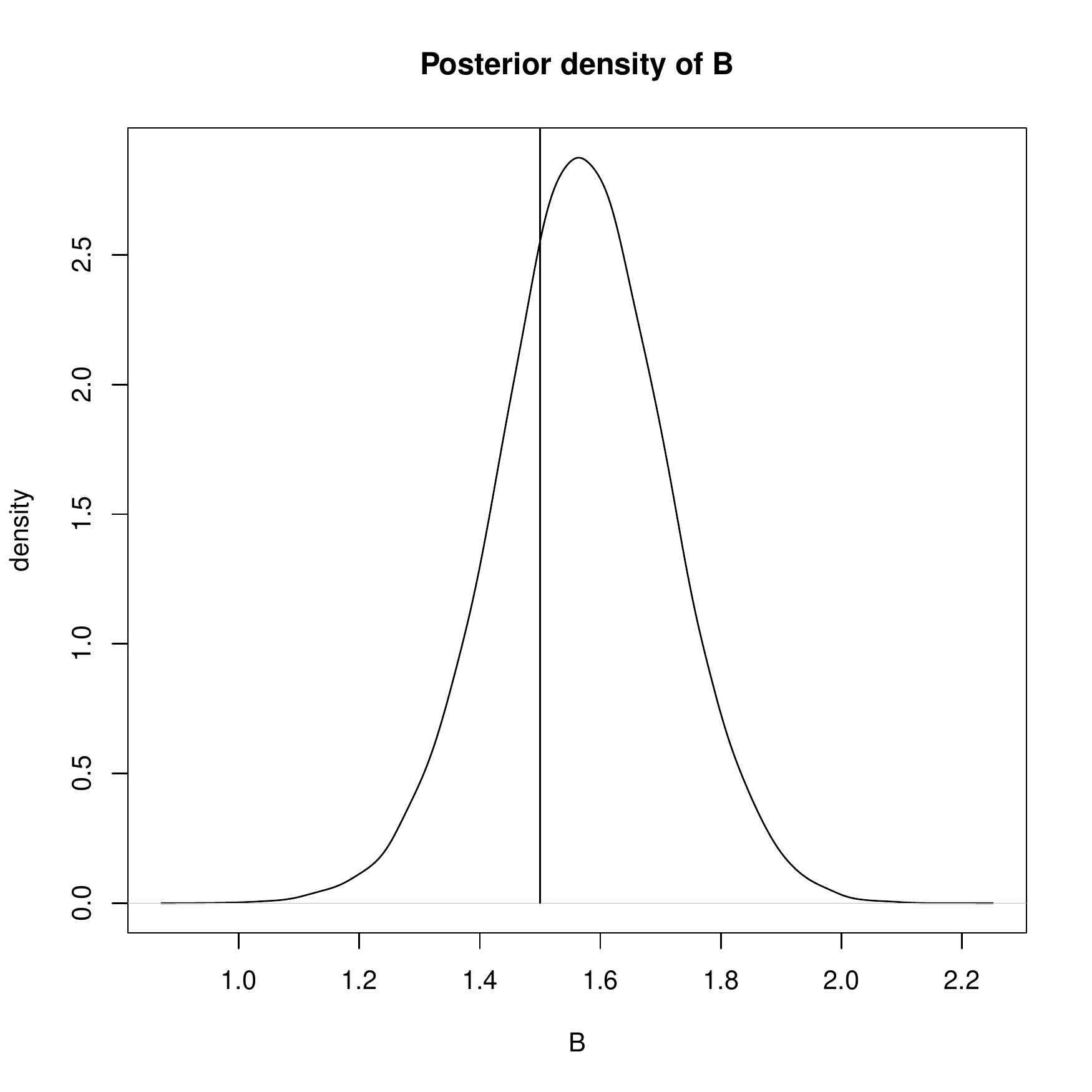}
\includegraphics[height=1.5in,width=1.5in]{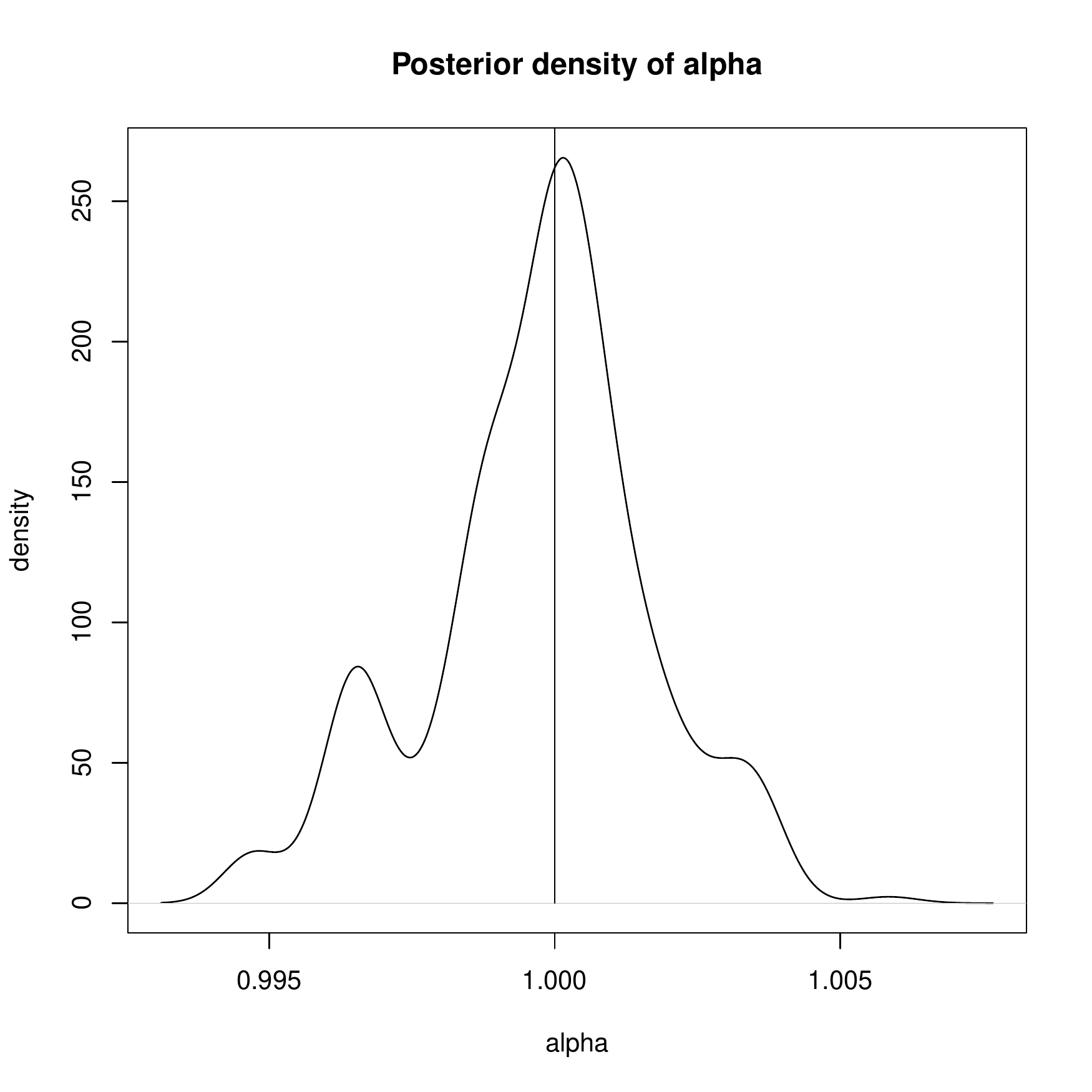}
\includegraphics[height=1.5in,width=1.5in]{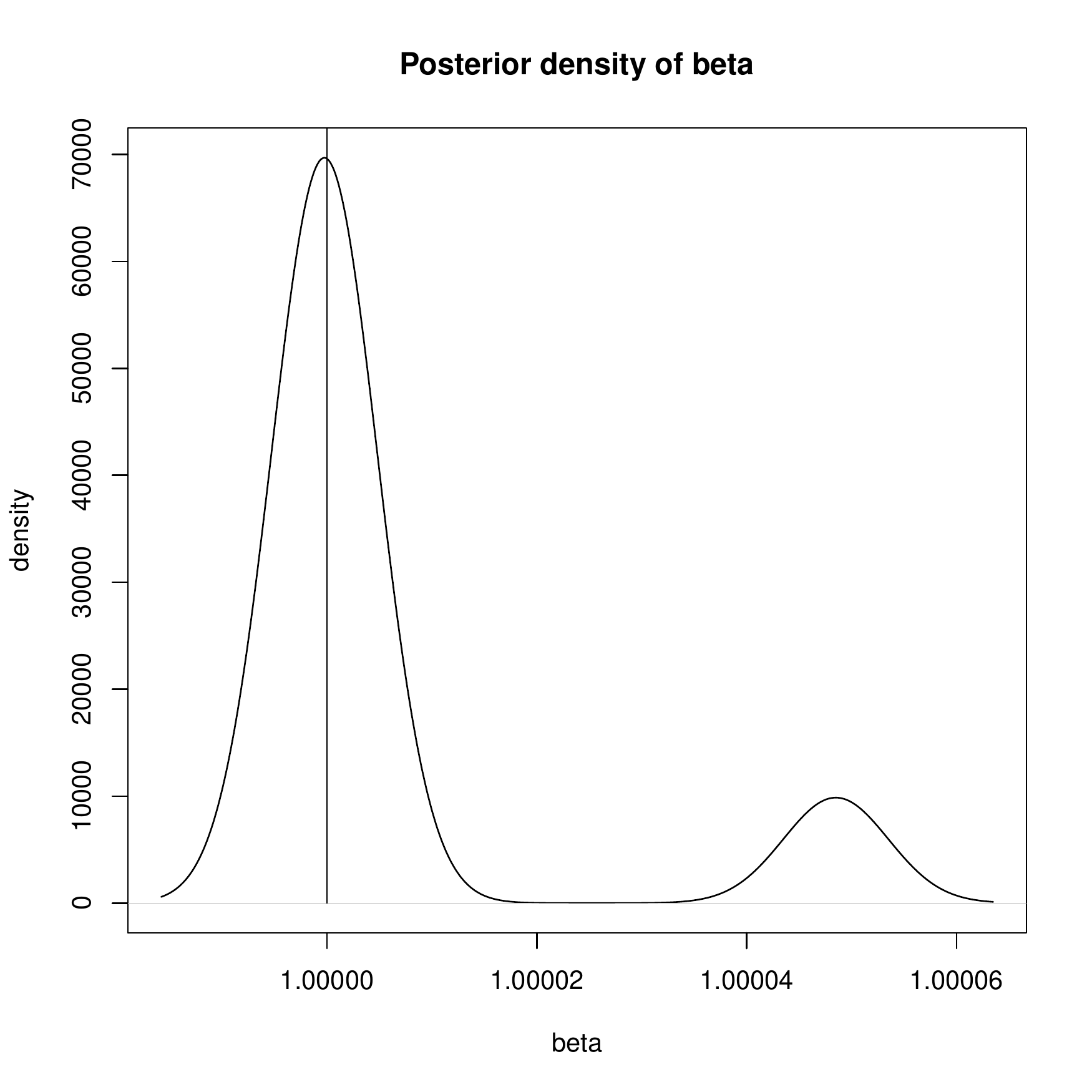}
\includegraphics[height=1.5in,width=1.5in]{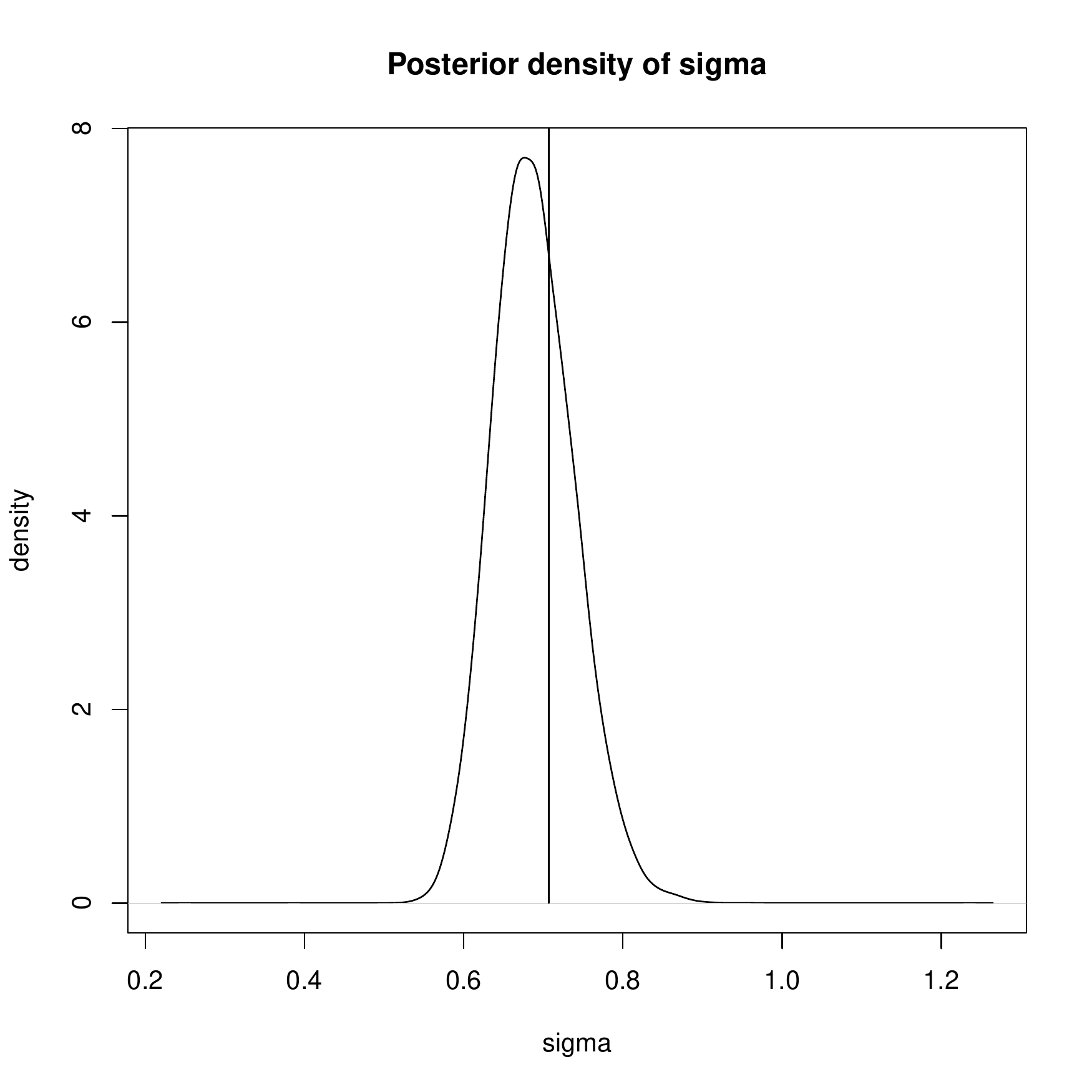}
\caption{Posterior densities of $A$, $B$, $\alpha$, $\beta$ and $\sigma$ for sample 2 of Table 1, where true values are indicated with vertical lines.}
\label{Fig:Post of A,B,alpha,beta, sigma for sample2}
\end{figure}

\begin{figure}[htp]
\centering
\includegraphics[height=1.5in,width=1.5in]{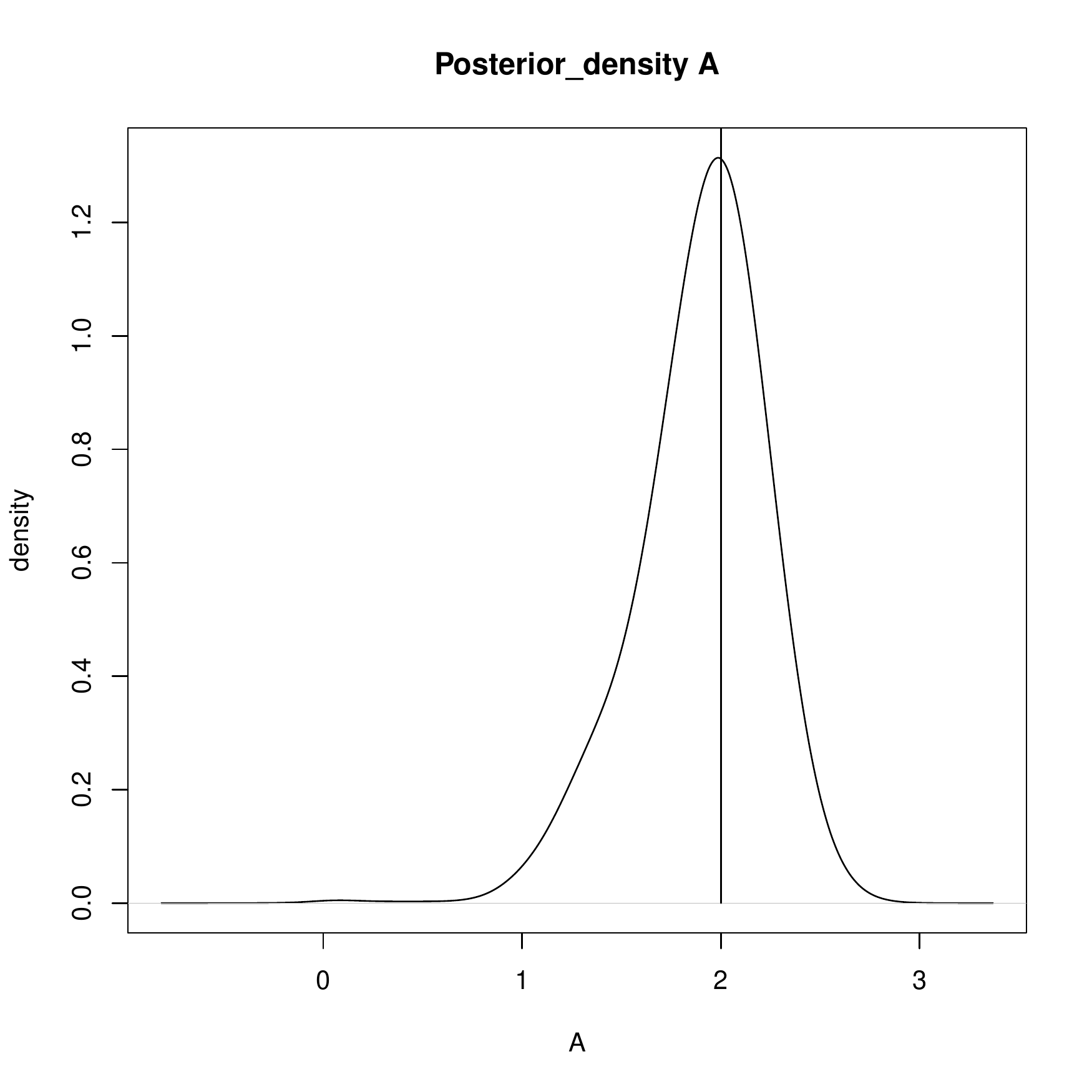}
\includegraphics[height=1.5in,width=1.5in]{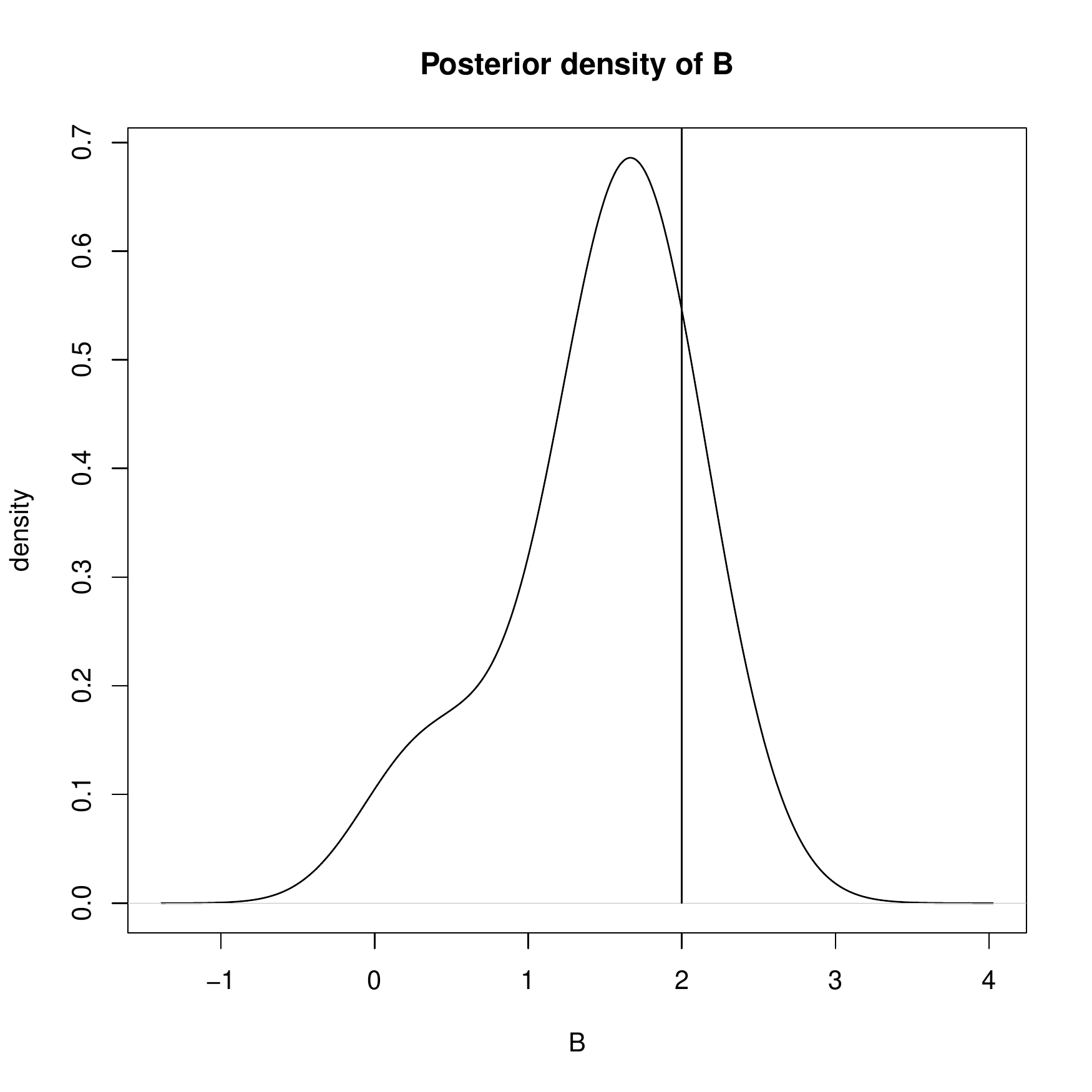}
\includegraphics[height=1.5in,width=1.5in]{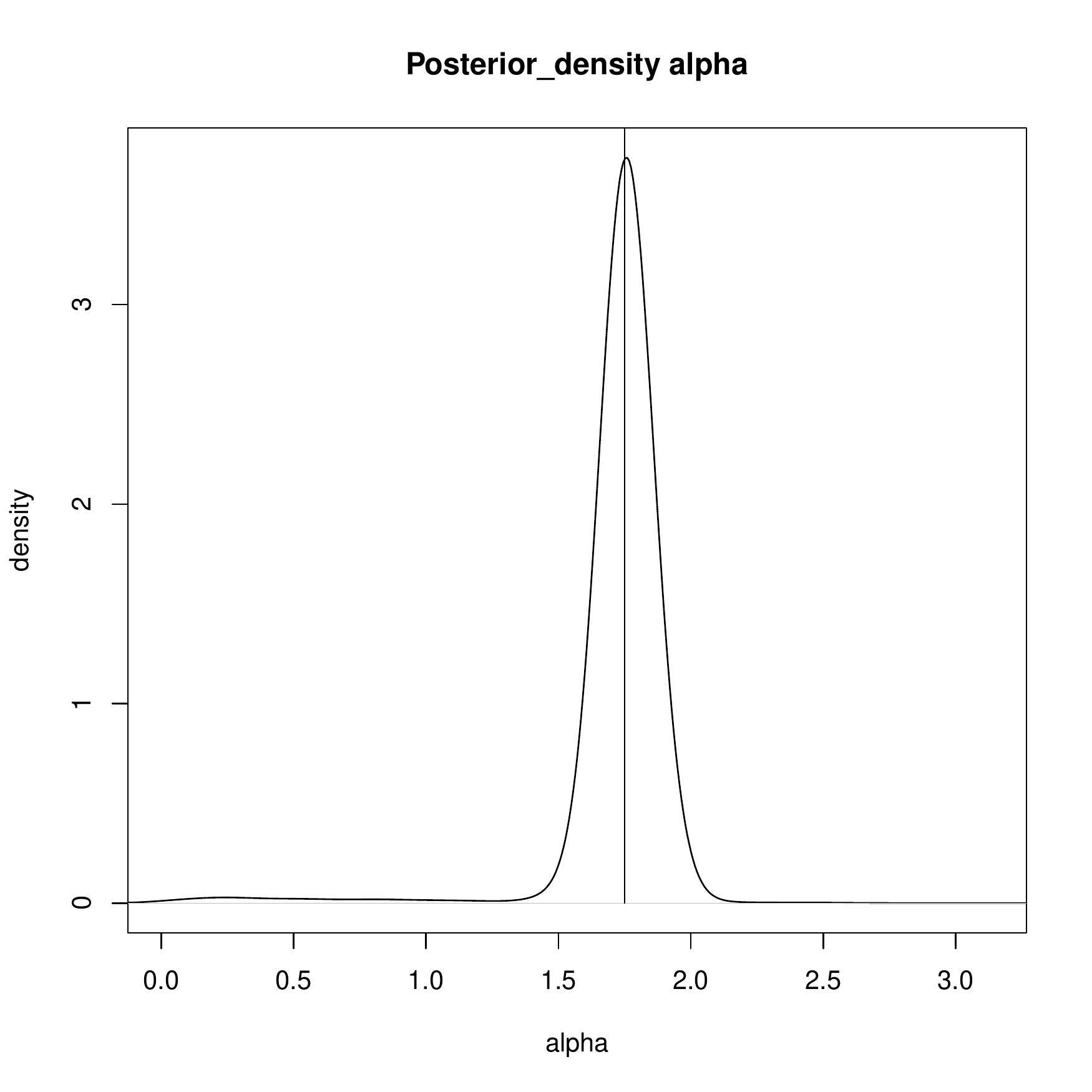}
\includegraphics[height=1.5in,width=1.5in]{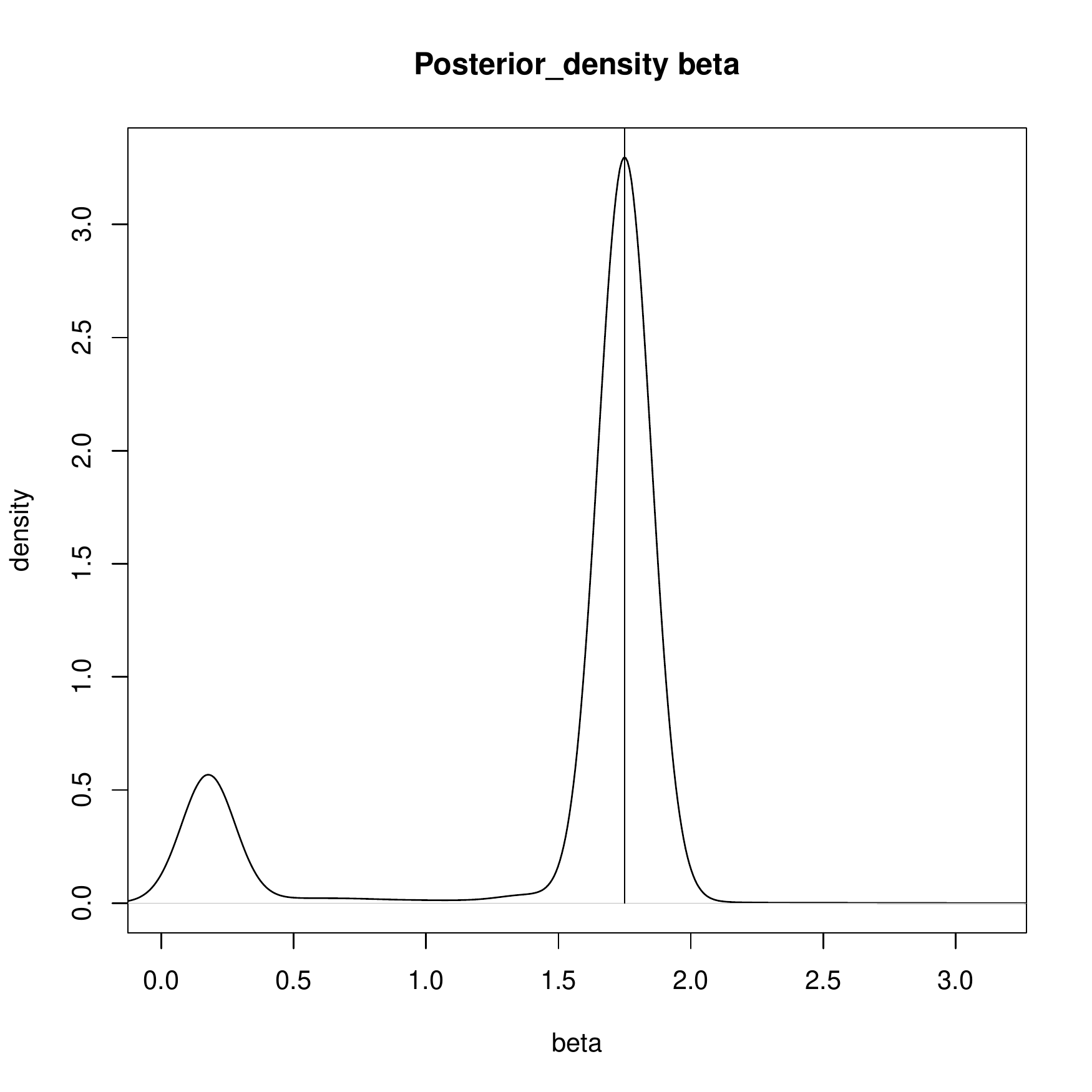}
\includegraphics[height=1.5in,width=1.5in]{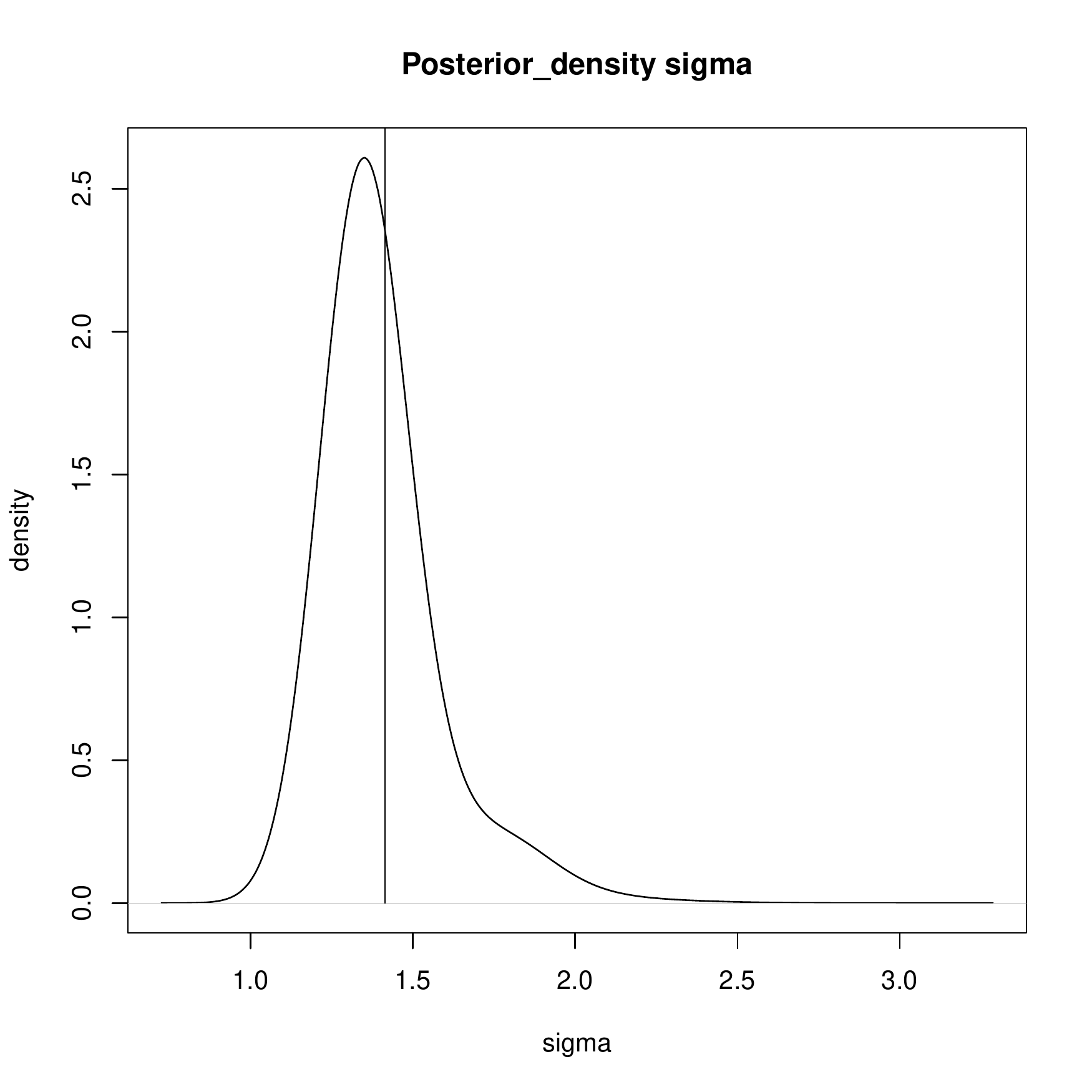}
\caption{Posterior densities of $A$, $B$, $\alpha$, $\beta$ and $\sigma$ for sample 3 of Table 1, where true values are indicated with vertical lines.}
\label{Fig:Post of A,B,alpha,beta, for sample3}
\end{figure}

\begin{figure}[htp]
\centering
\includegraphics[height=1.5in,width=1.5in]{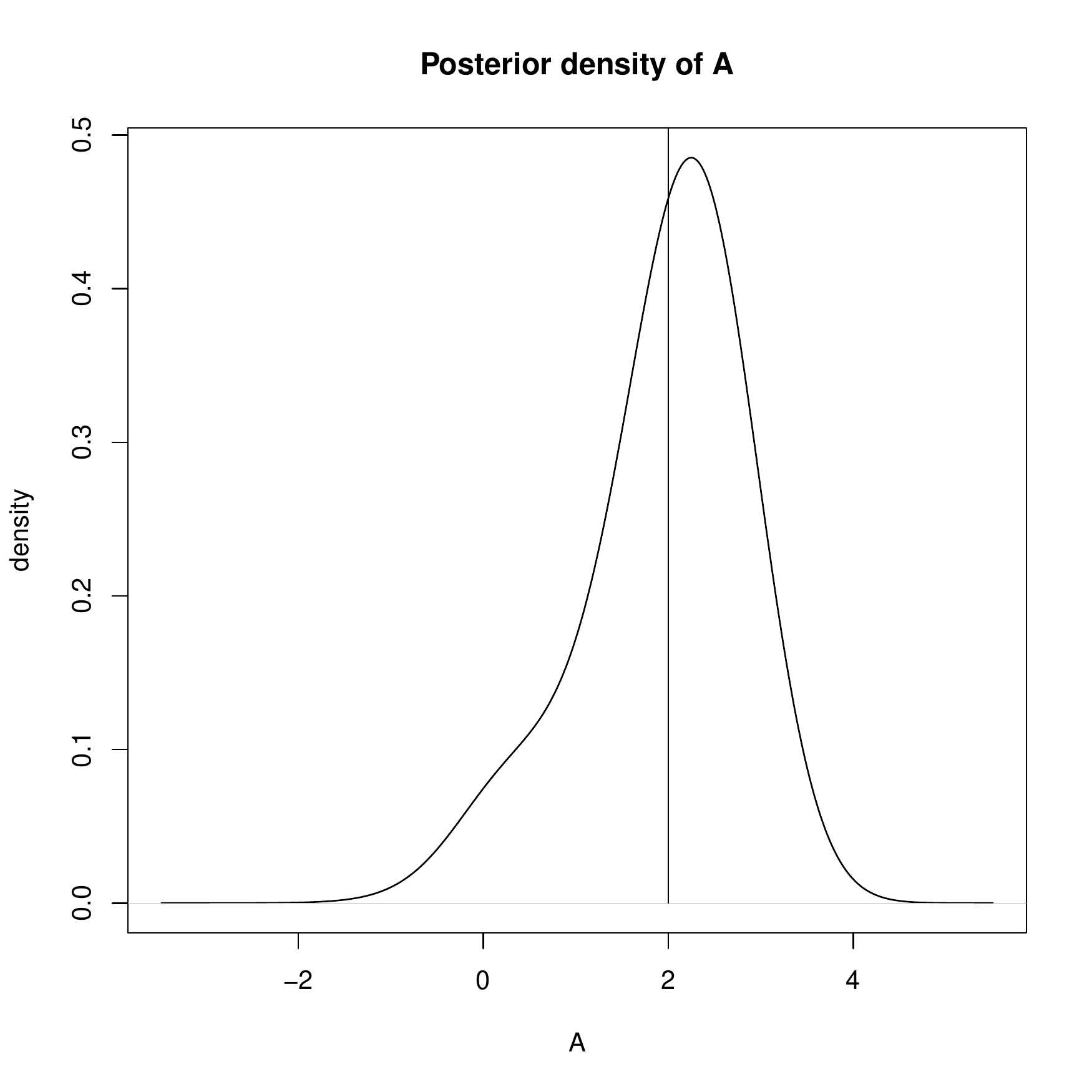}
\includegraphics[height=1.5in,width=1.5in]{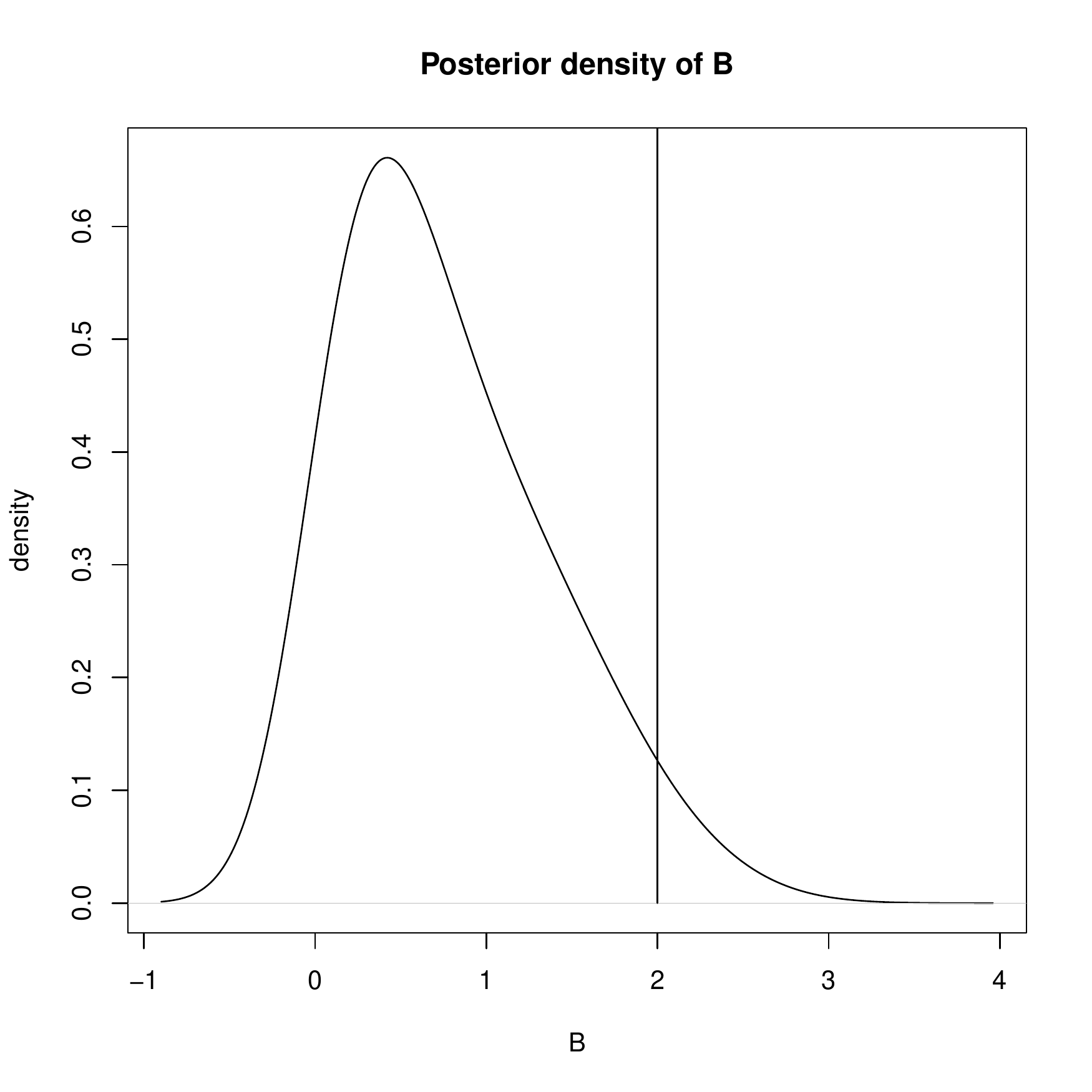}
\includegraphics[height=1.5in,width=1.5in]{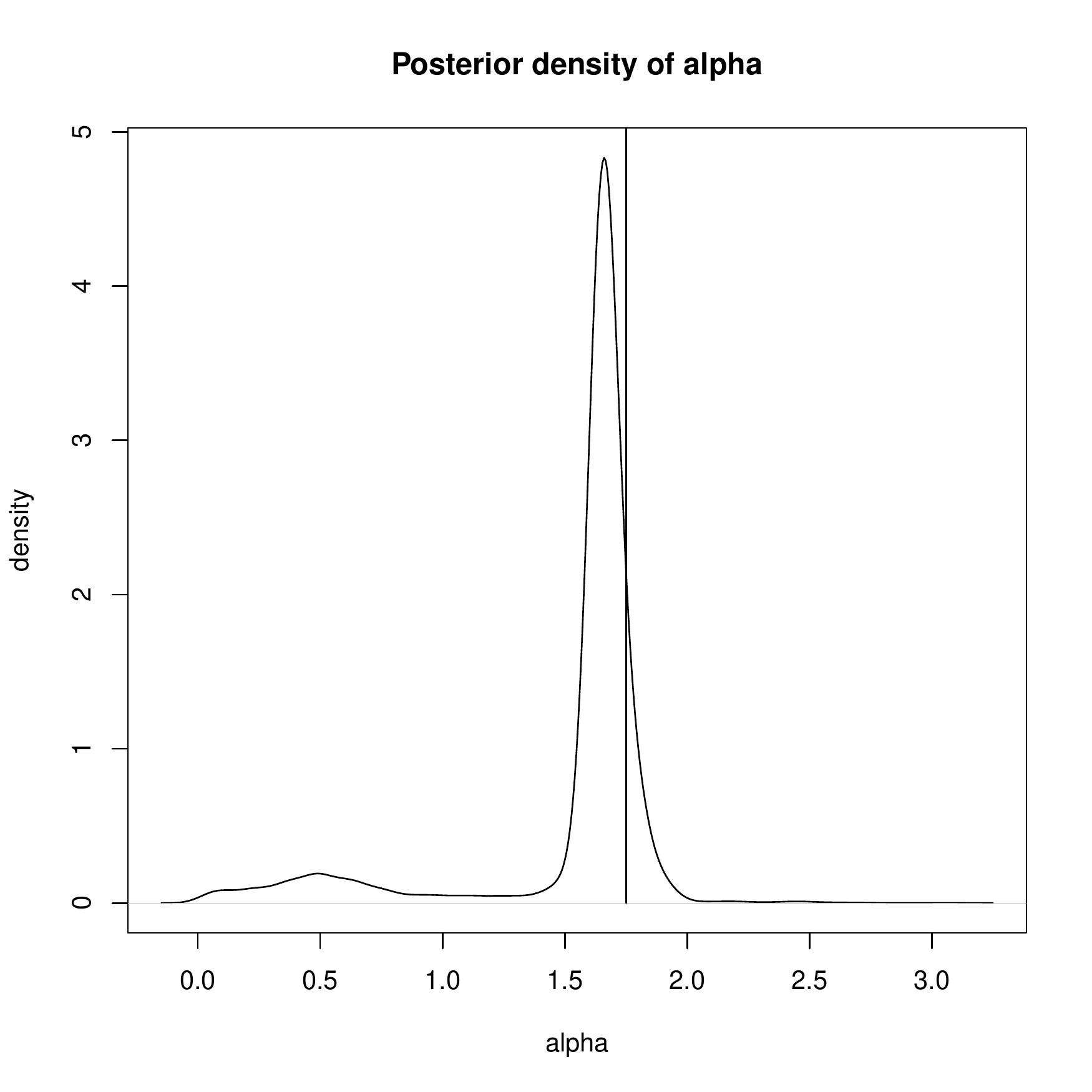}
\includegraphics[height=1.5in,width=1.5in]{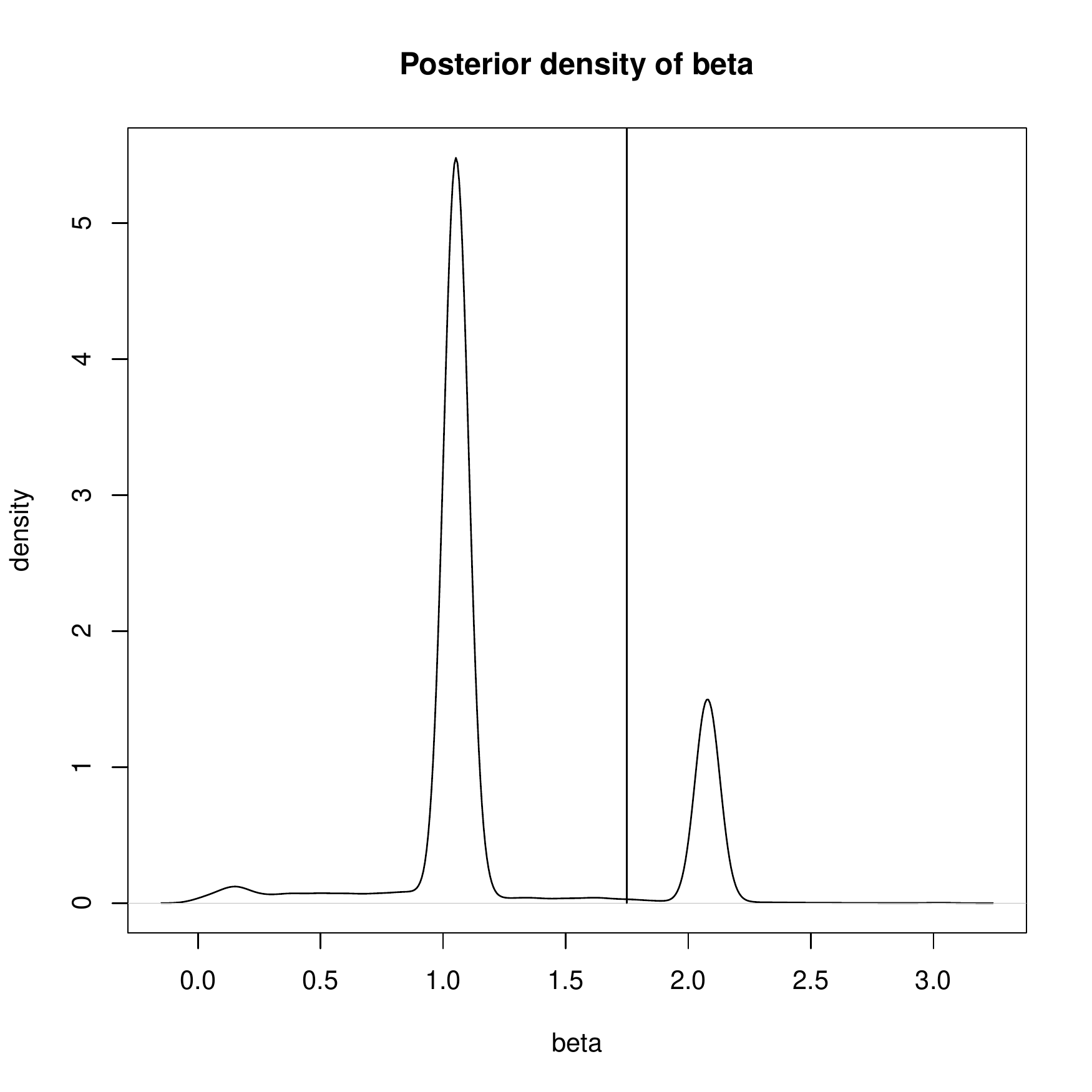}
\includegraphics[height=1.5in,width=1.5in]{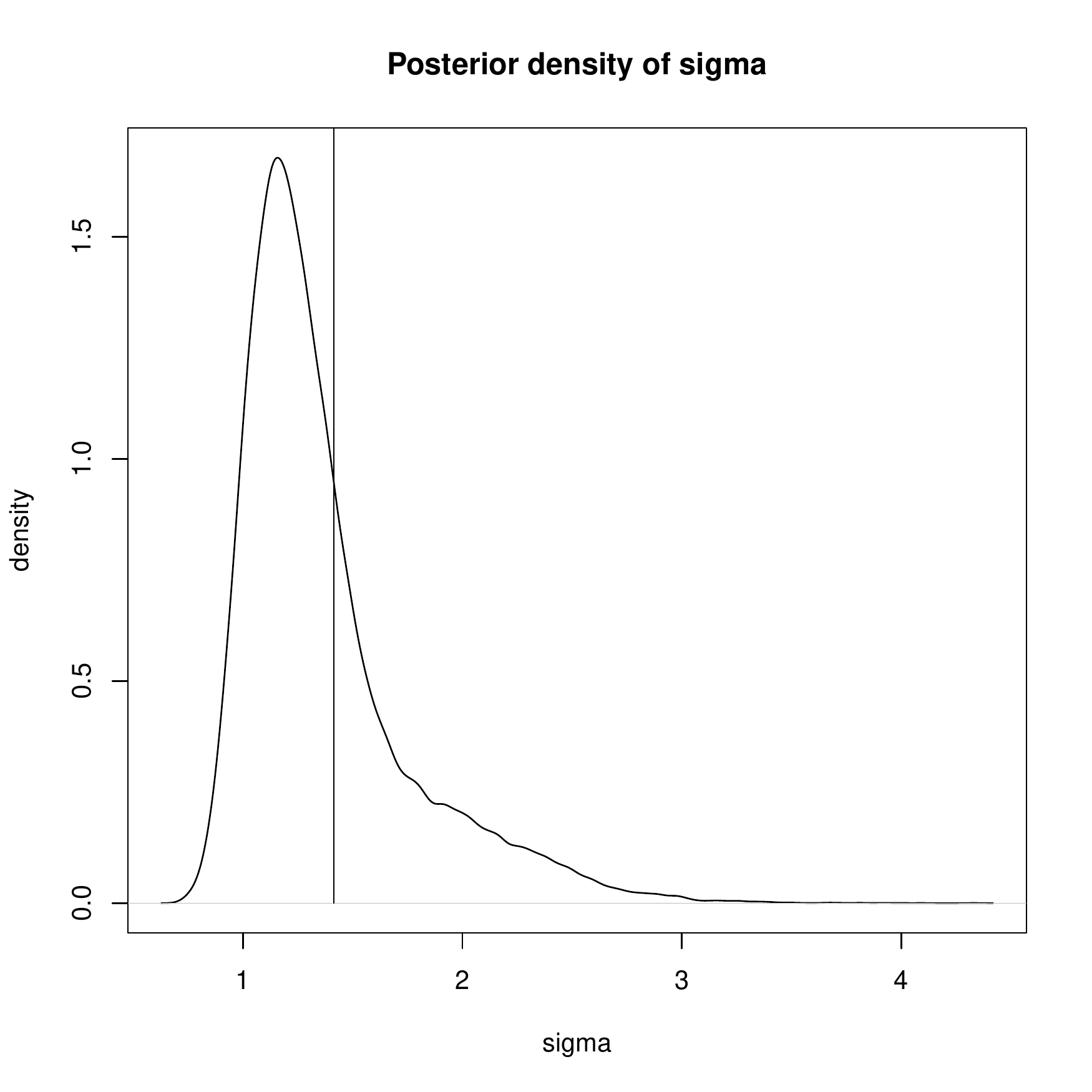}
\caption{Posterior densities of $A$, $B$, $\alpha$, $\beta$ and $\sigma$ for sample 4 of Table 1, where true values are indicated with vertical lines.}
\label{Fig:Post of A,B,alpha,beta, for sample4}
\end{figure}

\begin{figure}[htp]
\centering
\includegraphics[height=1.5in,width=1.5in]{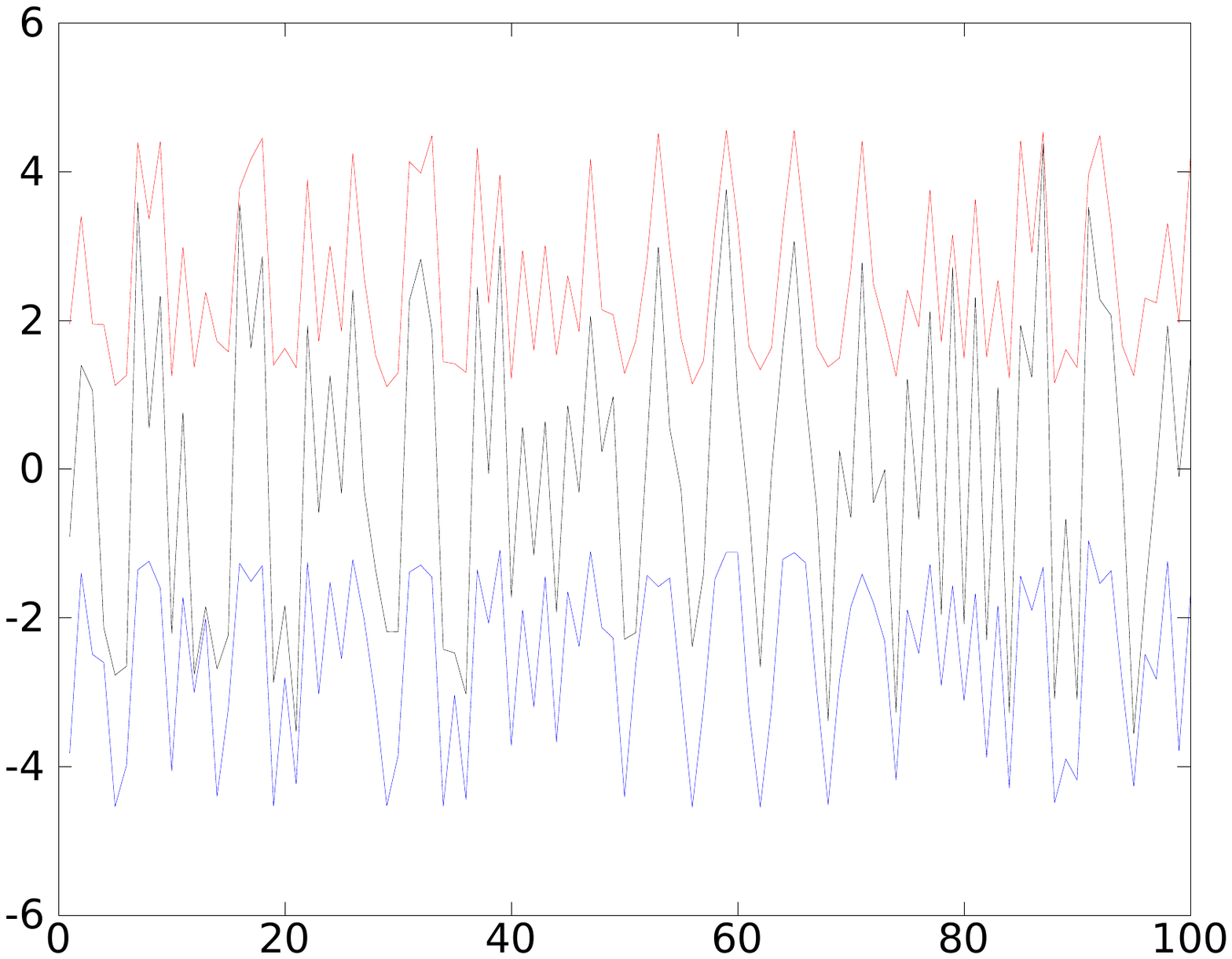}
\includegraphics[height=1.5in,width=1.5in]{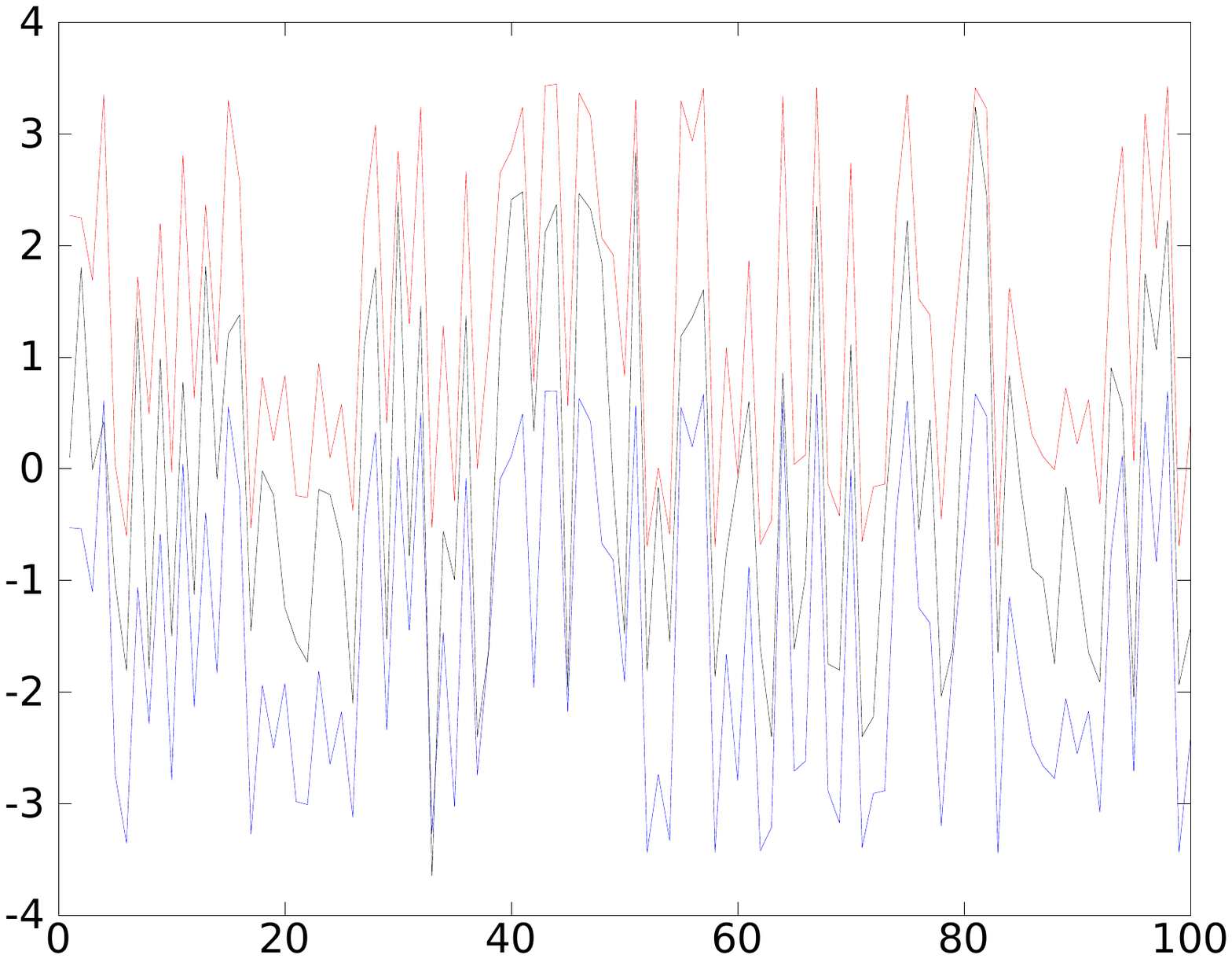}
\includegraphics[height=1.5in,width=1.5in]{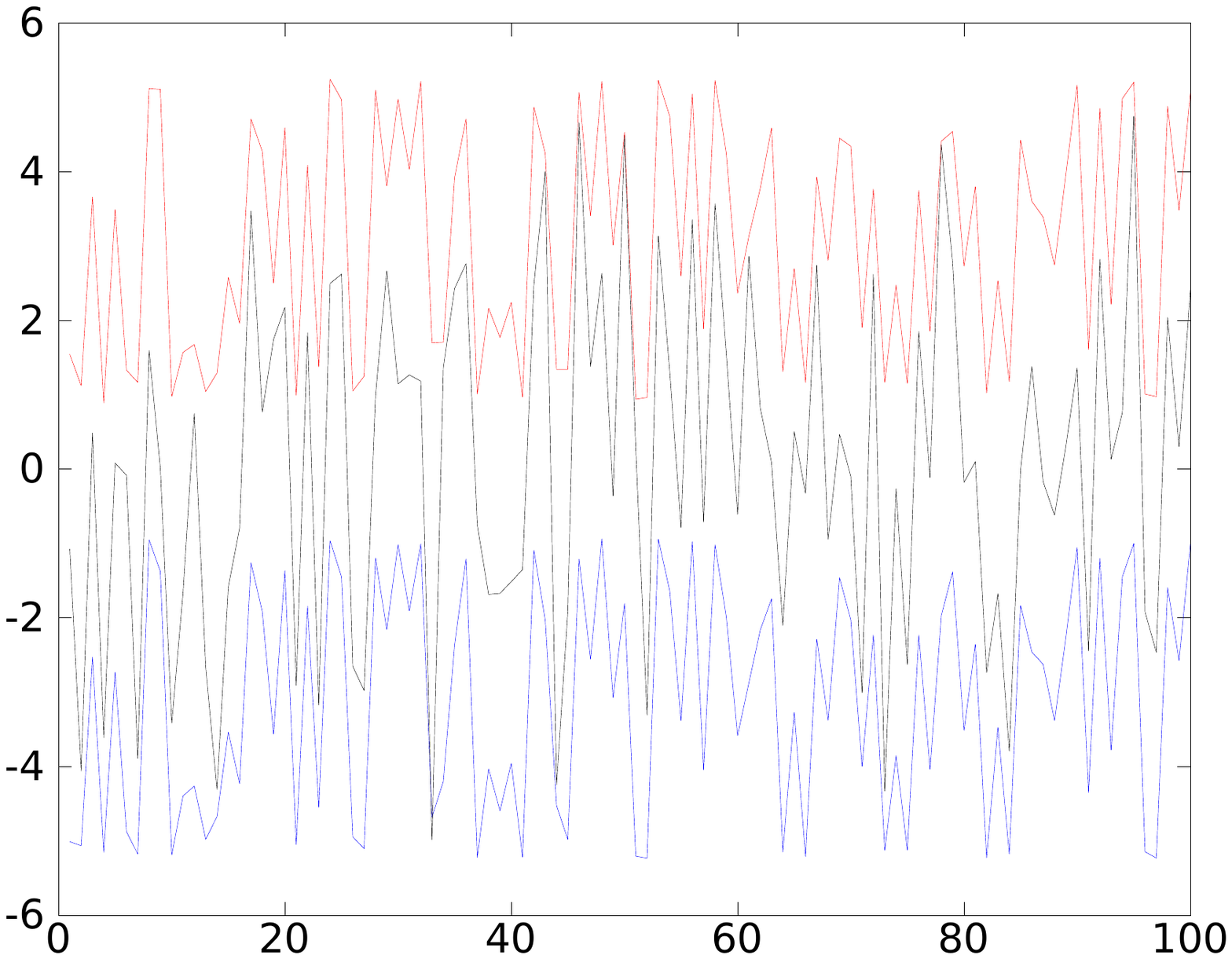}
\includegraphics[height=1.5in,width=1.5in]{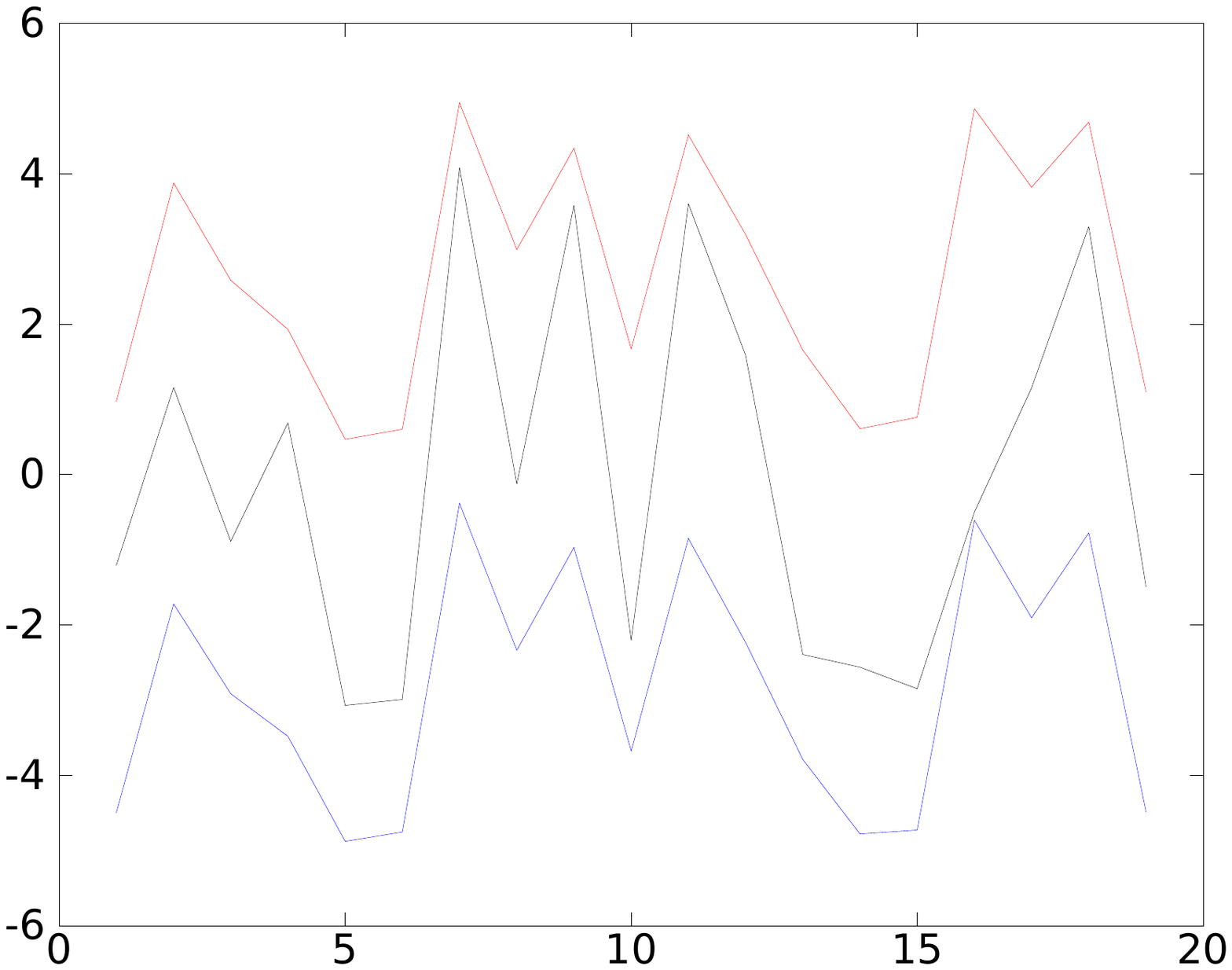}
\caption{95\% credible intervals for observed signal for sample 1, 2, 3 and 4 of table 1, where black line shows the observed signal, blue lines indicate lower 2.5\% and red lines indicate upper 97.5\% signals obtained based on MCMC simulations.}
\label{Fig:Fit for sample1,2,3,4}
\end{figure}

\begin{figure}[htp]
\centering
\includegraphics[height=1.5in,width=1.5in]{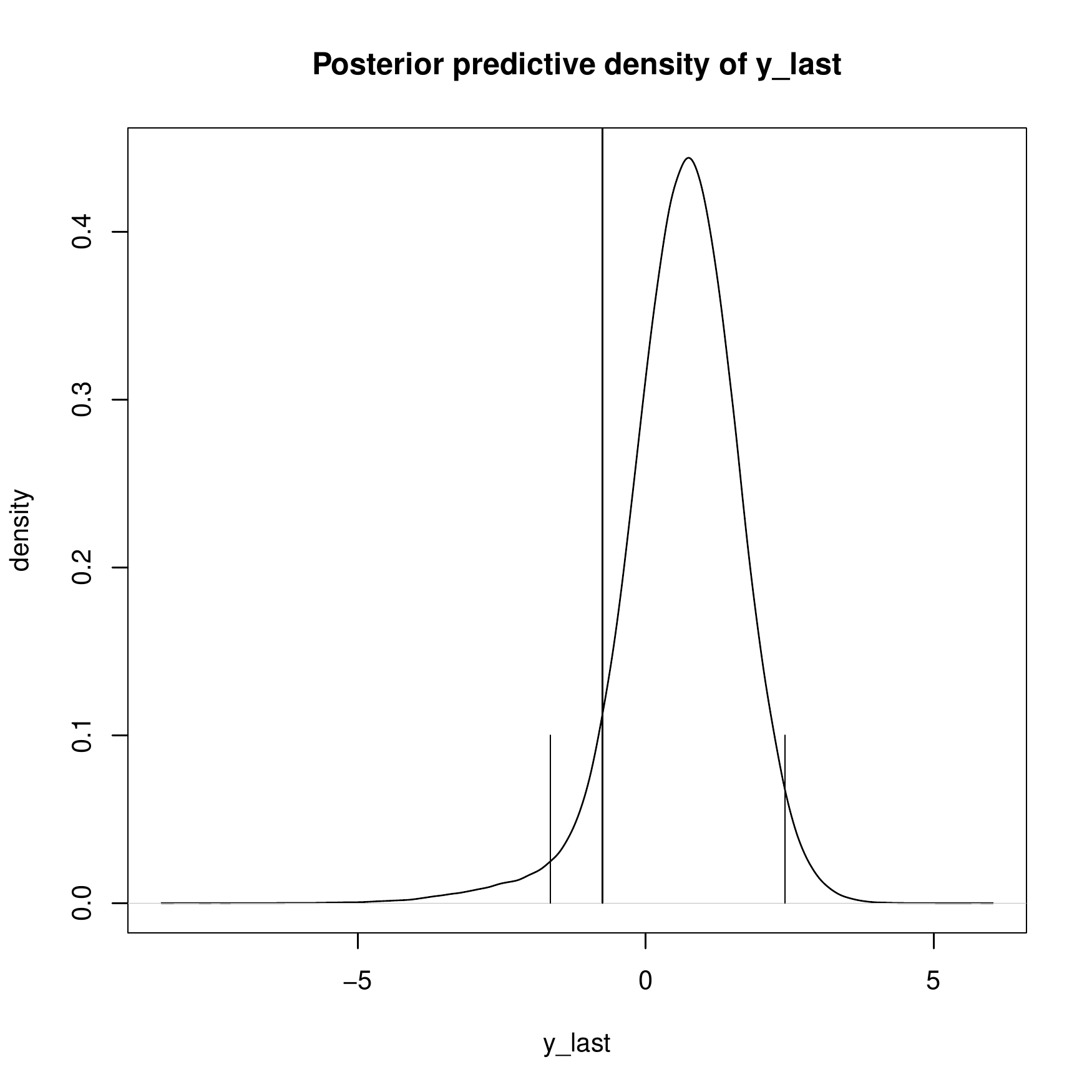}
\includegraphics[height=1.5in,width=1.5in]{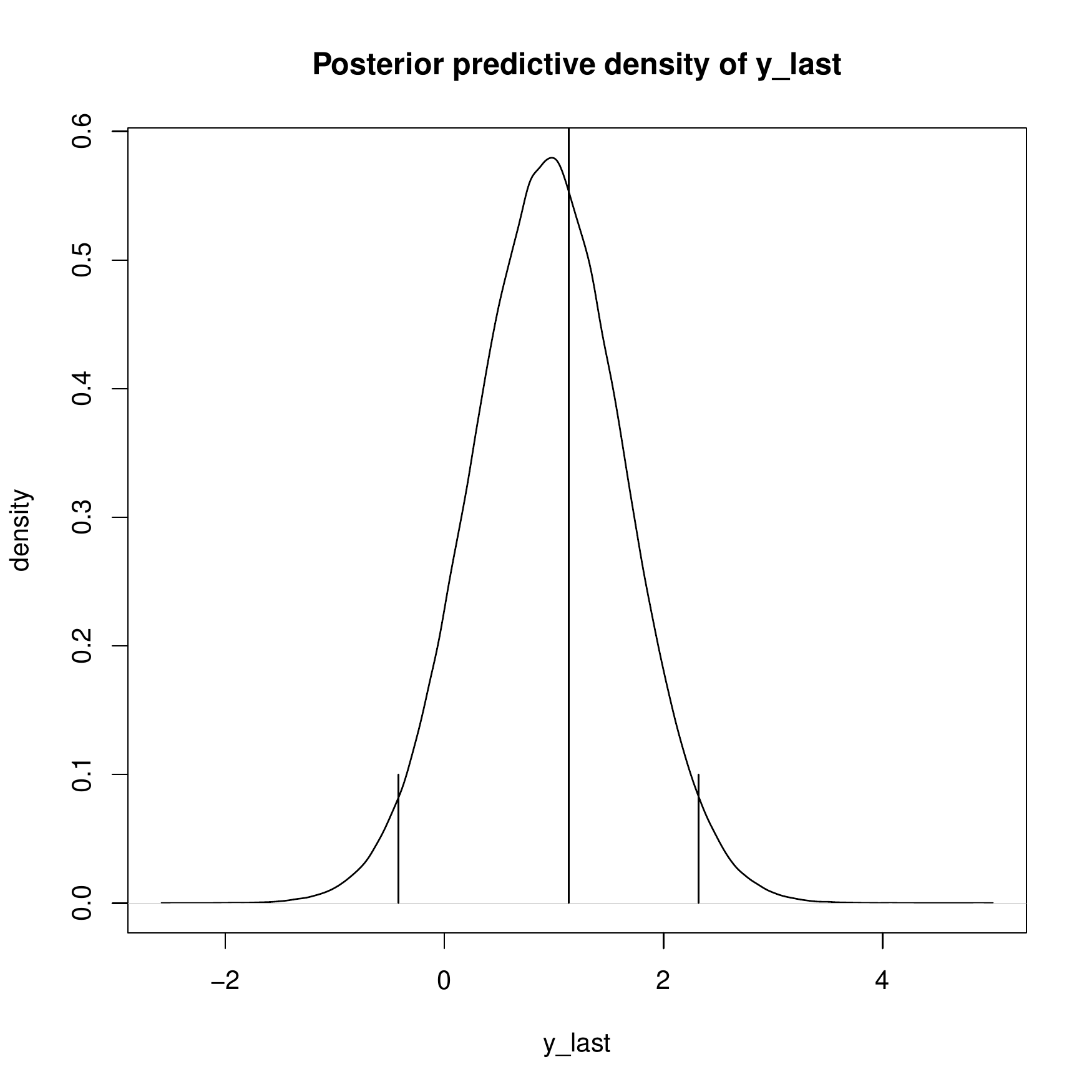}
\includegraphics[height=1.5in,width=1.5in]{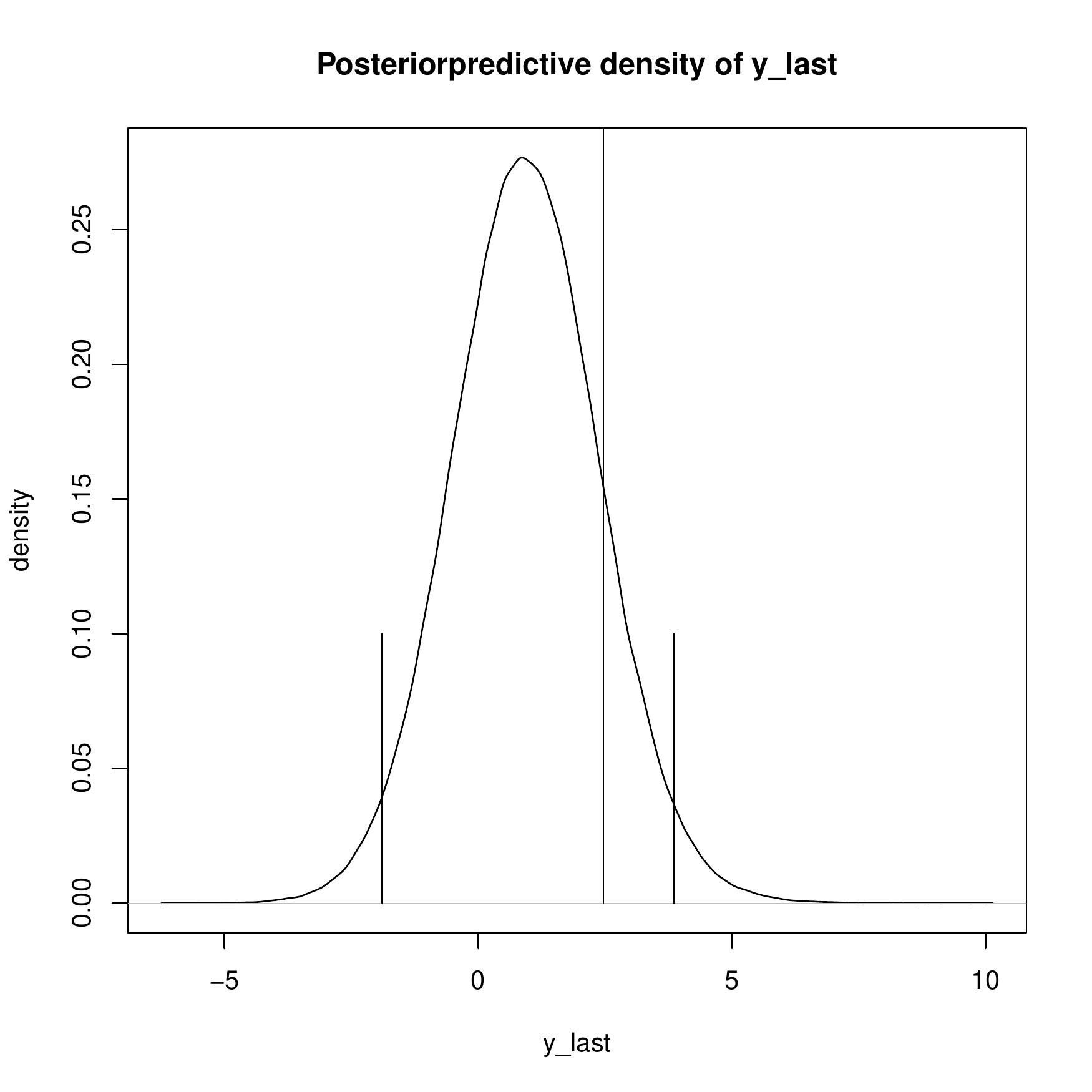}
\includegraphics[height=1.5in,width=1.5in]{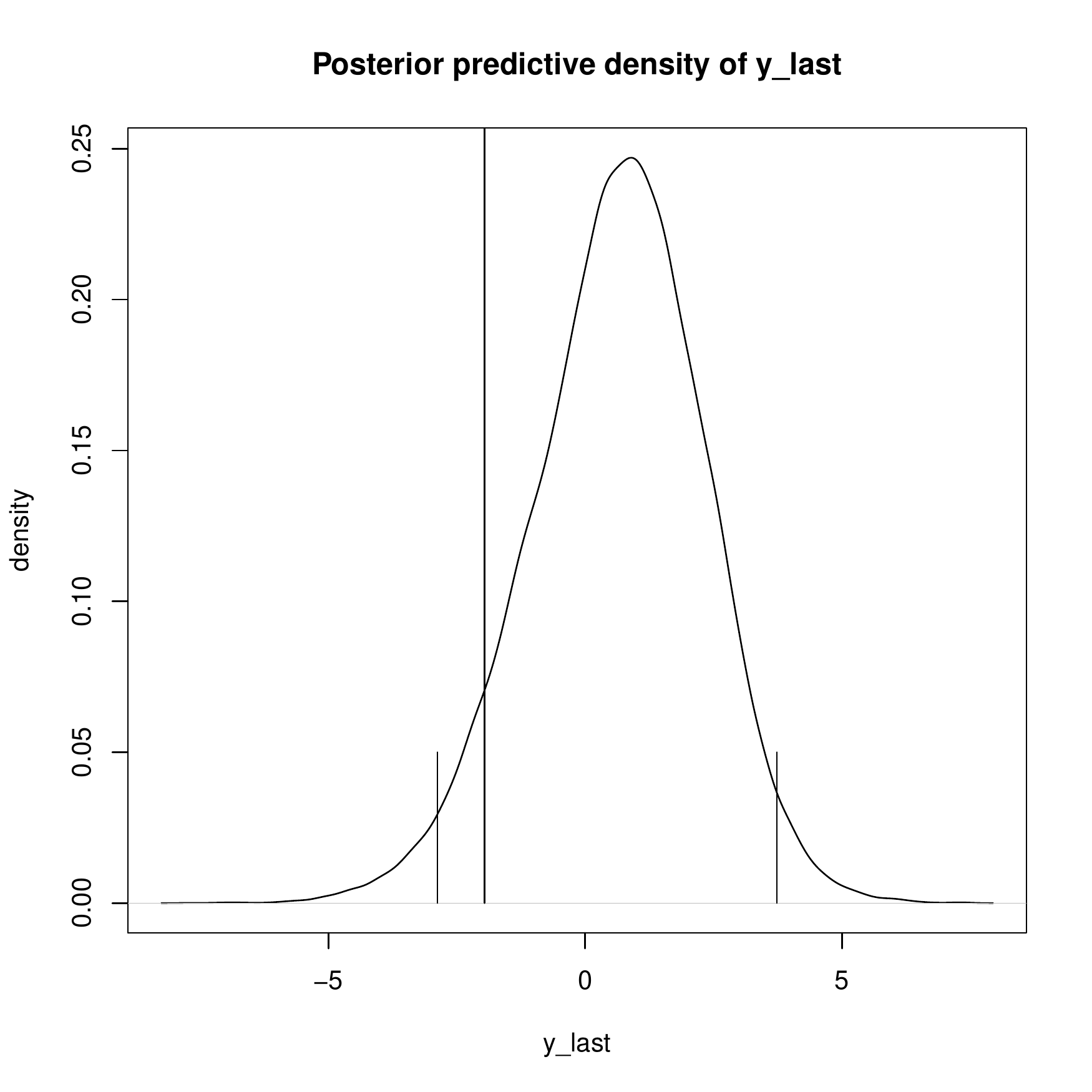}
\caption{Posterior predictive densities of $101$th observations for sample 1,2,3 and that of $20$th observation for sample 4 of Table 1, where true values are indicated with long vertical lines and 95\% credible intervals are shown with short vertical lines.}
\label{Fig:Post predictive for sample1,2,3,4}
\end{figure}

\begin{figure}[htp]
\centering
\includegraphics[height=2.5in,width=2.5in]{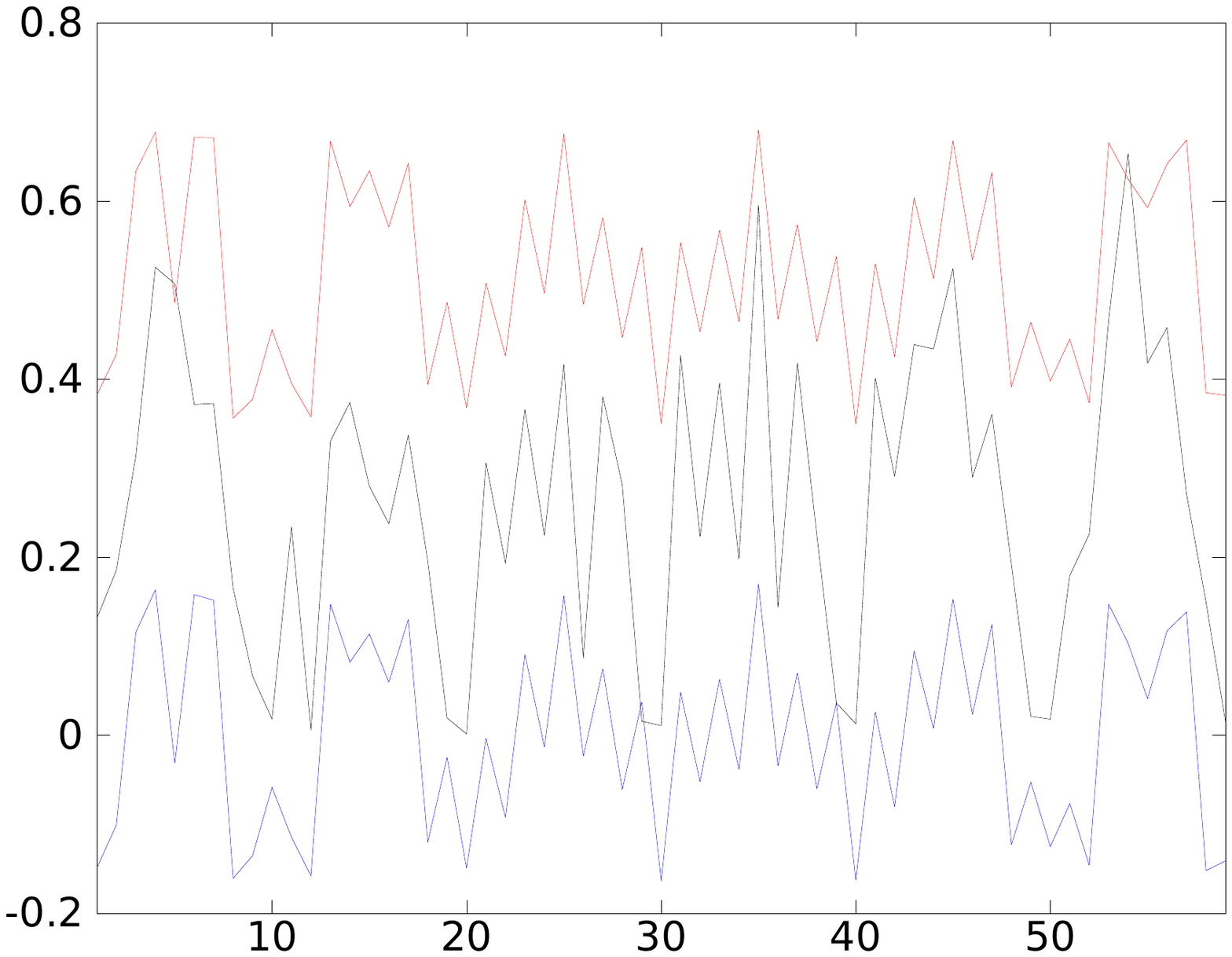}
\includegraphics[height=2.5in,width=2.5in]{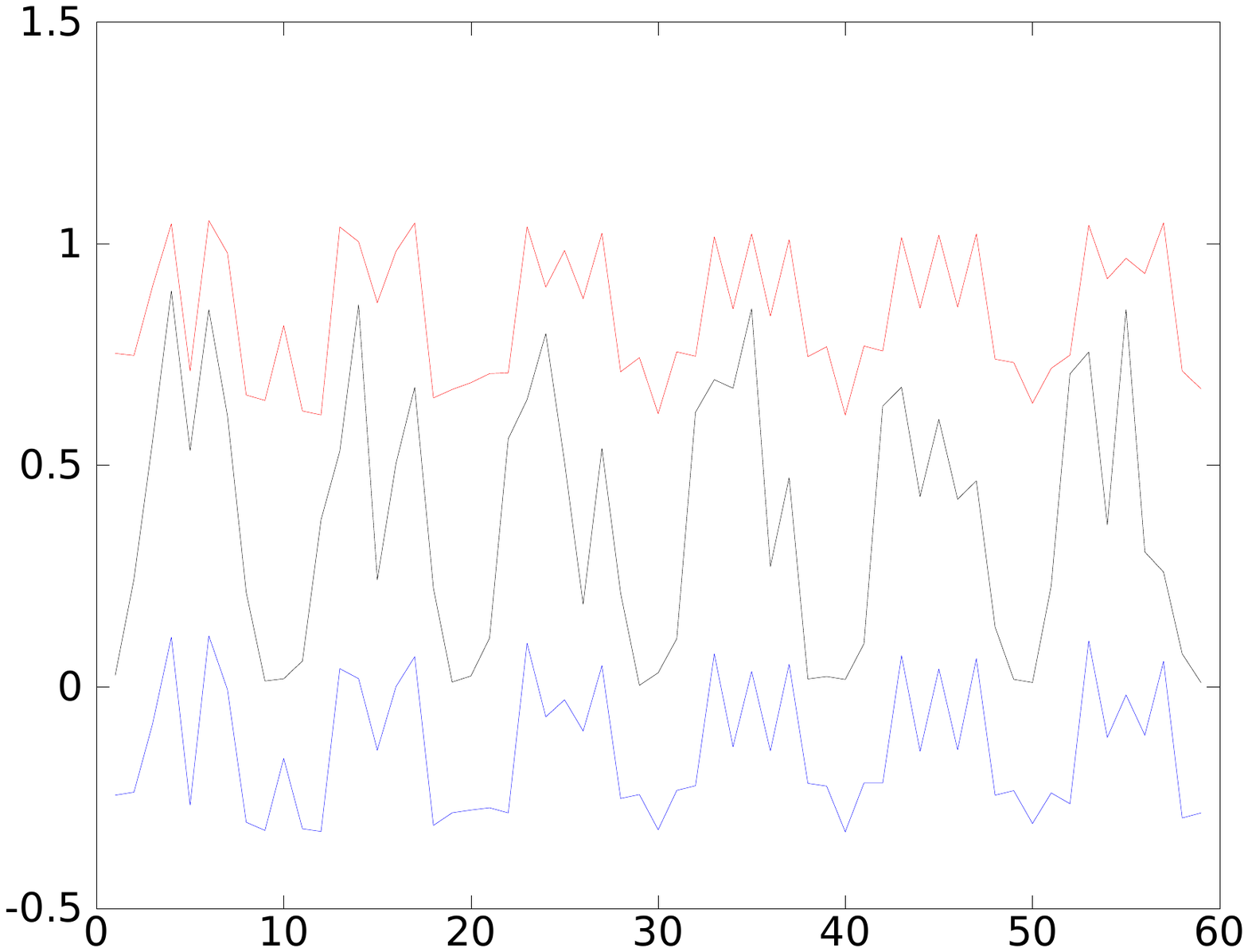}
\caption{95\% Credible interval for true signals of the sonar mine signal and the first sonar rock data, where black line shows the observed signal, blue lines indicate lower 2.5\% and red lines indicate upper 97.5\% signals obtained based on MCMC simulations.}
\label{Fig:Sonar_data_mines_1_2_rocks_3_fit}
\end{figure}

\begin{figure}[htp]
\centering
\includegraphics[height=2.5in,width=2.5in]{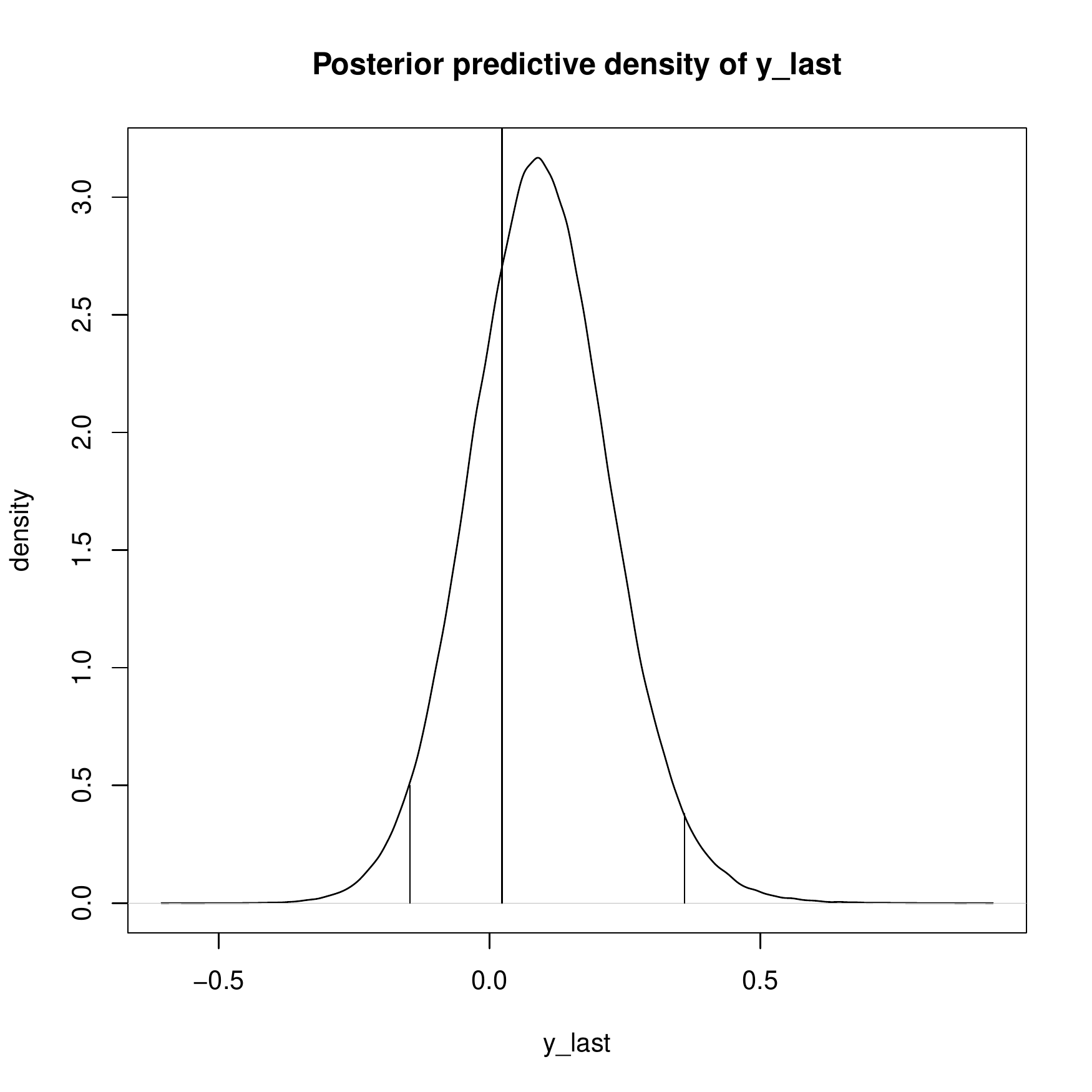}
\includegraphics[height=2.5in,width=2.5in]{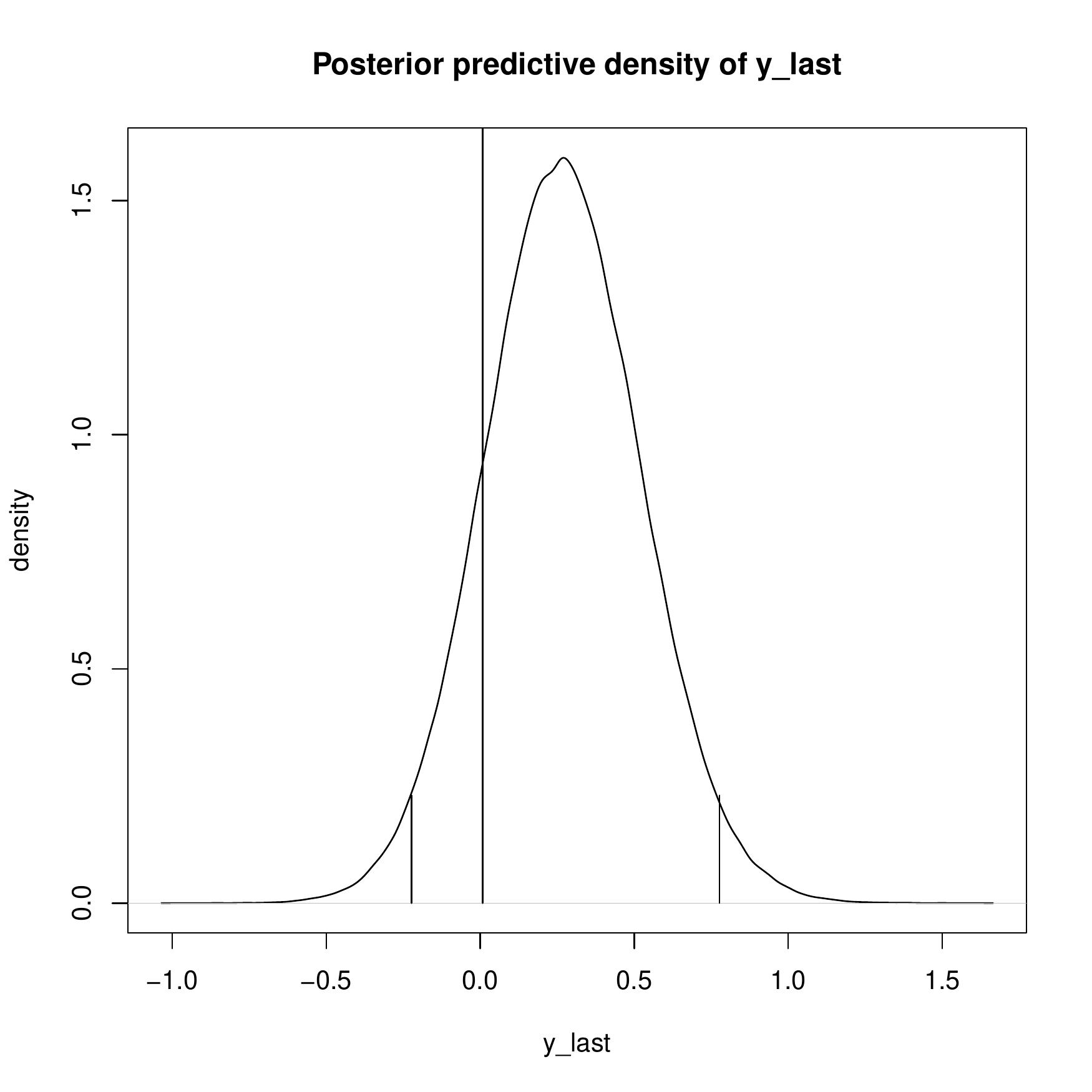}
\caption{Posterior predictive 60th observations for the sonar mine signal and the first sonar rock signal, where true values are indicated with long vertical lines and 95\% credible intervals are shown with short vertical lines.}
\label{Fig:Sonar_data_1_2_posterior_predictive_rocks_3_predictive}
\end{figure}

\begin{figure}[htp]
\centering
\includegraphics[height=1.5in,width=1.5in]{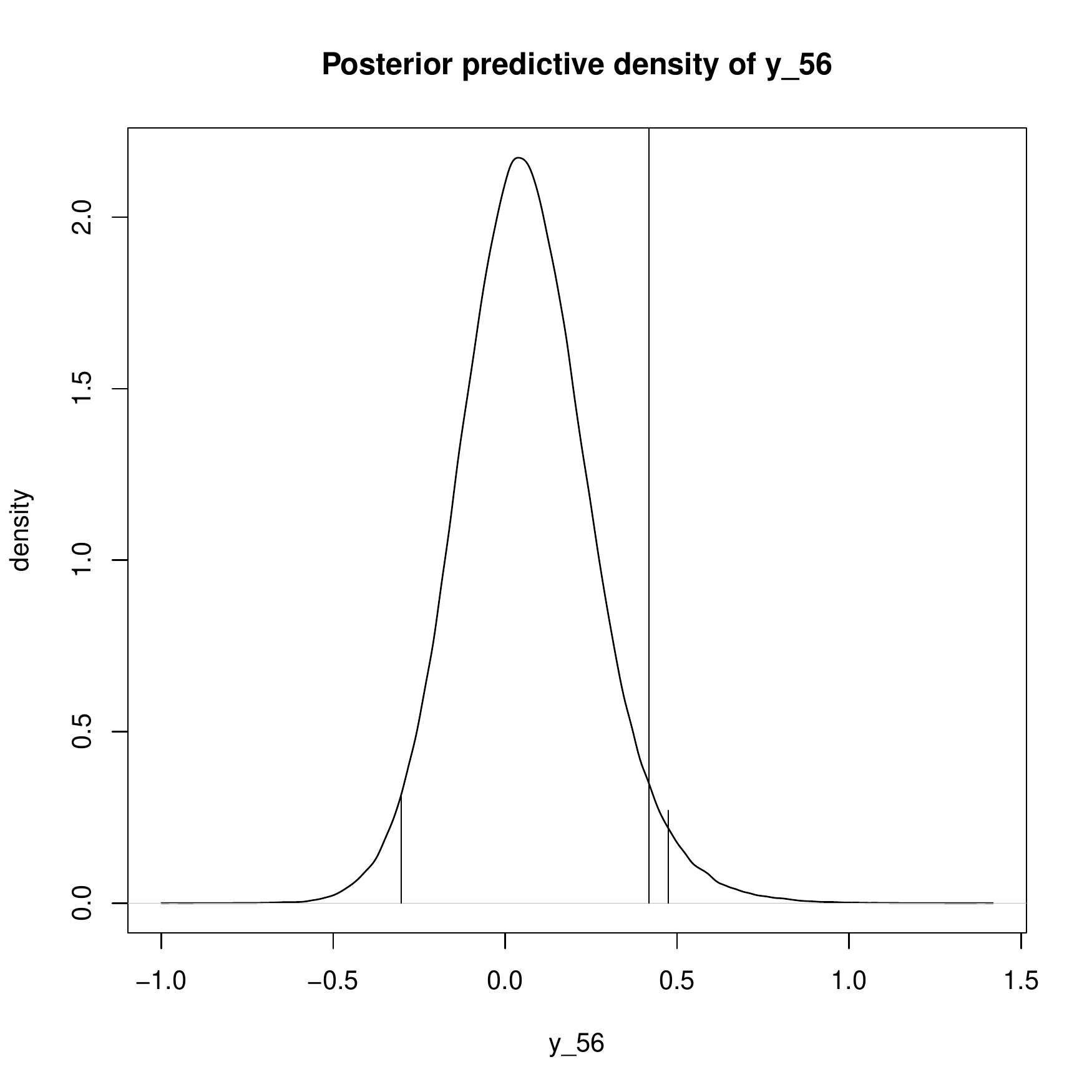}
\includegraphics[height=1.5in,width=1.5in]{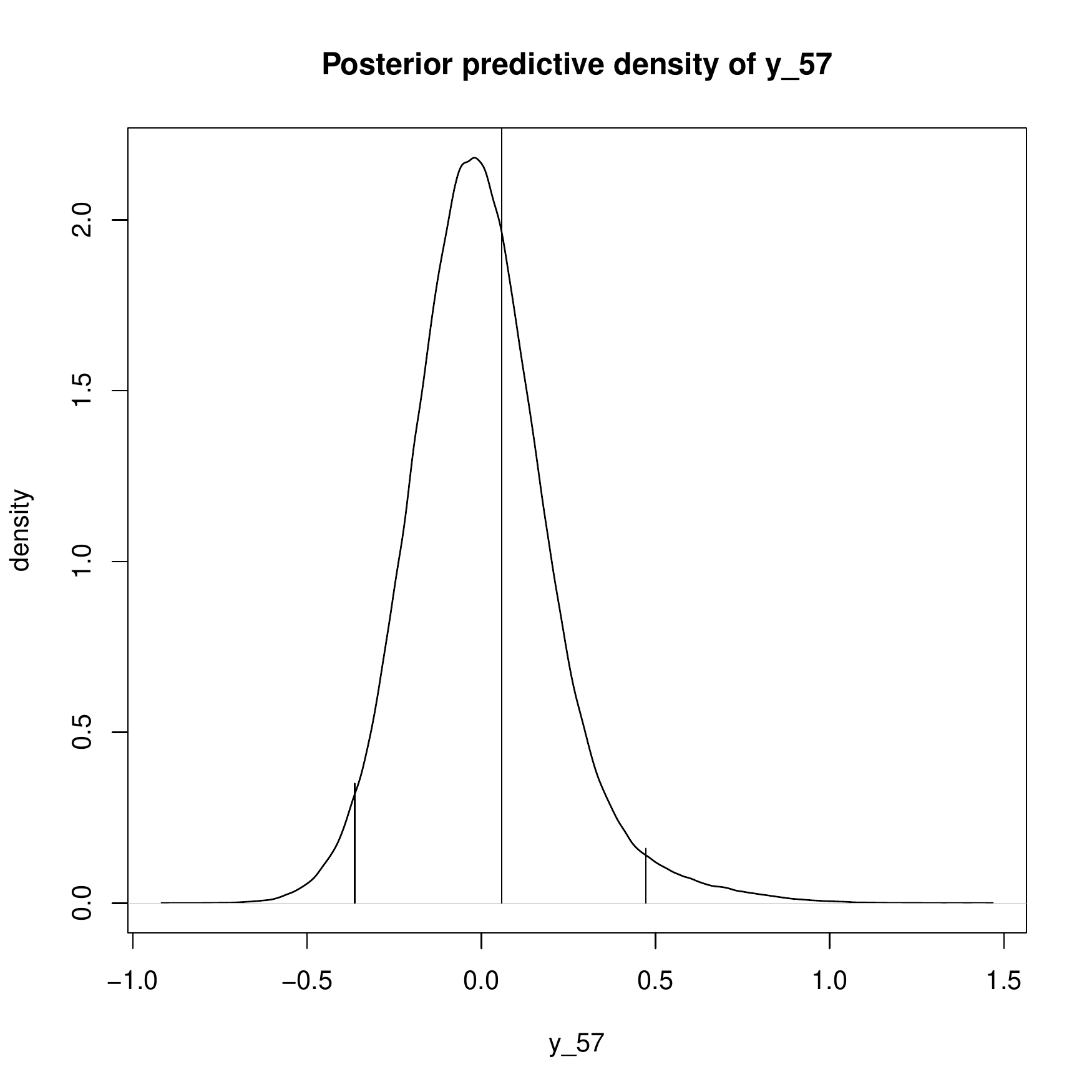}
\includegraphics[height=1.5in,width=1.5in]{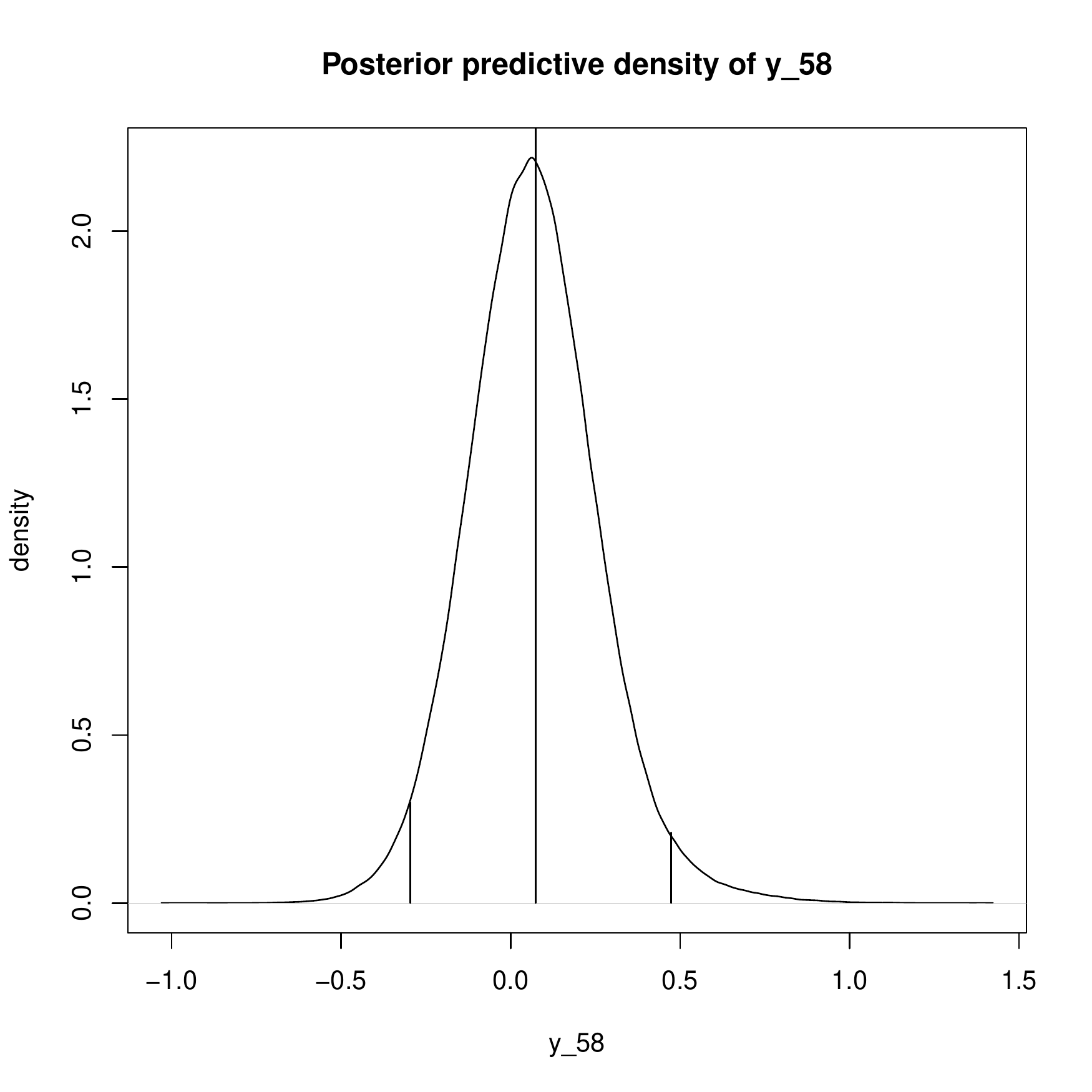}
\caption{Posterior predictive of 56th, 57th and 58th observations for the second sonar rocks signal, where true values are indicated with long vertical lines and 95\% credible intervals are shown with short vertical lines.}
\label{Fig:Sonar rocks predictions_1st three}
\end{figure}

\begin{figure}[htp]
\centering
\includegraphics[height=1.5in,width=1.5in]{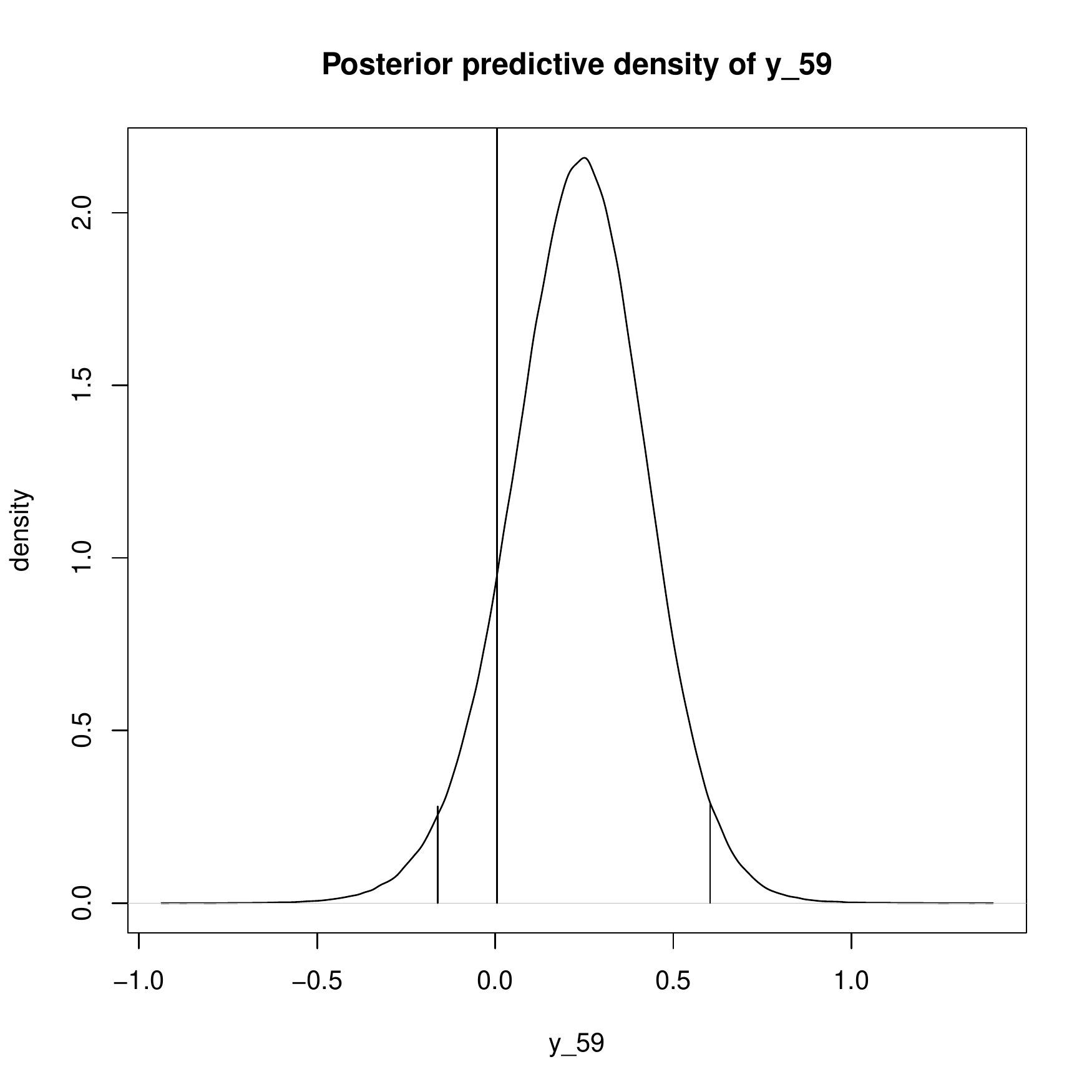}
\includegraphics[height=1.5in,width=1.5in]{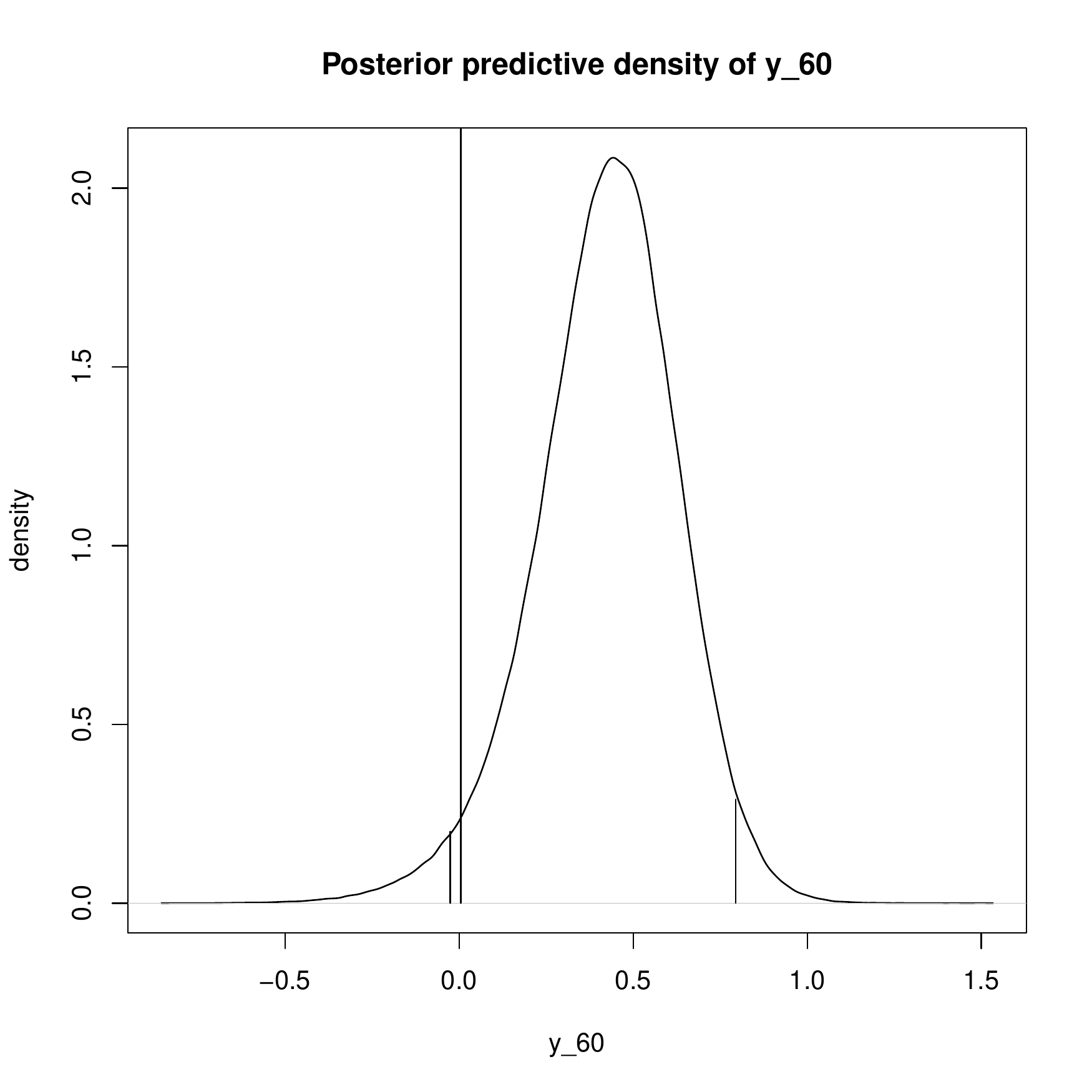}
\caption{Posterior predictive of 59th and 60th observations for the second sonar rock signal, true values are indicated with long vertical lines and 95\% credible intervals are shown with short vertical lines.}
\label{Fig: Sonar rocks predictions_last two}
\end{figure}

\begin{figure}[htp]
\centering
\includegraphics[height=1.5in,width=1.5in]{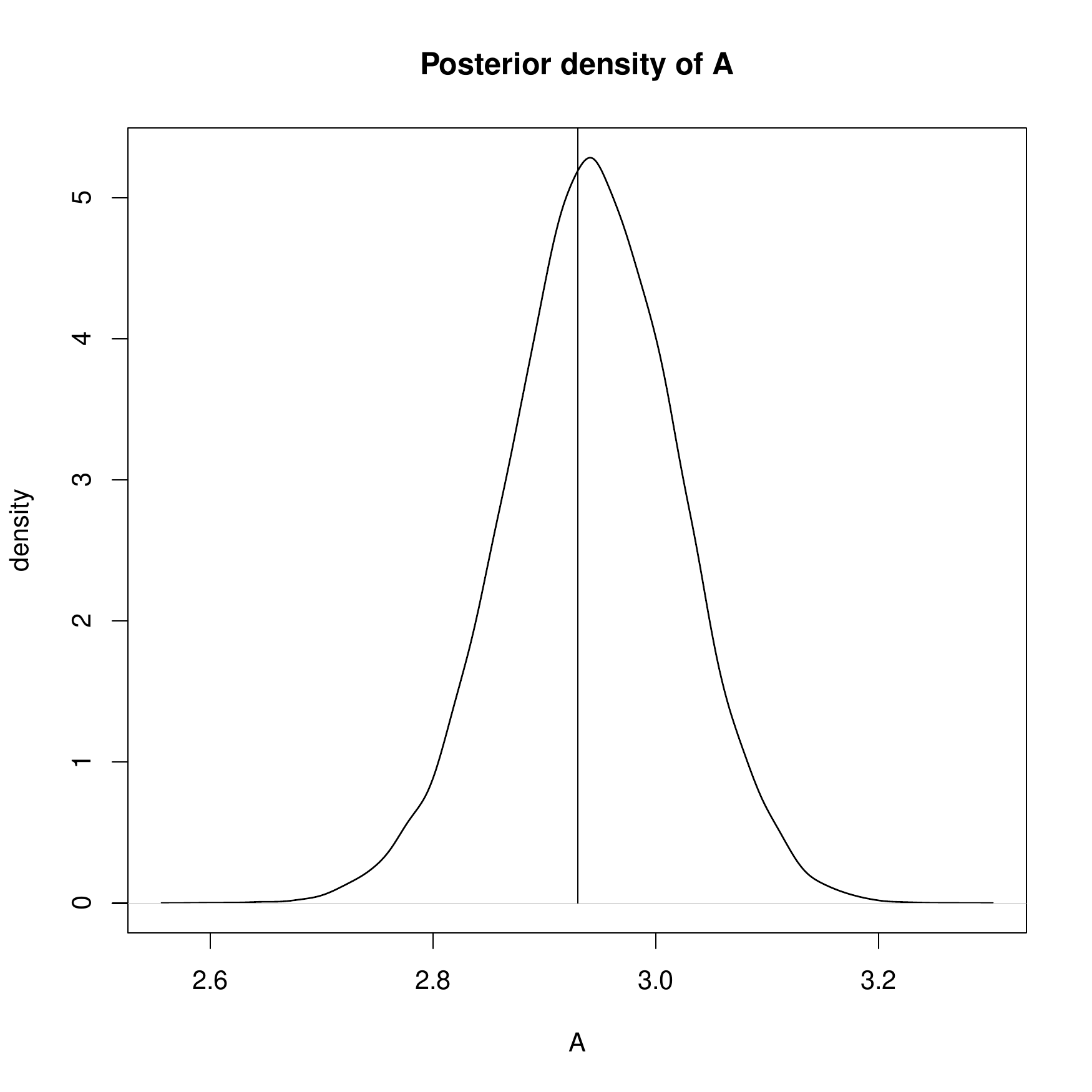}
\includegraphics[height=1.5in,width=1.5in]{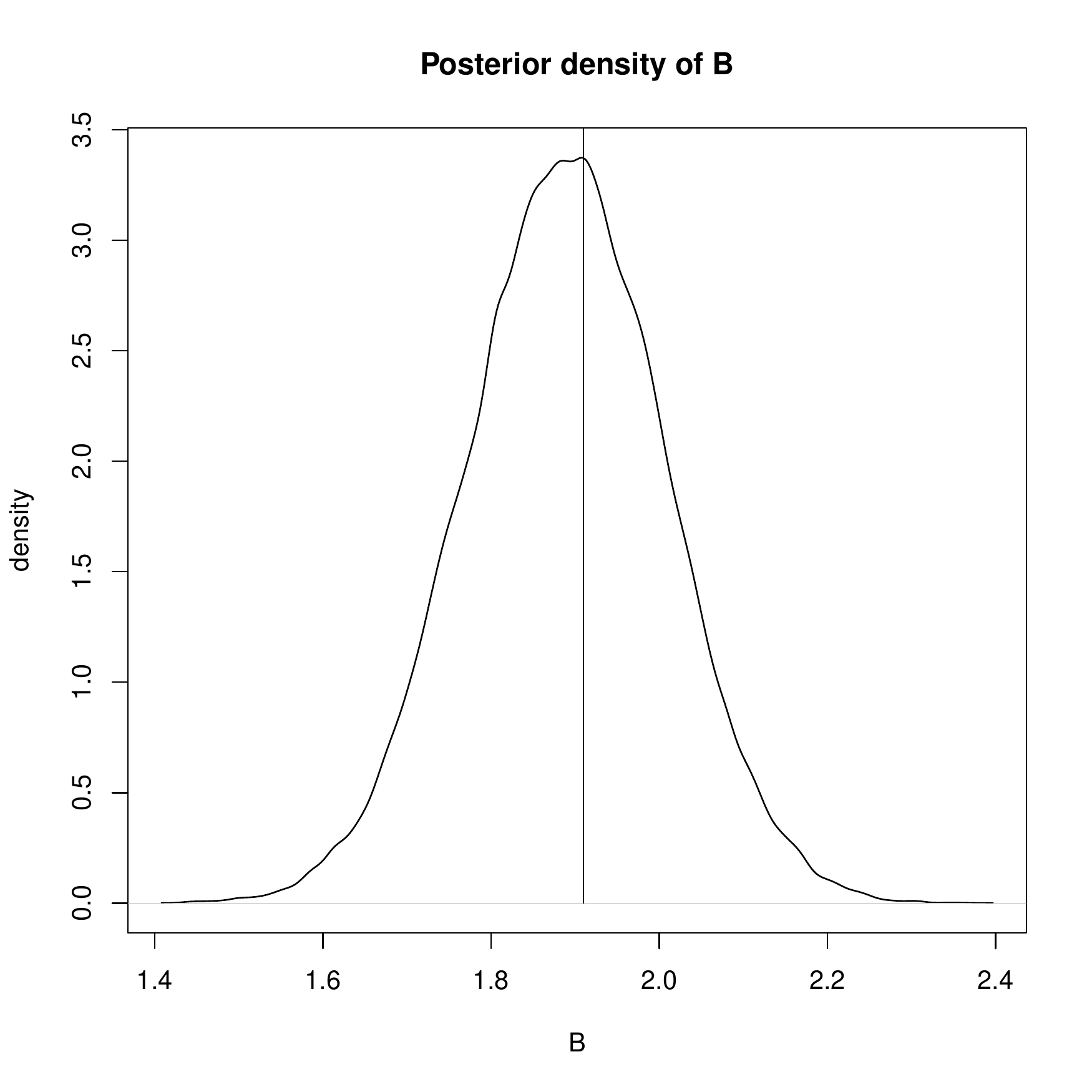}
\includegraphics[height=1.5in,width=1.5in]{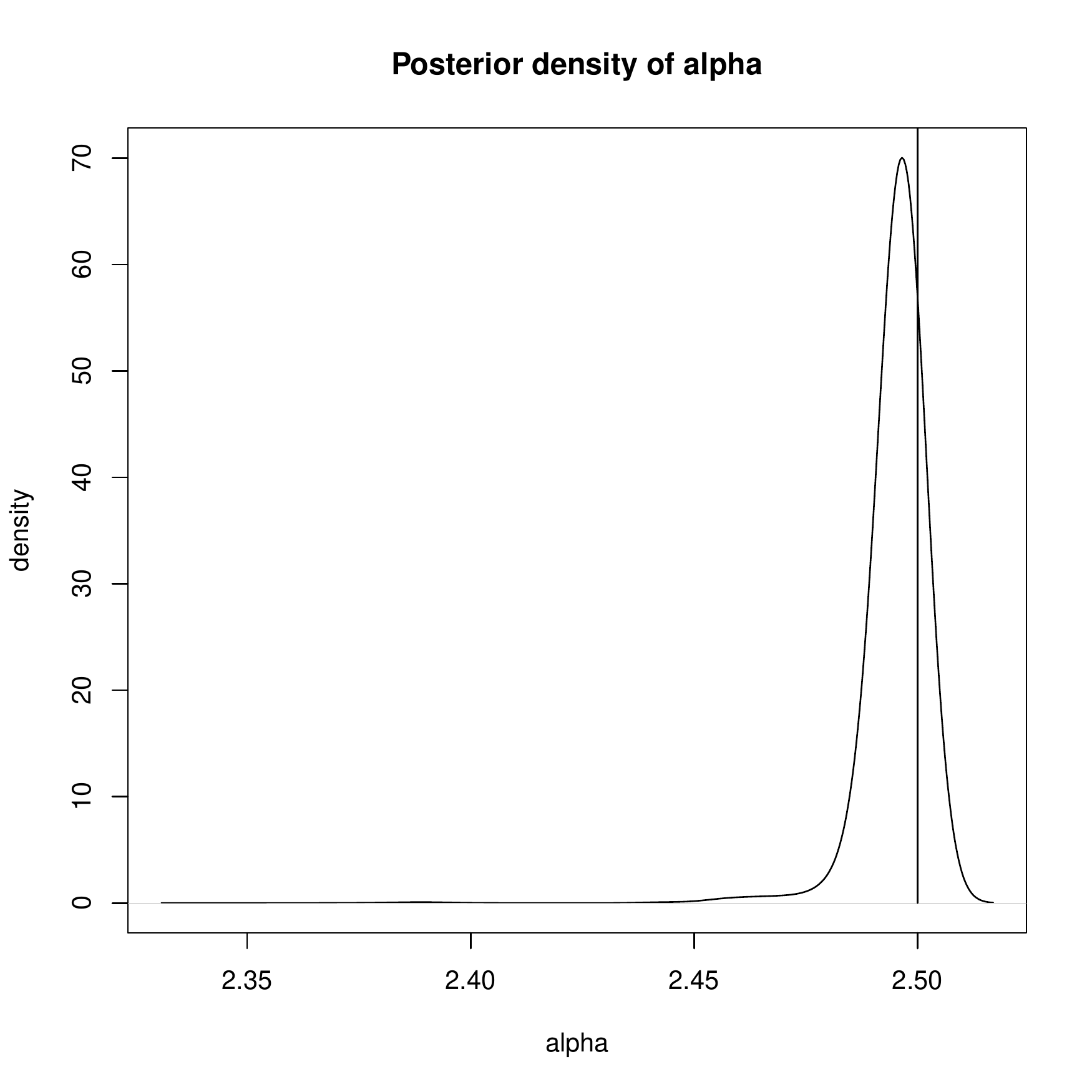}
\includegraphics[height=1.5in,width=1.5in]{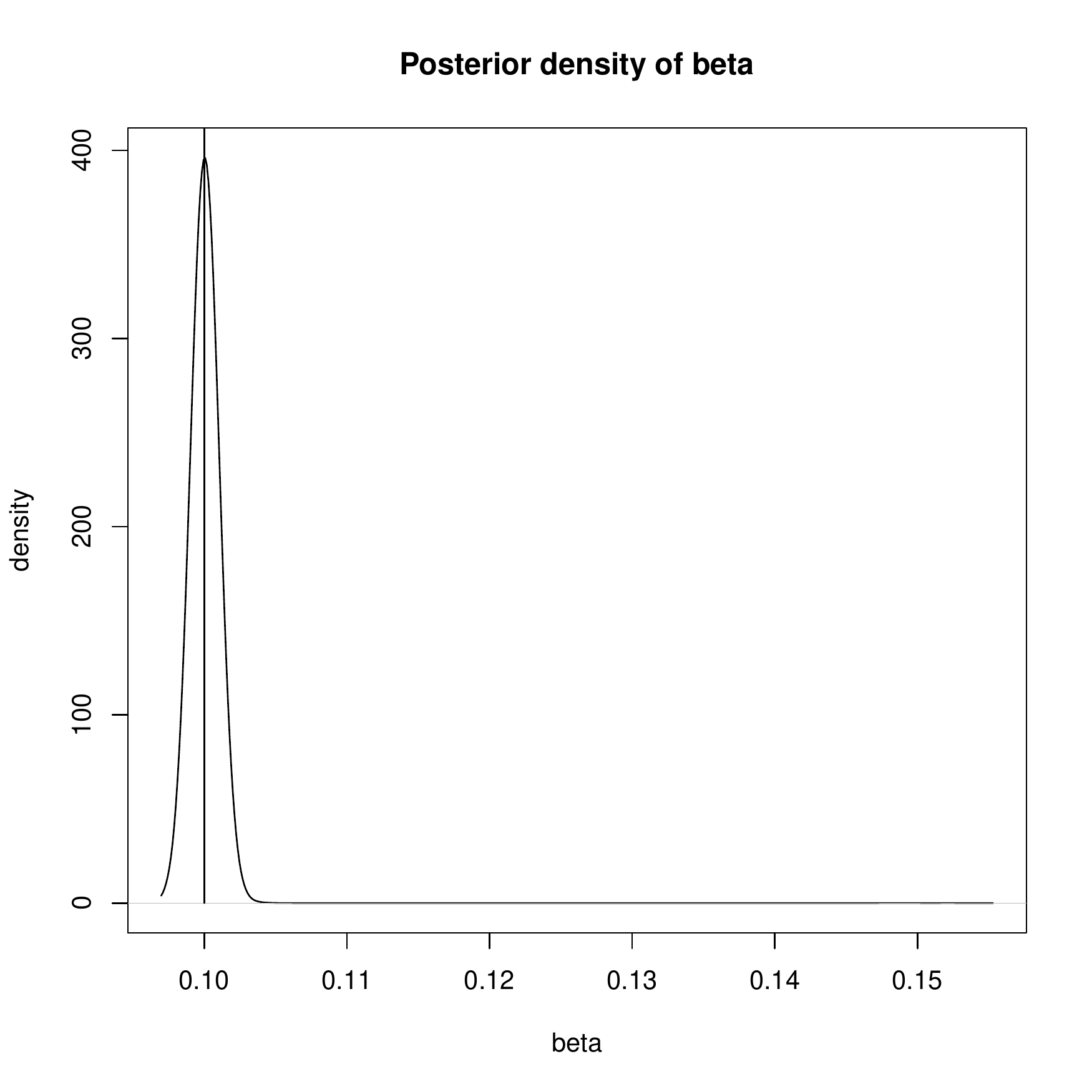}
\includegraphics[height=1.5in,width=1.5in]{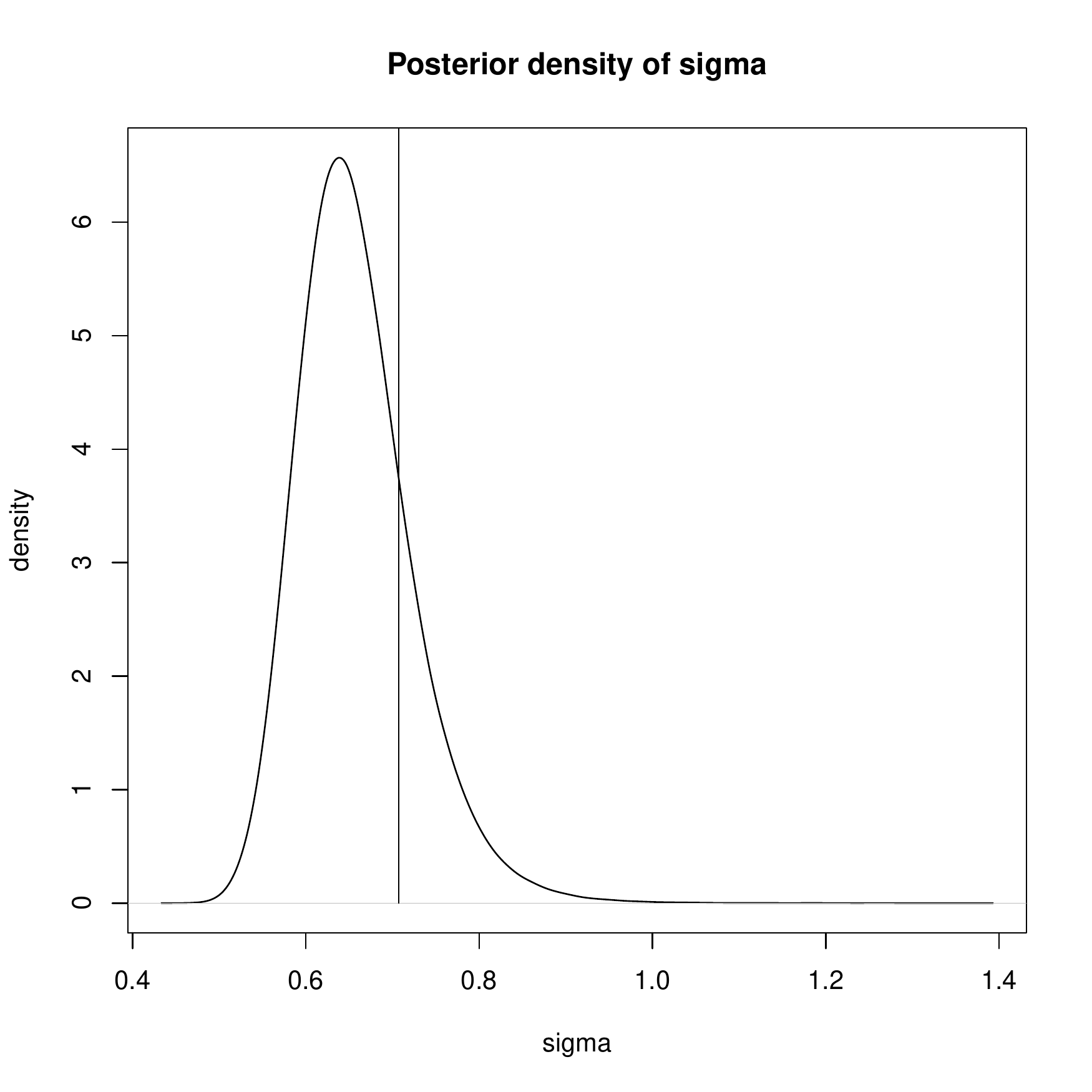}
\includegraphics[height=1.5in,width=1.5in]{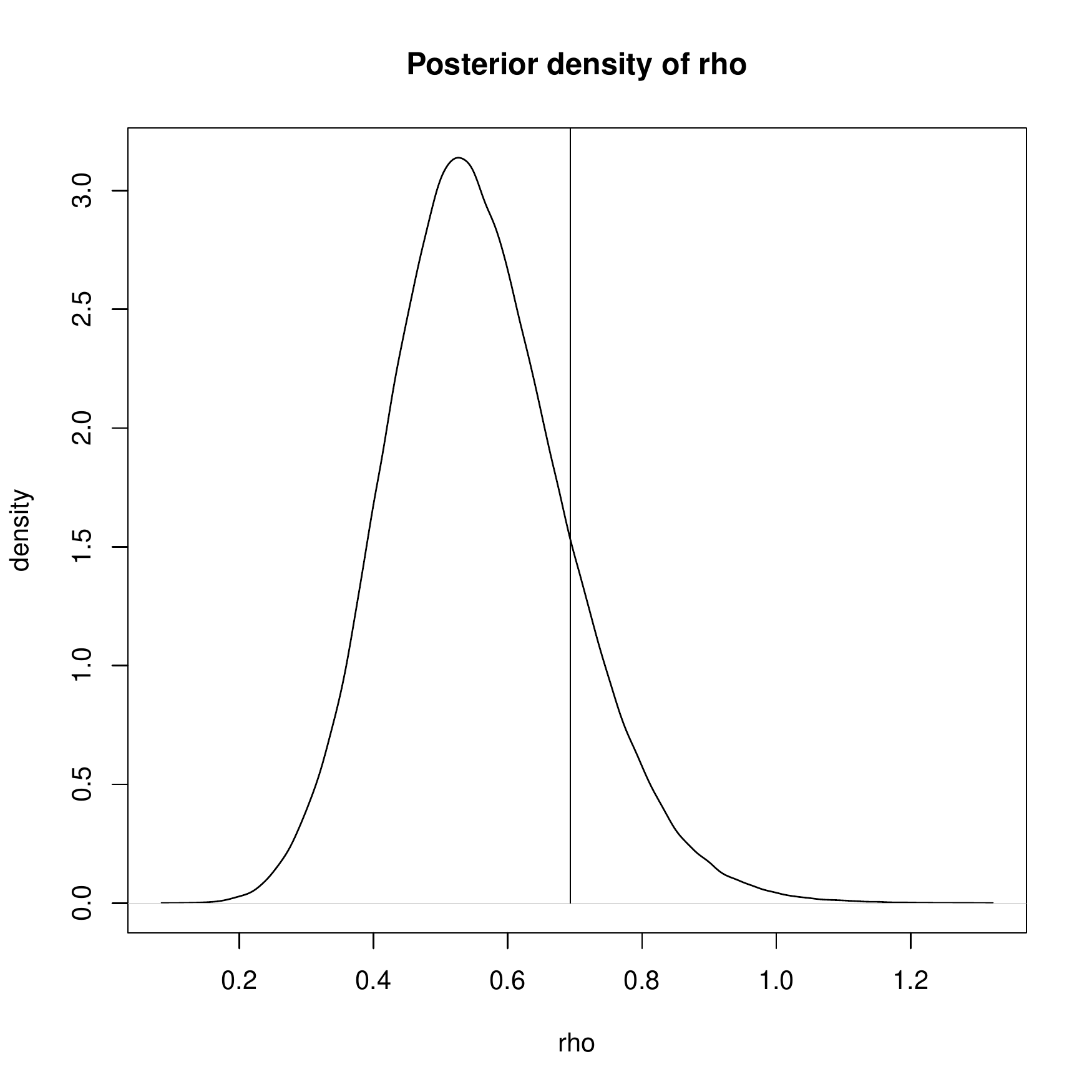}
\caption{Posterior densities of $A$, $B$, $\alpha$, $\beta$, $\sigma$ and $\rho$ for the numerical example of dependent error with exponentially decaying covariance structure, where true values are indicated with vertical lines.}
\label{Fig:Post of A,B,alpha,beta,sigma,rho for dependent sample}
\end{figure}

\begin{figure}[htp]
\includegraphics[height=2.5in,width=2.5in]{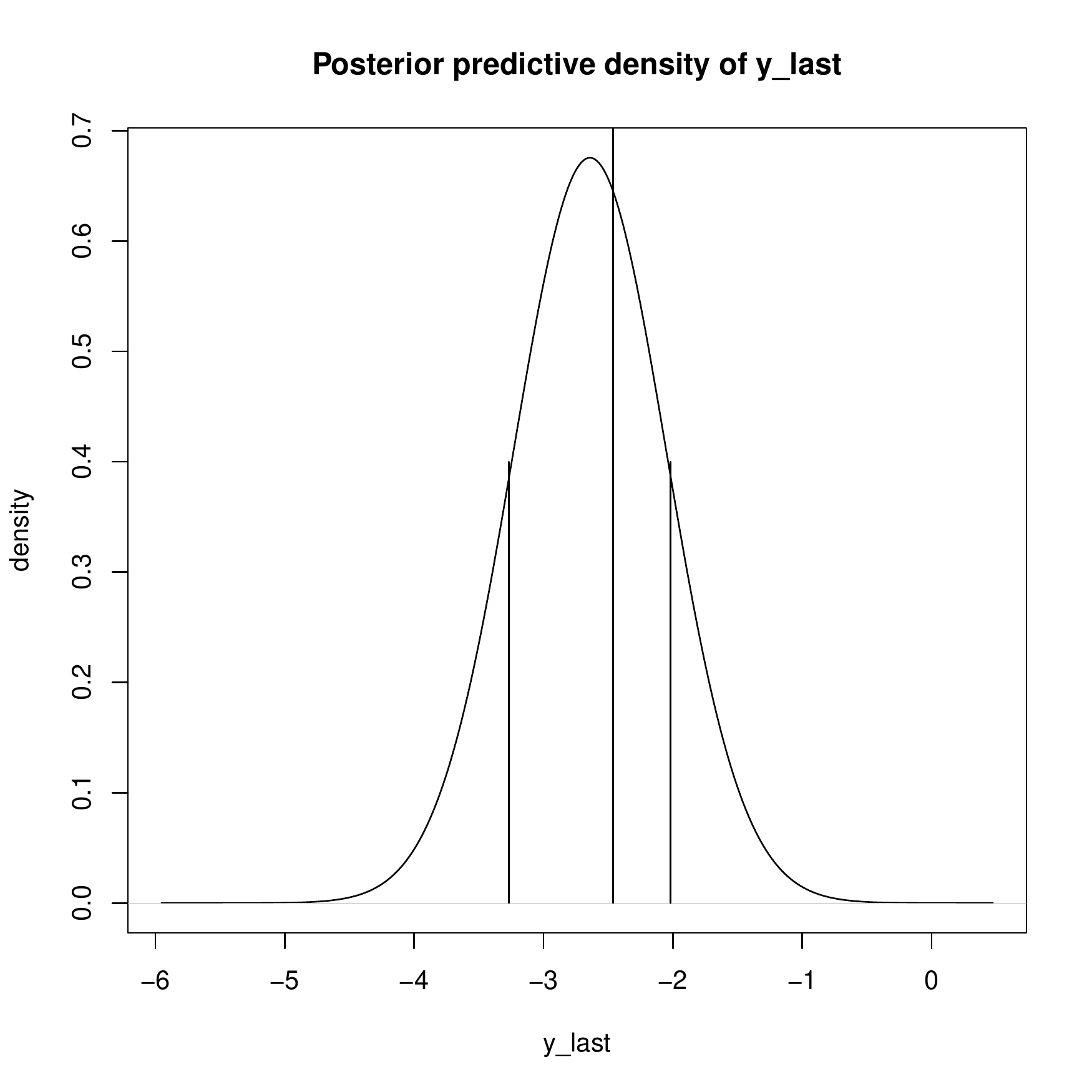}
\includegraphics[height=3.5in,width=3.5in]{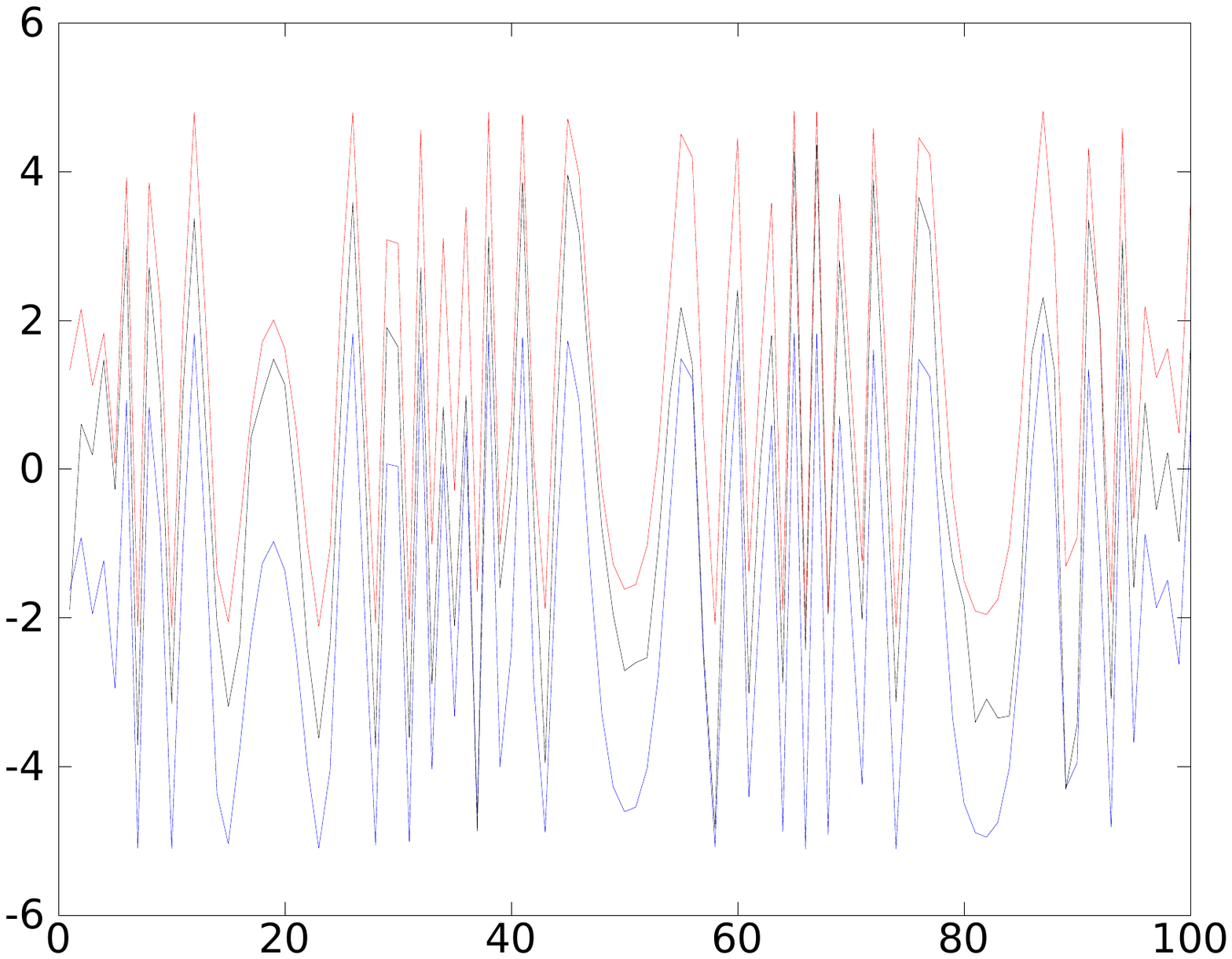}
\caption{Posterior predictive density of 101th observation, long vertical line indicates the true value and the shorter ones denote 95\% credible intervals, and 95\% credible intervals for observed signal (dependent error) with black line showing the observed signal, blue line indicating lower 2.5\% and red line indicating upper 97.5\% signals obtained based on MCMC simulations.}
\label{Fig:posterior predictive and fit for dependent error}
\end{figure}

\end{document}